\theoremstyle{definition}
\newtheorem{theorem}{Theorem}
\newtheorem{observation}{Observation}
\newtheorem{corollary}{Corollary}
\newtheorem{lemma}{Lemma}
\newtheorem{proposition}{Proposition}
\newtheorem{definition}{Definition}
\newtheorem*{remark}{Remark}
\newtheorem{problem}{Problem}
\DeclareMathOperator*{\argmax}{arg\,max} 
\DeclareMathOperator*{\argmin}{arg\,min}
\DeclareMathOperator{\E}{\mathbb{E}}
\renewcommand{\vec}[1]{\mathbf{#1}}
\newcommand{\points}{\mathcal{P}}
\newcommand{\pfinal}{\mathcal{P}^\text{final}}
\newcommand{\agents}{\mathcal{X}}
\newcommand{\q}{\mathcal{Q}}
\newcommand{\graph}{\mathcal{G}}
\newcommand{\f}{f}
\newcommand{\g}{g}
\newcommand{\opt}{\textsc{OPT}}
\newcommand{\eps}{\varepsilon}
\newcommand{\comment}[1]{}
\newenvironment{proofsketch}{%
  \proof}{\endproof}
\newcommand{\init}[1]{{\mathbf{#1}}^{\text{init}}}
\newcommand{\true}[1]{{\mathbf{#1}}^{\text{true}}}
\newcommand{\perc}[1]{{\mathbf{#1}}^{\text{perc}}}
\newcommand{\initij}[1]{{\mathbf{#1}}^{\text{init}}_i[j]}
\newcommand{\percij}[1]{{\mathbf{#1}}^{\text{perc}}_i[j]}
\newcommand{\vecj}[1]{{\mathbf{#1}}[j]}
\newcommand{\veck}[1]{{\mathbf{#1}}[k]}
\newcommand{\vecone}[1]{{\mathbf{#1}}[1]}
\newcommand{\vectwo}[1]{{\mathbf{#1}}[2]}
\newcommand{\vecstarj}[1]{{\mathbf{#1}}^{*}[j]}
\newcommand{\vecstark}[1]{{\mathbf{#1}}^{*}[k]}
\begin{document}

\title{
On classification of strategic agents who can both game and improve}
\author{Saba Ahmadi\thanks{Toyota Technological Institute at Chicago. Email: \texttt{saba@ttic.edu}. 
Author was supported by the Simons Foundation under the Simons Collaboration on the Theory of Algorithmic Fairness, and the National Science Foundation grant CCF-1733556. Part of the work was done when the author was visiting Northwestern University.
} \and%
Hedyeh Beyhaghi\thanks{Carnegie Mellon University. Email: \texttt{hedyeh@cmu.edu}. This work was done while the author was a Postdoctoral Researcher at Toyota Technological Institute at Chicago.} \and%
Avrim Blum\thanks{Toyota Technological Institute at Chicago. Email: \texttt{avrim@ttic.edu}. This work was supported in part by the National Science Foundation under grants CCF-1815011 and CCF-1733556, and the Simons Foundation under the Simons Collaboration on the Theory of Algorithmic Fairness.} \and%
Keziah Naggita\thanks{Toyota Technological Institute at Chicago. Email: \texttt{knaggita@ttic.edu}. This work was supported in part by the National Science Foundation under grant CCF-1815011, and the Simons Foundation under the Simons Collaboration on the Theory of Algorithmic Fairness.}}
\date{}
\maketitle

\begin{abstract}
In this work, we consider classification of agents who can both game and improve.  For example, people wishing to get a loan may be able to take some actions that increase their perceived credit-worthiness and others that also increase their true credit-worthiness.  A decision-maker would like to define a classification rule with few false-positives (does not give out many bad loans) while yielding many true positives (giving out many good loans), which includes encouraging agents to improve to become true positives if possible.  We consider two models for this problem, a general discrete model and a linear model, and prove algorithmic, learning, and hardness results for each.

For the general discrete model, we give an efficient algorithm for the problem of maximizing the number of true positives subject to no false positives, and show how to extend this to a partial-information learning setting.  We also show hardness for the problem of maximizing the number of true positives subject to a nonzero bound on the number of false positives, and that this hardness holds even for a finite-point version of our linear model. We also show that maximizing the number of true positives subject to no false positive is NP-hard in our full linear model.  We additionally provide an algorithm that determines whether there exists a linear classifier that classifies all agents accurately and causes all improvable agents to become qualified, and give additional results for low-dimensional data.

\end{abstract}

\section{Introduction}\label{sec:intro}

Consider a bank offering loans.  Based on observable information about applicants, it must decide which of them are loan-worthy and which are not.  For example, it might compute a credit score based on some (perhaps linear) function of observable features and then compare the result to a cutoff value.  So far, this looks like a standard binary classification problem. However, there is an additional wrinkle: individuals have agency and may be able to modify their observable features somewhat if it will help them get approved for a loan.  This wrinkle brings both challenges and opportunities.  A challenge is that some of these actions may involve ``gaming'' the system: performing activities that do not affect their true loan-worthiness such as changing how they spend on different credit cards.  An opportunity is that other actions, such as taking a money-management course, may truly help them become more loan-worthy, increasing the number of good loans the bank can give out. How can the bank best set its loan criteria in such settings to maximize the number of loans given out subject to not giving loans to unqualified applicants?  

Or, consider a school that would like to prepare students for the workforce.  There are many different career paths a student might take, so the school would like to have multiple different criteria for graduation (multiple tracks or majors) such that satisfying any one of them will earn the student a diploma.  Imagine there is a limited set of options the school can choose from, and once the school chooses some subset of them as criteria, every student selects the easiest of those criteria to fulfill (or none, if all are too hard) and then may or may not become truly qualified for the workforce, depending perhaps on the extent to which satisfying that criterion involved gaming versus true improvement.  How can the school best select criteria to maximize the number of students who become truly qualified for the workforce while minimizing the number of diplomas given to unqualified students? 

In this work we consider algorithmic and learning-theoretic formulations of such scenarios, where a binary classification must be made in the presence of both gaming and improvement actions with a goal of maximizing true-positive predictions while keeping false-positives to a minimum.  Specifically, we consider the following two formulations (given in more detail in \Cref{sec:model}). 

\begin{description}
\item[General Discrete Model:] In this formulation, we are given a weighted, colored bipartite graph with $n$ nodes on the left representing agents, and $m$ nodes on the right representing distinct possible ways agents could be considered {\em qualified} for the prize at hand (the loan, the diploma, etc.).  For example, the nodes on the right could represent different possible definitions of ``credit-worthy'' or could represent different bundles of activities sufficient to receive a diploma.  Each edge has both a {\em weight} representing the amount of effort the agent would need to achieve the given qualification and a {\em color} blue or red indicating whether the agent would indeed be truly qualified or not (respectively) if it did so.  The goal of the classifier is to select a subset $\pfinal$ of points on the right such that if each agent in the neighborhood of $\pfinal$  takes its least-cost edge into $\pfinal$, then a large number of blue edges and very few red edges are taken (many good loans and few bad loans are given out); more specific objectives will be detailed in \Cref{sec:general}.

In the learning-theoretic version of this problem, the left-hand-side of the graph is replaced with a probability distribution ${\cal D}$ over nodes (where a node is given by its neighborhood and the weights and colors of its edges).  We have sampling access to ${\cal D}$ and our goal is to find a subset $\pfinal$ of points on the right-hand-side with good performance under ${\cal D}$.  In a partial-information version, when we sample a point from ${\cal D}$ we do not get to observe its edges, only where the agent goes to and whether it was qualified.  That is, learning proceeds in rounds, where in each round we choose a subset $\points'$ of points on the right, and then for a random draw $x \sim {\cal D}$ we observe what point $p \in \points'$ (if any) was selected and the color of the edge taken.

\item[Linear Model:] In this formulation, we assume agents are points $\vec{x}\in\mathbb{R}^d$ (they have $d$ real-valued features) and there is a linear separator $f^*: \vec{a}^*\vec{x} \geq b^*$ with non-negative weights that separates the truly qualified individuals from the unqualified ones.  Agents have the ability to increase their $j$th feature at cost $\vecj{c}$ (decreasing is free) and receive value 1 for being classified as positive.  However, only some features correspond to true improvement and others involve just gaming. That is, if an agent begins at $\init{x}$ and moves to a point $\perc{x}$, their true qualification is not $f^*(\perc{x})$ but rather $f^*(\true{x})$, where $\true{x}$ agrees with $\init{x}$ in the gaming directions and with $\perc{x}$ in the improvement directions.  Movement costs and which features are improvement versus gaming  are assumed to be the same for all agents. The goal is to find a classifier that produces a large number of true positives and few false positives. Note that using $f^*$ itself will be optimal if the coordinate $j$ maximizing $\vecstarj{a}/\vecj{c}$ (having the most ``bang per buck'') is an improvement direction, so the interesting case is when this is a gaming direction.  Also note that shifting $f^*$ in this direction (adding  $\vecstarj{a}/\vecj{c}$ to $b^*$) will be a perfect classifier but may not be optimal because it does not take advantage of the ability to encourage agents to improve.   We consider settings where (a) the mechanism designer must use a linear classifier, (b) arbitrary classifiers are allowed, and (c) a polynomial-sized set $\points$ of ``target points'' is given and the mechanism designer must select some subset $\pfinal\subseteq \points$ as its classifier --- this is a special case of our General Discrete Model. 
\end{description}

In this work, we consider both models.  We give an efficient algorithm for the general discrete model for the problem of maximizing the number of blue edges taken subject to no red edges taken (maximizing the number of good loans given out subject to no bad loans) and show how to extend this to the partial-information learning setting.  We also show hardness for the problem of maximizing the number of blue edges subject to a nonzero bound on the number of red edges, and show that this hardness holds even for the simplest finite-point linear model. Furthermore, we show the problem of maximizing the number of true positives subject to no false positives is NP-hard in the linear model when we are not given a polynomial-sized set of target points.  We additionally give algorithms for the linear model.
We provide an algorithm that determines whether there exists a linear classifier which classifies all agents accurately and causes all improvable agents to become qualified. In the special two-dimensional case, we design a linear classifier maximizing the number of true positives minus false positives; and a general (not necessarily linear) classifier that maximizes true positives subject to no false positives.

\subsection{Related Work}

There is an exciting and growing literature on decision-making in the presence of strategic agents. Much of this work considers agents whose actions are only gaming and do not change their true label (see \cite{Hardt2016,revealed_preferences,Hu:2019:,Milli2018TheSC, Ahmadi2021TheSP,adversarial_games_pred,Frankel2019ImprovingIF,Braverman2020TheRO} among others)  but researchers have also been investigating mechanism design in the presence of agents who can both game and improve {\cite{Kleinberg2018HowDC, harris2021stateful, Alon2020MultiagentEM, xiao2020optimal, Miller2019StrategicCI, Haghtalab2020MaximizingWW, Bechavod2020CausalFD, Shavit2020LearningFS}}.  

\citet{Kleinberg2018HowDC} consider a single agent with a variety of gaming and improvement actions available, that are then converted into observable features through an effort-conversion matrix.  They then examine mechanisms for incentivizing desired action vectors, showing among other things that any vector that can be incentivized by a monotone mechanism can also be incentivized by a linear mechanism.  
\citet{harris2021stateful} consider a multi-round version of the \citet{Kleinberg2018HowDC} model in which true improvements carry over to future rounds whereas gaming effort do not; they show that in this model, the principal (the decision-maker) can incentivize the agent to produce a greater range of desirable behaviors. 

\citet{Alon2020MultiagentEM} consider a multi-agent extension of the \citet{Kleinberg2018HowDC} model, where agents all begin at the same place (the origin) but each have their own effort-conversion matrix. The goal of the designer is to choose an evaluation mechanism---mapping observable features to payoffs---that encourages all agents to take {\em admissible} actions, assuming that agents will maximize payoff subject to budget constraints.  They specifically consider the case (1) that there is a single admissible action vector, and (2) that individual actions are either improvement or gaming actions and no agent should take a gaming action. Among other results they show that unlike in \cite{Kleinberg2018HowDC}, nonlinear evaluation mechanisms can now be more powerful than linear ones; they also analyze the complexity of a variety of associated optimization problems.  We can think of our setting to some extent in this language by viewing any action that makes an agent truly qualified as ``admissible'' (and specifically the blue edges in our general discrete model).  However, two key distinctions are (1) in our setting we can only give the loan/diploma or not---we do not have the flexibility to choose arbitrary payoffs, and (2) we assume agents may begin at different starting locations (but have the same costs for movement in our linear model). 

\citet{xiao2020optimal} define a problem they call the {\em Multiple Agents Contract Problem} which is very similar to our General Discrete Model, except instead of binary (red/blue) colors, the edges have different values to the principal, and instead of producing a classification, the principal can assign an arbitrary payment profile to the right-hand-side nodes.  They prove that maximizing payoff to the principal is NP-hard, and give an algorithm for a case of related agents in which there is a certain strict ordering among agents and costs. 

\citet{Shavit2020LearningFS}, building on \citet{Miller2019StrategicCI}, consider the goal of getting agents to improve without loss of predictive accuracy.  As in our setting, they assume agents begin a different starting locations, and then modify their profiles from there, and they also consider a learning formulation.  However, their focus is on a regression model in which agents' payoffs are an inner product of their observable features with a decision vector; this means that the incentives are basically the same no matter what the initial location of an agent is.  In contrast, in our binary classification setting, even in the linear model the effect of a proposed classifier on an agent may depend greatly (and in a non-convex manner) on the initial location of the agent.  \citet{Bechavod2020CausalFD} also consider a linear regression learning setting: agents arrive one at a time iid from a fixed distribution and then modify their state by changing a single variable based on the current regression vector.  As in our linear model, some directions are improvement and some are gaming.  They consider a limited feedback setting where the learner sees only the dot-product of the agent's true position with the true regression function, plus noise, and the learner's goal is to recover the true regression function.

\citet{Haghtalab2020MaximizingWW} consider a similar setting to ours in which there are improvement and gaming actions, and the designer is limited to binary classification, where agents receive value 1 for being classified as positive. Among other results, they give approximation algorithms for the goal of maximizing the total amount of true improvement that occurs when the allowed mechanisms are linear separators and agents have $\ell_2$ movement costs.  In contrast, our goal is to maximize true positive classifications while minimizing false positives, and in the linear case our movement cost assumptions are {somewhat different}.

\paragraph*{Organization of the Paper.}
{\Cref{sec:model} introduces the general discrete model and linear model more formally.
In \Cref{sec:general}, we give an efficient algorithm for the problem of maximizing the number of true positives subject to no false positives in the general discrete model, and provide hardness results for the problem of maximizing the number of true positives subject to a nonzero bound on false positives (in either the general discrete model or the linear model when arbitrary classifiers are allowed) and hardness for the problem of maximizing the number of true positives subject to no false positives in the linear model when arbitrary classifiers are allowed. In \Cref{sec:learning}, we consider a learning-theoretic version of the problem of maximizing true positives subject to no false positives, and provide efficient learning algorithms as well as upper and lower bounds on the number of samples needed. In \Cref{sec:linear}, we focus on the linear model and provide algorithms specific to this setting. We provide an algorithm that determines whether there exists a linear classifier which classifies all agents accurately and causes all improvable agents to become qualified. In the special two-dimensional case, we design a linear classifier maximizing the number of true positives minus false positives; and a general (not necessarily linear) classifier that maximizes true positives subject to no false positives.}
\section{Model}\label{sec:model}

We study a binary classification problem. As the mechanism designer or classifier, we would like to maximize the number of agents we correctly classify as positive (true positives), and minimize the number of unqualified agents we misclassify as positive (false positives). 

Agents are assumed to be utility maximizers and wish to be classified as positive.  Each agent $i \in \{1, \ldots, n\}$ has a set of actions it can perform, and it will choose the cheapest of these that causes it to be classified as positive if that cost is less than its value on receiving a positive classification.  We use $\q$ to denote the set of truly qualified agents.  If an agent is initially not qualified (not in $\q$), some of its actions may cause it to become truly qualified, whereas others may not.  However, the classifier cannot see which action was taken, only the observable result of that action.  Therefore, the challenge of the mechanism designer is to determine which observable results to classify as positive to maximize  correct positive classifications while minimizing false positives.

\subsection{General Discrete Model}\label{sec:gen_model}
In this model, we assume that as a mechanism designer we are given a polynomial-sized set $\points$ of criteria we may select from (e.g., graduation criteria or criteria for being approved for a loan), and are limited to choosing some subset $\pfinal \subseteq \points$ as the criteria we will use.  We then will classify as positive any agent that meets any one of these criteria, {and as negative any agent who does not}.  Specifically, we are given a weighted, colored bipartite graph with the $n$ agents on the left and the set $\points$ of criteria on the right.  Edge $(i,j)$ corresponds to agent $i$ taking an action to satisfy criteria $j$ and is colored blue or red depending on whether that action would make the agent truly qualified or not, respectively.  Each edge also has a weight representing its cost to that agent, and only actions whose costs are less than the value to the agent of being classified as positive are shown.   Given a set  $\pfinal \subseteq \points$ chosen by the mechanism designer, each agent in the neighborhood of $\pfinal$ will choose its cheapest edge into $\pfinal$ as the action it will take, and will be classified as positive by the mechanism; agents not in the neighborhood of $\pfinal$ will be classified as negative.

We also consider a learning-theoretic version of this problem, where the left-hand-side of the graph is replaced with a probability distribution ${\cal D}$ over nodes.  We have sampling access to ${\cal D}$ and our goal is to find a subset $\pfinal$ of points on the right-hand-side with good performance under ${\cal D}$.  In a partial-information (bandit-style) version, when we sample a point from ${\cal D}$ we do not get to observe its edges, only where it goes to and whether it was qualified.  That is, learning proceeds in rounds, where in each round we choose a subset $\points'$ of points on the right, and then for a random draw $x \sim {\cal D}$ we observe what point $p \in \points'$ (if any) was selected and the color of the edge taken.

\begin{figure}[ht!]
\centering
\includegraphics[width=10cm]{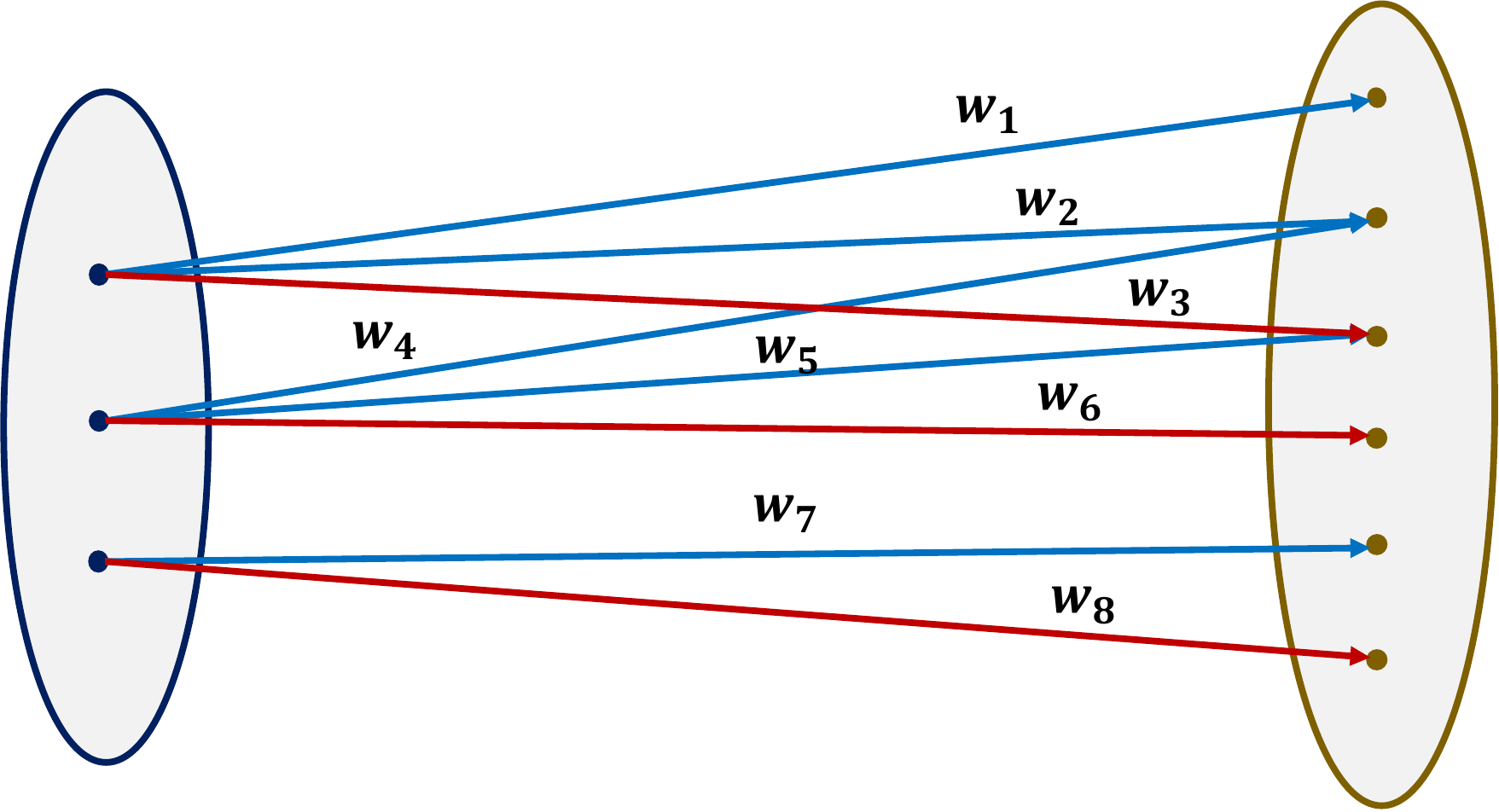}
\caption{\textit{Points on the left are the agents, and those on the right are the set $\points$ of possible criteria; $w_i$ is the cost of satisfying the criterion. A red edge means the agent taking that action would not truly be qualified. A blue edge means that the agent taking that action would be qualified.}}
\label{fig:matching}
\end{figure}

\subsection{Linear Model}\label{sec:linear_model}

In the linear model, agents have $d$ real-valued features.  Each agent $i$ begins at an initial point $\init{x}_i \in \mathbb{R}^d$, and there is assumed to be a linear threshold function $f^*: \vec{a}^*\vec{x} \geq b^*$ with non-negative weights that separates the truly qualified individuals from the unqualified ones.  Agents have the ability to increase their $j$th feature at cost $\vecj{c}$ (decreasing is free) and receive value 1 for being classified as positive.  However, only some features correspond to true improvement and others involve just gaming. That is, if an agent begins at $\init{x}$ and moves to a point $\perc{x}$, their true qualification is not $f^*(\perc{x})$ but rather $f^*(\true{x})$, where $\true{x}$ agrees with $\init{x}$ in the gaming directions and with $\perc{x}$ in the improvement directions.   On the other hand, the classification rule can only be based only on $\perc{x}$ and not $\true{x}$ (or $\init{x}$).  Movement costs and which features are improvement versus gaming  are assumed to be the same for all agents.  So, for any agent $i$, $cost(\init{x}_i,\perc{x}_i)=\sum_{j=1}^d \vecj{c} \left(\percij{x}-\initij{x}\right)^+$, where $x^+=\max\{x,0\}$ and $\vecj{c}$ is the cost per unit of movement in the positive direction of dimension $j$. 

We consider settings where (a) the mechanism designer must use a linear classifier (a linear threshold function), (b) arbitrary classifiers are allowed, and (c) a polynomial-sized set $\points$ of ``target points'' is given and the mechanism designer must select some subset $\pfinal\subseteq \points$ as its classifier.  Notice that this last case is a special case of the general discrete model because given each initial state $\init{x}_i$, we can compute the costs to move to each $p \in \points$ and whether doing so will make the agent truly qualified, to produce the desired weighted, colored bipartite graph. 

\begin{figure}[ht!]
\centering
\includegraphics[width=11cm]{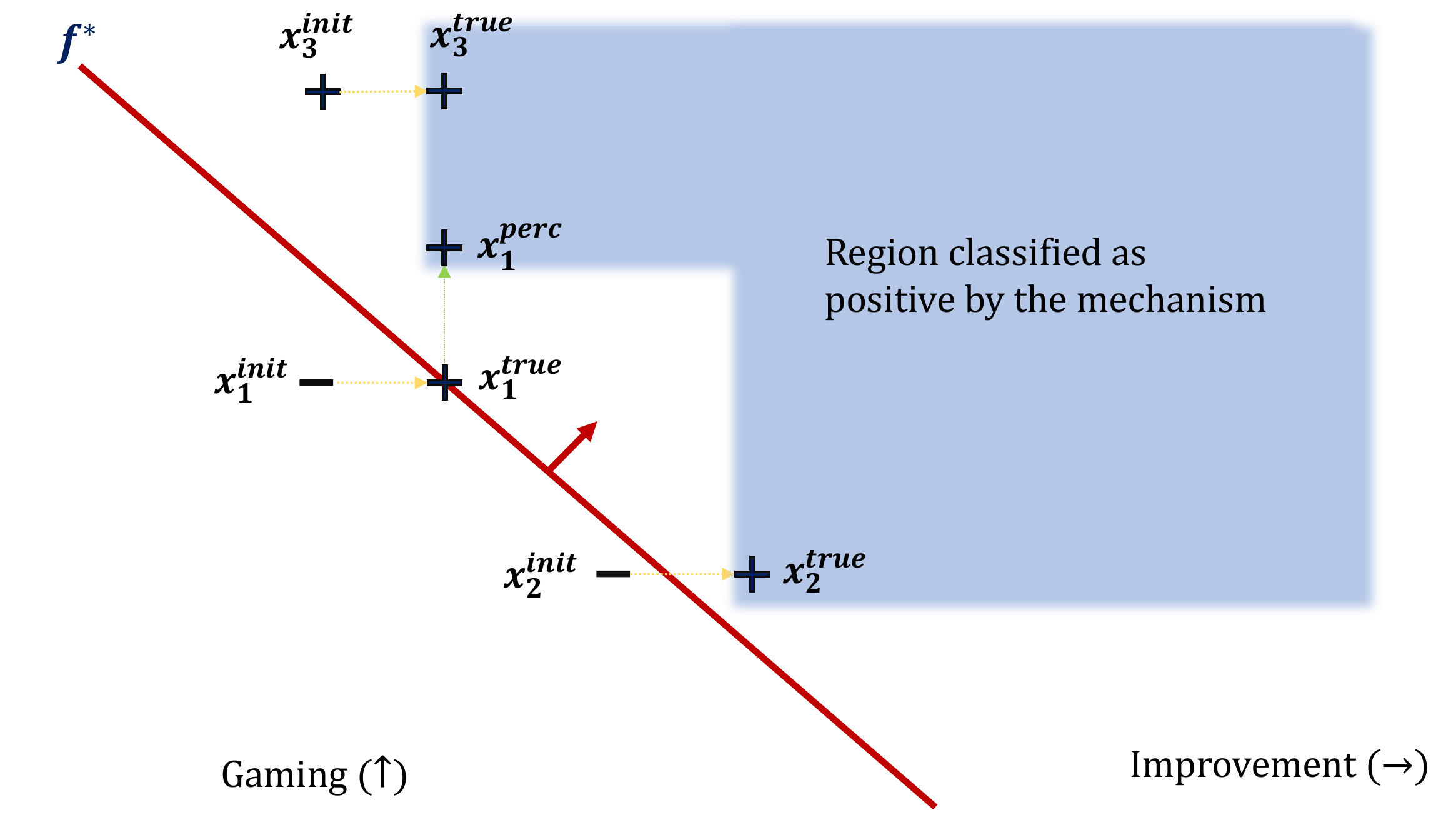}
\caption{\it An example of the linear model (the horizontal axis is an improvement direction and the vertical axis is a gaming direction) with a mechanism using a non-linear classifier.  There are three agents, two of whom are initially not qualified.  All three become qualified and are correctly classified as positive by the mechanism.}
\label{fig:matching}
\end{figure}

\section{Algorithmic and Hardness Results}
\label{sec:general}

In this section we first provide an algorithm for the problem of maximizing the number of true positives subject to no false positives in the general discrete model. Then, we provide hardness results for the problem of maximizing the number of true positives subject to a nonzero bound on false positives (in either the general discrete model or the linear model when arbitrary classifiers are allowed) and hardness for the problem of maximizing the number of true positives subject to no false positives in the linear model when arbitrary classifiers are allowed. Later in \Cref{sec:learning} we extend our algorithmic results to the learning model and in \Cref{sec:linear} we give algorithms for learning linear classifiers in the linear model.

\subsection{Maximize True Positives Subject to No False Positives}

The main result of this section is an algorithm that given a weighted, colored bipartite graph $\graph$ with agents, $\agents$, on the left and potential criteria, $\points$, on the right, finds $\pfinal \subseteq \points$ such that using $\pfinal$ as the criteria  maximizes the number of agents taking a blue edge (true positive) subject to no agent taking a red edge (false positive). We call the agents that take a blue edge \emph{improving agents} and the agents taking a red edge \emph{gaming agents}. The algorithm, although simple in structure, satisfies strong properties noted afterwards; and serves as the building block of the learning algorithms in \Cref{sec:learning}. Furthermore, as shown in the following subsection, natural generalizations of the objective function make the problem computationally hard. Therefore, the algorithm together with the hardness results tightly characterize the settings for which there is an efficient algorithm, or the problem is NP-hard.

\textbf{Overview of \Cref{alg:maximal}}. The algorithm takes in a weighted, colored bipartite graph $\graph = (\agents \cup \points,E)$ and outputs  $\pfinal$, a subset of $\points$ that specifies the final criteria. Initially, $\pfinal$ is set to $\points$. The algorithm proceeds in rounds. In each round, it visits all the nodes (agents) in $\agents$ to determine whether there is an agent who takes a red edge to its lowest cost neighbor $p \in \pfinal$. If there is such a gaming agent, its corresponding criteria, $p$, is removed from $\pfinal$. These rounds continue until there is no gaming agent and therefore no removal of criteria in a single round, or the current set of criteria is empty. 

\begin{algorithm}
    \SetNoFillComment
    \SetAlgoLined
    \SetKwInOut{Input}{Input}
    \SetKwInOut{Output}{Output}
    \SetKw{Return}{return}
     \DontPrintSemicolon
    \Input{A bipartite graph  $\graph = (\agents \cup \points,            E)$ with edge weights $w_{e}$. Outgoing edges assumed sorted by weight.
           Red edges  $E_{R} \subseteq E$.
           Blue edges $E_{B} \subseteq E$.
        }
        
    \Output{$\pfinal$ }
    $\pfinal \leftarrow \points$\ \tcp{Initialization of the set}

    \While{$\pfinal \neq \emptyset$}{
        $flag = 0$\;
        \tcc{Loop through all $x_i \in \agents$}
        \For{$i=1, 2, \cdots$}  { 
            Let $e = (x_i, p \in \pfinal)$ be the outgoing edge from $x_i$ with lowest weight\;
            \If{$e\in E_R$}{
                $flag = 1$\ \tcp{at least one agent is gaming}
                $\pfinal \leftarrow \pfinal \setminus \{p\}$\
            }
        }
        \If{flag is $0$}{
            \Return $\pfinal$\
        }
    }
    \Return $\emptyset$\ \tcp{When $0$ false positive is not possible}
    \caption{Maximize true positives subject to no false positives.}
    \label{alg:maximal}
\end{algorithm}

\proposition{\label{prop:running-time-greedy-alg}\Cref{alg:maximal} has running time of $O(|\points|n)$}. 
\proof Proof in~\Cref{app:hardness}.

\begin{theorem}
\label{proof:greedy-correctness}
\Cref{alg:maximal} finds the set of criteria, $\pfinal$, that maximizes {the number of true positives subject to no false positive.} \end{theorem}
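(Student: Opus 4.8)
The plan is to show that Algorithm~\ref{alg:maximal} outputs a feasible solution (no false positives) and that this solution is in fact optimal, i.e., it maximizes the number of true positives among all feasible $\pfinal \subseteq \points$. Feasibility is almost immediate: the algorithm only returns from inside the \textbf{while} loop when \texttt{flag} is $0$, meaning that in the last completed pass over all agents, no agent took a red edge into the current $\pfinal$; and if it falls through to return $\emptyset$, that is trivially feasible. So the substance is optimality.

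For optimality, the key structural fact I would isolate is a monotonicity property: \emph{whenever a criterion $p$ is removed from $\pfinal$ by the algorithm, no feasible solution can contain $p$}. More precisely, I would prove by induction on the sequence of removals that every criterion ever removed by the algorithm is absent from \emph{every} zero-false-positive solution. The base case and inductive step are the same argument: suppose at some point the current set is $\pfinal_t$, and $\pfinal_t$ already contains every feasible solution (true at the start since $\pfinal_0 = \points$); if the algorithm finds an agent $x_i$ whose cheapest edge into $\pfinal_t$ is red, say into $p$, then I claim $p$ is in no feasible solution. Indeed, let $S$ be any feasible solution; by the inductive hypothesis $S \subseteq \pfinal_t$, so $x_i$'s cheapest edge into $S$ costs at least as much as its cheapest edge into $\pfinal_t$. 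If $p \in S$, then since $p$ is $x_i$'s cheapest neighbor in $\pfinal_t \supseteq S$, it is also $x_i$'s cheapest neighbor in $S$, so under $S$ the agent $x_i$ takes the red edge into $p$ (assuming ties are broken consistently, or that among cheapest edges a red one is taken when present --- I'd note the tie-breaking convention matches the algorithm's \texttt{if $e \in E_R$} check), contradicting feasibility of $S$. Hence $p \notin S$, and removing $p$ preserves the invariant ``$\pfinal_{t+1} \supseteq S$ for every feasible $S$.''

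Given this invariant, the conclusion follows quickly. Let $\pfinal^{\mathrm{alg}}$ be the algorithm's output. If it is nonempty, it was returned with \texttt{flag} $=0$, so it is feasible, and by the invariant $\pfinal^{\mathrm{alg}} \supseteq S$ for every feasible $S$: the algorithm's solution is a \emph{superset} of every feasible solution, hence also a superset of an optimal one $S^*$. It remains to argue that enlarging a feasible solution (within the set of criteria the algorithm keeps) only helps the objective. This needs care: adding criteria to $S^*$ could in principle cause some agent currently taking a blue edge to switch to a cheaper edge in $\pfinal^{\mathrm{alg}} \setminus S^*$, and we must check that edge is blue. But any such edge lies in $\pfinal^{\mathrm{alg}}$, which is feasible, so \emph{every} agent's cheapest edge into $\pfinal^{\mathrm{alg}}$ is blue (or the agent is unmatched) --- in particular no agent is worse off, and agents not previously matched may become matched via a blue edge. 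Therefore the number of true positives under $\pfinal^{\mathrm{alg}}$ is at least that under $S^*$, so $\pfinal^{\mathrm{alg}}$ is optimal. Finally, if the algorithm returns $\emptyset$, the invariant shows every feasible solution was eventually emptied, so $\emptyset$ is the only feasible solution and is trivially optimal.

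The main obstacle I anticipate is handling the tie-breaking and the ``monotonicity of the objective under enlargement'' step rigorously. The first requires fixing a convention (e.g., if an agent's minimum-cost edge set into a candidate set contains a red edge, the agent is counted as gaming) consistent between the algorithm and the optimality argument; the excerpt's \texttt{if $e \in E_R$} branch suggests exactly this, so I would state it explicitly as part of the model. The second step --- that a feasible superset of a feasible set has at least as many true positives --- is intuitively clear but deserves a clean one-line argument via the observation that feasibility of $\pfinal^{\mathrm{alg}}$ means \emph{all} taken edges are blue, so switching any agent to a cheaper edge within $\pfinal^{\mathrm{alg}}$ keeps it blue and newly-matched agents contribute additional blue edges.
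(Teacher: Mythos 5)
Your proposal is correct and follows essentially the same route as the paper's proof: your invariant that every zero-false-positive solution survives each removal is the paper's argument (phrased there as a contradiction at the first deletion of a point outside $\pfinal$), and your ``enlarging within the feasible output only helps'' step matches the paper's observation that any feasible set is a subset of $\pfinal$ and subsets can only lose improving agents. Your explicit attention to tie-breaking and to why no agent switches to a red edge under the larger set is a welcome bit of extra care, but not a departure from the paper's approach.
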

\begin{proof}
Let A be the improving agents (agents taking blue edges) associated with the set of criteria $\pfinal$. We show that having any other set $Q \subseteq \points$ as the criteria, either causes an agent to take a red edge, or no more than $|A|$ agents to take blue edges. To do so, consider partitioning $Q$ into two subsets $Q^F$ and $Q^{\bar{F}}$, where $Q^F \subseteq \pfinal$ and $Q^{\bar{F}} \subseteq \points \setminus \pfinal$.

First, we show that if $Q^{\bar{F}} \neq \emptyset$, an agent takes a red edge. To prove this claim, suppose by  contradiction that $Q^{\bar{F}}$ is nonempty and consider the first time the algorithm deletes an element $p \in Q^{\bar{F}}$. At this stage, the set of criteria in the algorithm $\points'$ is a superset of $Q^{\bar{F}} \cup \pfinal$. By definition, $p$ is the lowest-weight neighbor of a gaming agent, $a$, in $\points'$. This implies that $p$ is also the lowest-weight neighbor of $a$ in $Q \subseteq Q^{\bar{F}} \cup \pfinal \subseteq \points'$, and $a$ is a gaming agent given the criteria set $Q$. This implies the claim. 

Secondly, we show that among the sets of criteria with no gaming agent, $\pfinal$ has the highest number of improving agents. The previous claim implies that any set of criteria with no gaming agent is a subset of $\pfinal$. Now, we need to show that among $Q \subseteq \pfinal$, $\pfinal$ has the largest set of improving agents. This is trivial, since by considering a subset we may only lose on agents in $A$ that do not have a neighbor in $Q$ or their lowest-weight edge is red. Therefore, any $Q \subseteq \pfinal$  has at most $|A|$ improving agents.
\end{proof}

\Cref{alg:maximal} satisfies the following strong properties.
\begin{enumerate}[label=(\alph*)]
    \item \label{pr:point-wise-optimal} {\em point-wise optimality}:
    For any agent $i$, if there exists a solution in which $i$ takes a blue edge and no agent takes a red edge, then the algorithm finds such a solution. 
    \item {\em general for weighted setting}: The algorithm works optimally in the more general setting that each agent has a weight and the objective is to maximize the sum of weights of improving agents subject to the constraint of no gaming agent. This is a direct implication of property \ref{pr:point-wise-optimal}.
    \item {\em max-min fairness:} Suppose the agents are from different populations and the objective is to maximize the minimum number of agents improving from each population subject to no gaming. By property \ref{pr:point-wise-optimal}, the algorithm satisfies this max-min fairness notion.
    \item {\em heterogeneous utilities}: The algorithm works optimally in the more general setting that agents have different values for being classified positive.
    \item {\em minimizing the total cost of improvement}: Since the algorithm only removes $p \in \points$ that causes an agent to game, with $\pfinal$ each agent incurs the minimal cost subject to no agent gaming.
\end{enumerate}

\begin{remark} 
The sets of criteria satisfying the no false positive constraint is not downward closed. In other words, a subset of a set of criteria that satisfies the no false positives property does not necessarily satisfy this property.
\end{remark}

\subsection{Hardness Results}

In this part, we prove hardness results for maximizing the number of true positives when the constraints in the previous subsection are relaxed. First, we show that if we relax the no false positives constraint to a bounded number of false positives, the problem becomes NP-hard; moreover, this holds even for the simpler linear model. Then, for the linear model, we show if we are not given a finite set of potential criteria $\points$, it is NP-hard to find criteria that maximize true positives subject to no false positives.

\begin{theorem}\label{thm:atmost_k_game}
Given the initial feature vectors of agents $\init{x}_1, \init{x}_2, \ldots, \init{x}_n \in \mathbb{R}^d$ and a set $\points$ of potential criteria, the problem of finding a subset $\pfinal \subseteq \points$ that maximizes the number of true positives subject to at most $k$ false positives is NP-hard. 
\end{theorem}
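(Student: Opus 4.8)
The plan is to reduce from the decision version of \textsc{Maximum Coverage} (equivalently \textsc{Set Cover}): given a universe $U$, sets $S_1,\dots,S_m\subseteq U$, and integers $k$ and $L$, decide whether there is a subfamily of at most $k$ sets covering at least $L$ elements. This problem is NP-complete, and its shape --- maximize a coverage count subject to a budget on how many sets may be used --- is exactly the shape of our objective (maximize true positives subject to a budget of $k$ false positives). It is also consistent with the polynomial-time algorithm of \Cref{proof:greedy-correctness}: in the instance we construct, a budget of $k=0$ forces $\pfinal=\emptyset$, so the difficulty genuinely requires $k\ge 1$.

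I would first set up the combinatorial gadget, which will be an instance of the finite-target-point linear model (hence of the General Discrete Model). Introduce one target point $p_j\in\points$ per set $S_j$; one \emph{element agent} $a_u$ per element $u\in U$, whose edges will run to exactly the targets $\{p_j:u\in S_j\}$ and will all be blue; and one \emph{cost agent} $b_j$ per set, whose single edge runs to $p_j$ and will be red. Then for $\pfinal=\{p_j:j\in T\}$ the cost agents contribute exactly $|T|$ false positives ($b_j$ games iff $p_j\in\pfinal$), and the element agents contribute exactly $\bigl|\bigcup_{j\in T}S_j\bigr|$ true positives ($a_u$ improves iff some chosen set contains $u$). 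So there is a choice of $\pfinal$ with at most $k$ false positives and at least $L$ true positives if and only if the \textsc{Maximum Coverage} instance is a yes-instance.

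The substantive step is realizing this bipartite graph as a genuine linear-model instance. I would work in $\mathbb{R}^{m+1}$, using coordinates $1,\dots,m$ as ``key'' directions (one per set) and one extra ``label'' direction, all of them gaming directions. Place $p_j$ at the basis vector $\vec{e}_j$ of the key coordinates; start $a_u$ at the point whose label coordinate is $1$ and whose key coordinates are the indicator vector of $\{j:u\in S_j\}$; start $b_j$ at $\vec{e}_j$ (label coordinate $0$). Set the per-unit cost $\vecj{c}$ of every key direction to be slightly larger than the agents' value $1$ (the cost of the label direction is irrelevant, since no agent ever needs to increase it). Since decreasing a coordinate is free while increasing a key coordinate to meet a target that needs it costs more than $1$, an agent can afford to satisfy criterion $j$ exactly when its $j$-th key coordinate is already positive: $a_u$ reaches precisely the targets of the sets containing it, and $b_j$ reaches only $p_j$ --- the neighborhoods we wanted. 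Finally take $f^*$ to threshold at $b^*=1$ on the label coordinate alone (a non-negative weight vector). Because every direction is a gaming direction, each agent's true label equals $f^*$ of its initial point, which is $1$ for every element agent (their label coordinate is $1$) and $0$ for every cost agent (their label coordinate is $0$); hence every element-agent edge is blue and every cost-agent edge is red.

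The step I expect to be the real obstacle is exactly this color control. Unlike in the abstract discrete model, the color of an (agent, target) edge in the linear model is not a free parameter: it is forced by $f^*$ together with the geometry of the cheapest move that satisfies the criterion. In particular, because each target is a single point, the cheapest satisfying move is essentially the same move for every agent that reaches that target, so an element agent contained in a single set $S_j$ is, as far as $f^*$ can tell, indistinguishable from the cost agent $b_j$ --- one cannot simply declare one edge blue and the other red. The extra ``label coordinate'' above is what breaks this symmetry. (If one prefers not to add a coordinate, an alternative is to duplicate every set, which does not change the \textsc{Maximum Coverage} answer but forces every element into at least two sets, after which a plain cardinality threshold $\vec{a}^*=\vec{1}$, $b^*=\tfrac32$, already separates the true element-agent positions from the true cost-agent positions; here one also notes, for the reverse direction of the reduction, that selecting both copies of a set is never advantageous.) Beyond this, the remaining checks are routine: every used edge has cost $0<1$ and every non-edge has cost more than $1$, the whole construction is polynomial-time, and both directions of the equivalence go through by the counting above.
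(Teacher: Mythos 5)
Your reduction is correct and is essentially the paper's: the paper also reduces from Max-$k$-Cover using exactly this gadget (one target point per set, element agents with blue edges to the targets of the sets containing them, and one extra agent per set whose only edge to its target is red, so that each selected target costs exactly one false positive and maximizing true positives under a budget of $k$ false positives is maximum coverage with $k$ sets). The only difference is the geometric realization in the linear model --- the paper uses element-indexed improvement coordinates plus one gaming coordinate, so all agents start unqualified, element agents genuinely improve and set agents game, whereas you make every coordinate gaming and obtain the edge colors from initial qualification via the extra label coordinate; both are valid instances of the finite-target-point linear model, so your color-control device works just as well as the paper's.
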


\begin{proofsketch} The proof is done by a reduction from the Max-$k$-Cover problem with $n$ elements where the goal is to choose $k$ sets covering the most elements. For every element $e_i$ in the Max-$k$-Cover, we consider agent $i$, and for every set $S_j$ in the Max-$k$-Cover problem we consider agent $n+j$ and a target point $\vec{p}_j$. The coordinates of the initial points and the target points are set such that agent $i$ corresponding to element $e_i$ can only move to target point $\vec{p}_j$ such that $e_i \in S_j$ and become a true positive; moreover, agent $n+j$ corresponding to set $S_j$ can only move to target point $\vec{p}_j$ and become a false positive. On the one hand, since including each $\vec{p}_j$ in the final set of criteria, $\pfinal$, causes exactly one agent to be a false positive, $\pfinal$ must contain at most $k$ target points. On the other hand, to maximize the number of true positives a set of $k$ target points that the maximum number of agents can reach to it must be selected. This is equivalent to the Max-$k$-Cover solution.
A formal proof is included in \Cref{app:hardness}.
\end{proofsketch}

\begin{theorem}\label{thm:relaxed-no-destination-pts}
Suppose we are given a set of $n$ agents where $\init{x}_1, \init{x}_2, \ldots, \init{x}_n$ denote their initial feature vectors. Deciding whether there exists a set of target points $\pfinal\subseteq \mathbb{R}^d$ for which all the agents become true positives is NP-hard. 
\end{theorem}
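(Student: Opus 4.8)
The plan is to reduce from the NP-complete problem 3-SAT, encoding a CNF formula $\varphi$ on variables $x_1,\dots,x_m$ and clauses $C_1,\dots,C_k$ so that a target-point set $\pfinal\subseteq\mathbb{R}^d$ making \emph{every} agent a true positive exists if and only if $\varphi$ is satisfiable.

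Before describing the gadgets I would record the structural reason the problem can be hard at all. Since $\pfinal$ is unconstrained, if every agent could only ever reach target points that qualify it, we could simply place one private target point per agent and always succeed. Hardness must therefore come from the cheapest-destination rule together with the gaming/improvement split: a target point introduced to serve one agent may be reachable at cost $<1$ by another agent and yet leave that agent unqualified — because reaching it forces movement only in gaming directions, or forces a decrease of an improvement coordinate — so that second agent becomes a false positive rather than a true positive. Thus feasibility amounts to choosing $\pfinal$ so that, for every agent, its cheapest reachable point is a qualifying one. I would first prove a normalization lemma: by genericity of the costs $\vecj{c}$ and weights $\vec{a}^*$ used in the construction, an optimal $\pfinal$ may be taken to consist of points drawn from an explicit finite family of ``canonical'' target points, which turns the search for $\pfinal$ into a discrete selection problem.

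The construction uses a coordinate block per variable (a couple of improvement coordinates plus some gaming coordinates), places distinct gadgets in essentially orthogonal blocks, and glues them through a shared low-dimensional ``trigger'' block used only to realize the bad captures. For each variable $x_\ell$ I would create two well-separated regions $X_\ell$ and $\bar{X}_\ell$ of canonical target points, a \emph{variable agent} $v_\ell$ whose only qualifying reachable points lie in $X_\ell\cup\bar{X}_\ell$ (so $\pfinal$ must select at least one of the two regions), and a \emph{consistency agent} $\bar{v}_\ell$ that becomes a false positive whenever $\pfinal$ meets both $X_\ell$ and $\bar{X}_\ell$ (so $\pfinal$ may select at most one). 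Reading ``$\pfinal\cap X_\ell\neq\emptyset$'' as ``$x_\ell=\text{true}$'' then makes $\pfinal$ encode a truth assignment. For each clause $C_j=(\ell_1\vee\ell_2\vee\ell_3)$ I would add a \emph{clause agent} whose qualifying reachable points are exactly the canonical points of the three regions corresponding to $\ell_1,\ell_2,\ell_3$, so it becomes a true positive iff $\pfinal$ selected a region satisfying the clause. Costs and weights are tuned so that within each gadget the cheapest-destination tie-breaking strictly prefers a qualifying point to any reachable non-qualifying point, and so that an agent of one gadget is never cheaply captured by a target point of an unrelated gadget.

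For the forward direction, from a satisfying assignment I would let $\pfinal$ contain, for each variable, the canonical point of the chosen literal's region, together with one canonical point per clause gadget, and then verify that every variable, consistency, and clause agent reaches a qualifying point and that no agent is captured badly. For the reverse direction, given a feasible $\pfinal$ I would extract an assignment by recording which of $X_\ell,\bar{X}_\ell$ is met — exactly one, by the variable and consistency agents — and use feasibility of the clause agents to conclude every clause is satisfied. The main obstacle, and where I expect most of the effort to go, is that (unlike in \Cref{thm:atmost_k_game}) we cannot dictate the induced weighted colored bipartite graph directly: every ``edge,'' its cost, and all the gaming/improvement bookkeeping must be realized geometrically, and we must then prove that the many implicit edges between an agent and the canonical points of unrelated gadgets, as well as all cost ties, never create a cheaper non-qualifying destination. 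I would handle this via a perturbation/genericity argument together with aggressive separation and scaling of the per-gadget coordinate blocks, so that any cross-gadget move exceeds the cost budget $1$.
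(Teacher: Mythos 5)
Your reduction is built around a gadget you have not shown can exist, and there is a concrete reason to doubt it can exist as described. The crux of this theorem is that the designer may place \emph{arbitrary} points of $\mathbb{R}^d$ in $\pfinal$, so any agent you intend to ``trap'' can in principle be rescued by a private qualifying target point added just for it. Your consistency agent $\bar{v}_\ell$ is supposed to become a false positive exactly when $\pfinal$ meets both $X_\ell$ and $\bar{X}_\ell$. For that to happen, in the ``both'' case every qualifying point the designer could offer $\bar{v}_\ell$ must be strictly more expensive than its cheapest (non-qualifying) point in $X_\ell\cup\bar{X}_\ell$. But the cheapest qualifying option for $\bar{v}_\ell$ is determined by its own initial position (move along improvement coordinates until $\f^*$ is crossed) and is available to the designer in \emph{every} scenario; so if it is more expensive than the gaming option in the ``both'' case, it is equally available and $\bar{v}_\ell$ still prefers the selected non-qualifying region when only \emph{one} of $X_\ell,\bar{X}_\ell$ is chosen, breaking the ``only one selected $\Rightarrow$ fine'' half of the gadget. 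Conversely, if a qualifying rescue point cheaper than the gaming option exists, the designer can include it in the ``both'' case too, and the at-most-one constraint evaporates. Your writeup asserts the gadget's behavior but never resolves this tension, and it is precisely the difficulty that makes the theorem nontrivial; the cross-gadget capture and tie-breaking issues you flag are the easy part. (A secondary, fixable gap: the ``canonical points'' normalization is also only asserted; in the linear model the set of qualifying reachable points of an agent is a full-dimensional polyhedral region, not a finite family, so some structural argument is needed there as well.)

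For comparison, the paper's proof takes a different route that is engineered around exactly this rescue-point obstruction. It reduces from the (constant-factor inapproximability of the) hitting set / set cover problem rather than 3-SAT, and uses a \emph{gap} argument: if all agents can be made true positives then a hitting set of size at most $2k$ exists, while a hitting set of size $k$ suffices to make all agents true positives. The key structural trick is a budget-tight construction: every agent sits at weighted improvement-distance exactly $1$ (cost exactly the value of acceptance) from $\f^*$, so every qualifying target point yields utility exactly $0$ for every agent, and hence any reachable non-qualifying point with cost strictly below $1$ is strictly preferred and cannot be ``priced out'' by a rescue point. The interaction is then channeled through a single special agent whose target point's large coordinates encode the hitting set, so no per-variable consistency gadget is needed at all. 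If you want to salvage a 3-SAT-style reduction you would need an analogous mechanism forcing all qualifying options to exhaust the budget, and you should expect at best a gap-style statement unless you find additional machinery to make the correspondence exact.
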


\begin{proofsketch}

The proof is done by a reduction from the approximate version of the hitting set problem where given a set  of elements, $\mathcal{E}=\{e_1, \ldots, e_n\}$ and  a family of sets of elements, $\mathcal{F}=\{S_1, S_2, \ldots, S_m\}$, the goal is to find a minimum size set $S^*$ that intersects all $S_i$. We construct an $n+1$-dimensional space, where the first $n$ dimensions are improvement dimensions and correspond to the $n$ elements, and the last dimension is gaming. We consider two sets of agents. For each $S_i$, we consider a corresponding agent $i$; these are the \emph{usual} agents. We also consider agent $m+1$, a \emph{special} agent that does not correspond to any particular set. The construction is such that each agent needs to move $2k$ units along the improvement dimensions to become truly qualified. Further details of the construction can be found in the full proof.
The proof includes two directions. (1) If all the agents can become true positives by reaching to a set of target points $\pfinal\subseteq \mathbb{R}^d$, then we can construct a hitting set of size at most $2k$; and (2) if it is not possible, then there does not exist a hitting set of size $k$.

We briefly cover the key ideas in each direction. To show the first direction, suppose all the agents can become true positives when presented with target points $\pfinal\subseteq \mathbb{R}^d$. Consider the target point that each agent selects. Using our construction, we show the special agent does not afford to reach to the target points of the usual agents. Also, for each usual agent $i$, there exists element $e_j$ in their corresponding set such that the target point of the special agent has value more than $1$ in coordinate $j$. In order for the special agent to afford to reach to its target point, the number of improvement coordinates with value at least $1$ must be at most $2k$. The elements corresponding to these coordinates constitute a hitting set of size at most $2k$.
To prove the reverse direction we argue: if there exists a hitting set $S^*$ of size $k$, there is a set of target points that encourages all the agents to become true positives. To do so, we construct a set of target points $\pfinal = \{\vec{p}_1, \ldots, \vec{p}_{m+1}\}$, using the elements in the hitting set, that when the size of the hitting set is $k$ makes every agent become true positive. A formal proof is included in \Cref{app:hardness}.
\end{proofsketch}

The following is a direct corollary of \Cref{thm:relaxed-no-destination-pts}.

\begin{corollary}
Given the initial feature vectors of agents, $\init{x}_1, \init{x}_2, \ldots, \init{x}_n \in \mathbb{R}^d$, finding a set of target points $\pfinal \subseteq \mathbb{R}^d$ that maximizes the number of true positives subject to no false positives is NP-hard.
\label{cor:d-dim-max-TP-no-FP}
\end{corollary}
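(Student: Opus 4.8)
The plan is to derive the corollary directly from \Cref{thm:relaxed-no-destination-pts} by a (Turing) reduction from the decision problem to the optimization problem. Suppose, toward a contradiction, that there were a polynomial-time algorithm $\mathcal{A}$ that, given initial feature vectors $\init{x}_1, \ldots, \init{x}_n \in \mathbb{R}^d$ (together with the movement-cost vector and the improvement/gaming partition), returns a set of target points $\pfinal \subseteq \mathbb{R}^d$ maximizing the number of true positives subject to no false positives. I would run $\mathcal{A}$ on precisely the instance produced by the reduction in the proof of \Cref{thm:relaxed-no-destination-pts}, and then count how many of the $n$ agents end up as true positives under the returned $\pfinal$.

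The one observation that makes this work is the equivalence: the optimal value of ``maximize true positives subject to no false positives'' equals $n$ if and only if there exists a set of target points for which all $n$ agents become true positives. The forward direction is immediate, since a feasible solution attaining $n$ true positives necessarily turns every one of the $n$ agents into a correct positive. For the reverse direction, observe that any $\pfinal$ that makes all $n$ agents true positives has zero false positives automatically (there is no unqualified agent left to be misclassified), hence it is feasible for the optimization problem and attains the largest conceivable value, $n$. So the optimal value is exactly $n$ on the ``yes'' instances of the decision problem of \Cref{thm:relaxed-no-destination-pts}, and strictly less than $n$ otherwise.

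Consequently, I would answer ``yes'' to the decision problem iff $\mathcal{A}$'s output has exactly $n$ true positives; by the equivalence above this is correct, and it runs in polynomial time, contradicting \Cref{thm:relaxed-no-destination-pts} unless $\mathrm{P} = \mathrm{NP}$. There is no real obstacle here beyond stating the ``all true positives $\Rightarrow$ zero false positives'' observation explicitly — all the hardness content lives in \Cref{thm:relaxed-no-destination-pts}. The only minor point worth noting is that the feasible region of the optimization problem is always nonempty (e.g.\ $\pfinal = \emptyset$ yields zero true positives and zero false positives), so $\mathcal{A}$ is always well defined and its returned objective value is a genuine maximum that can be compared against $n$.
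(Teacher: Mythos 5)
Your proposal is correct and matches the paper's intent: the paper gives no separate proof, simply stating the result is a direct corollary of \Cref{thm:relaxed-no-destination-pts}, and your argument (optimal value equals $n$ iff all agents can be made true positives, since a solution with all true positives has no false positives) is exactly the implicit derivation. Nothing further is needed.
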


\section{Learning Results}
\label{sec:learning}

In this section we consider a learning-theoretic version of our problem, where the left-hand-side of the graph is replaced with a probability distribution ${\cal D}$ over nodes.  We have sampling access to ${\cal D}$ and our goal is to find a subset $\pfinal$ of points on the right-hand-side with good performance under ${\cal D}$. 
We provide two different algorithmic results and upper bounds on the number of samples for producing a good solution, depending on the information each sample reveals. The first upper bound works for the case where by sampling an agent, its neighborhood (neighboring edges, their colors and weights) is revealed. The second upper bound works in a partial-information (bandit-style) setting, where when we sample a point from ${\cal D}$ we do not get to observe its edges, only where it goes to and whether it was qualified. Finally, we provide a lower bound on the necessary number of samples for any algorithm. The lower bound holds even for the simpler linear model. On the technical side, the algorithms in \Cref{sec:learning_full,sec:learning_partial} use \Cref{alg:maximal} as a subroutine and generalize it to a broader setting. The lower bound in \Cref{sec:learning_lower}, however, holds for \emph{any} PAC learning algorithm, and requires substantially different ideas.

The following definition is crucial in this section.
\begin{definition}[$\opt$, performance, and error]
Let $\opt$ be the maximum probability mass of true positives achievable subject to zero false positives. We denote the probability mass of true positives of an algorithm as its \emph{performance} and the probability mass of false positives as its \emph{error}. A hypothesis is desired if it has comparable performance to $\opt$ and small error.
\end{definition}

\subsection{Sufficient Number of Samples in the Full Information Setting}\label{sec:learning_full}

The main result of this section is that a number of samples linear in $|\points|$ and $1/\eps$ is sufficient for \Cref{alg:maximal} to learn a desired hypothesis with high probability.
Specifically, suppose the learner has access to a weighted, colored bipartite graph $\graph = (\agents \cup \points,E)$, where $\agents$ are sampled from ${\cal D}$, and $\points$ is the set of the potential criteria. The learner runs \Cref{alg:maximal} with the graph as the input and uses the algorithm output, $\pfinal \subseteq \points$, as its hypothesis, 
i.e., after the training phase it classifies any agent with an edge to $\pfinal$ as positive and any other agent as negative.  We show that a linear number of samples is sufficient so that with high probability, the probability mass of true positives classified by $\pfinal$ is close to $\opt$ and the probability mass of false positives is small.

\begin{theorem}
\label{thm:distributional-upper-bound}
Consider $\pfinal$ as the outcome of \Cref{alg:maximal} on $\graph = (\agents \cup \points,E)$, where $\agents$ contains samples from ${\cal D}$. For any $0<\eps, \delta \leq 1$, if $|\agents| \geq \eps^{-1} (\ln(2)|\points|+\ln(1/\delta))$, then with probability at least $1-\delta$ 
the set $\pfinal$ achieves performance at least 
$\opt-\eps$ (i.e., at least $\opt-\eps$ probability mass of true positives)
subject to at most $\eps$ error ($\eps$ probability mass of false positives).
\end{theorem}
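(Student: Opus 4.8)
The plan is to treat this as a realizable-style PAC argument: combine the point-wise optimality of \Cref{alg:maximal} with a single union bound over the $2^{|\points|}$ subsets of $\points$. The ``realizability'' flavor is exactly what yields a sample size linear in $1/\eps$ rather than the $1/\eps^2$ one would get from naive uniform convergence.

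\textbf{Setup.} For $S \subseteq \points$ let $\mathrm{game}(S)$ (resp.\ $\mathrm{tp}(S)$) denote the set of agents whose minimum-cost edge into $S$ is red (resp.\ blue). Since $\points$ is finite, $\opt$ is attained by some fixed $S^\star \subseteq \points$ with $\Pr_{\cal D}[\mathrm{tp}(S^\star)] = \opt$ and $\Pr_{\cal D}[\mathrm{game}(S^\star)] = 0$; consequently, with probability $1$ over the draw of $\agents$ no sampled agent lies in $\mathrm{game}(S^\star)$ --- call this probability-one event $B$. I would record two structural facts about the run of the algorithm. (i) The output $\pfinal$ never has a sampled gaming agent: this is exactly the algorithm's termination condition, that some full pass over $\agents$ triggers no removal (and it holds vacuously when $\pfinal=\emptyset$). (ii) On $B$, the set $S^\star$ witnesses, for every sampled $i \in \mathrm{tp}(S^\star)$, a solution in which $i$ takes a blue edge and no agent takes a red edge; so by the point-wise optimality of \Cref{alg:maximal} the output $\pfinal$ also has $i$ take a blue edge. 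Hence no sampled agent lies in $\mathrm{tp}(S^\star)\setminus\mathrm{tp}(\pfinal)$.

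\textbf{The union bound.} Call $S \subseteq \points$ \emph{bad} if $\Pr_{\cal D}[\mathrm{game}(S)] > \eps$ or $\Pr_{\cal D}[\mathrm{tp}(S^\star)\setminus\mathrm{tp}(S)] > \eps$, and \emph{consistent} (with $\agents$) if no sampled agent lies in $\mathrm{game}(S) \cup (\mathrm{tp}(S^\star)\setminus\mathrm{tp}(S))$. For a fixed bad $S$ that union has ${\cal D}$-mass $>\eps$, so it is missed by all $|\agents|$ independent samples with probability at most $(1-\eps)^{|\agents|} \le e^{-\eps|\agents|}$; summing over the at most $2^{|\points|}$ bad subsets and using $|\agents| \ge \eps^{-1}(\ln(2)|\points| + \ln(1/\delta))$ gives $\Pr[\exists\text{ bad consistent }S] \le 2^{|\points|} e^{-\eps|\agents|} \le \delta$. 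Let $A$ be the complementary event. On $A \cap B$ (which has probability $\ge 1-\delta$, since $\Pr[B]=1$): facts (i)--(ii) say $\pfinal$ is consistent, hence not bad, so $\Pr_{\cal D}[\mathrm{game}(\pfinal)] \le \eps$ (error at most $\eps$) and $\Pr_{\cal D}[\mathrm{tp}(S^\star)\setminus\mathrm{tp}(\pfinal)] \le \eps$, whence $\Pr_{\cal D}[\mathrm{tp}(\pfinal)] \ge \Pr_{\cal D}[\mathrm{tp}(S^\star)] - \eps = \opt-\eps$. This also covers the degenerate case $\pfinal=\emptyset$: then (ii) forces no sampled agent to be blue into $S^\star$, so $\emptyset$ is consistent and thus not bad, which forces $\opt \le \eps$ and the performance bound $0 \ge \opt-\eps$ holds trivially.

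\textbf{Main obstacle.} The one real subtlety is the $1/\eps$ (rather than $1/\eps^2$) dependence. It relies on noticing that the subsets whose true false-positive and true-positive behavior we must control are precisely those that are \emph{consistent} with the sample --- zero sampled false positives \emph{and} no lost sampled true positive of $S^\star$ --- so only the one-sided ``a large-mass region avoids all samples'' estimate is needed, contributing the $(1-\eps)^{|\agents|}$ factor. A secondary point is to fold the false-positive guarantee and the performance guarantee into a \emph{single} per-subset ``bad''/``consistent'' dichotomy, so that the union bound runs over $2^{|\points|}$ events and not twice that, matching the stated constant $\ln(2)$ in front of $|\points|$.
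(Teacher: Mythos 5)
Your proof is correct, and the error half is the same as the paper's: a union bound over the $2^{|\points|}$ subsets using the one-sided ``a region of mass $>\eps$ avoids all samples'' estimate, applied to the data-dependent $\pfinal$ via its zero sample error. Where you genuinely diverge is the performance half. The paper does not use any concentration there: it observes that \Cref{alg:maximal} only deletes points that cause a sampled false positive, and since the optimal zero-error set $S^\star$ causes no sampled false positives (almost surely), no point of $S^\star$ is ever deleted, so $\pfinal \supseteq S^\star$; hence the predicted-positive mass of $\pfinal$ is at least $\opt$ deterministically (given that probability-one event), and subtracting the $\le \eps$ false-positive mass yields $\opt-\eps$. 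You instead fold the performance guarantee into the union bound by enlarging each subset's ``bad'' event to include the region $\mathrm{tp}(S^\star)\setminus\mathrm{tp}(S)$, and you certify that $\pfinal$ misses no sampled member of $\mathrm{tp}(S^\star)$ via the point-wise optimality property (property (a) in the paper). Both routes give the stated sample complexity with the same $\ln(2)|\points|$ constant, since your bad/consistent dichotomy is still one event per subset. The paper's route is slightly leaner and proves a stronger intermediate fact (the superset relation, hence predicted-positive mass $\ge \opt$ with probability one), while yours is a more generic realizable-PAC packaging that only needs point-wise optimality rather than the monotone-deletion structure; note, though, that property (a) is asserted but not proved in the paper, so for full self-containment you would want the one-line derivation (any set with no sampled gaming agent is contained in $\pfinal$, and $\pfinal$ itself induces no sampled red edge, so any sampled agent with a blue witness in $S^\star$ takes a blue edge into $\pfinal$), which is essentially the paper's superset argument in disguise.
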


\begin{proof}
First, we prove the error bound. By \Cref{alg:maximal}, $\pfinal$ has error $0$ for $\agents$. Now, consider a set of criteria $\points' \subseteq \points$ with error at least $\eps$ for distribution $D$. The probability of the error being equal to $0$ over $\agents$ is at most $(1-\eps)^{|\agents|}$. The number of subsets of $\points$ is $2^{|\points|}$. So, by union bound the probability that there exists a set of criteria $\points'$ with error greater than $\eps$ over $D$ and error $0$ over $\agents$ is at most $2^{|\points|}(1-\eps)^{|\agents|}\leq 2^{|\points|}e^{-\eps|\agents|}$. This probability is at most $\delta$ for $|\agents| = \eps^{-1}\Big(\ln(2)|\points| + \ln(1/\delta)\Big)$. Therefore, with  $\eps^{-1}\Big(\ln(2)|\points| + \ln(1/\delta)\Big)$ number of samples, with probability at least $1-\delta$, $\pfinal$ has at most $\eps$ error.

Next, we show the performance guarantee. Recall that $\opt$ is the maximum achievable probability mass of true positives subject to no false positive for distribution $D$. Set $\agents$ contains a subset of the points in the distribution. \Cref{alg:maximal} only deletes points from $\points$ if they cause a false positive in $\agents$. Therefore, the output of the algorithm on agents $\agents$, $\pfinal$, is a superset of the criteria set of the optimal zero-error solution. 
This means that the probability mass of examples predicted positive is at least $\opt$, and since at most an $\eps$ probability mass is false-positives, the performance of $\pfinal$ is at least $\opt-\eps$.
\end{proof}

\subsection{Sufficient Number of Samples in the Partial Information Setting}\label{sec:learning_partial}

In this section we consider a partial information (bandit-style) setting. Similar to before, the learner has access to a sample set $\agents$ drawn from $D$ and a set of potential criteria $\points$. However, observing a sample in $\agents$ does not reveal its edges, and the learner can only observe the criterion that the sample selects and whether it becomes truly qualified. The main result of this section is an algorithm,
 \Cref{alg:upper-bound-no-input-graph}, for this setting and a guarantee on the number of samples sufficient for it to achieve performance at least $\opt-\eps$ and error at most $\eps$ with high probability.

\begin{algorithm}
    \SetNoFillComment
    \SetAlgoLined
    \SetKwInOut{Input}{Input}
    \SetKwInOut{Output}{Output}
    \SetKw{Continue}{continue}
    \SetKw{Return}{return}
    \Input{$\mathcal{P}$}
    \Output{$\mathcal{P}^{final}$}
    $\points^\text{final} \leftarrow \points$\;
    \While{$\points^\text{final}\neq \emptyset$}{
        Sample $\mathcal{X}\sim\mathcal{D}$ of size $\frac{1}{\eps}\ln {\frac{|\points|}{\delta}}$ \;
        \If{$\exists x \in \mathcal{X}$ such that $x$ takes a red edge to $p\in \points^\text{final}$}{
            $\points^\text{final} \leftarrow \points^\text{final} \setminus \{p\}$\;
            \Continue\;
        }
        \tcc{if no one from $\mathcal{X}$ takes a red edge:}
        \Return $\points^\text{final}$\;
    }
    \Return $\emptyset$\;
\caption{Learning a high performance low error $\pfinal$ in partial-information setting}
\label{alg:upper-bound-no-input-graph}
\end{algorithm}

\textbf{Overview of \Cref{alg:upper-bound-no-input-graph}}. In each iteration, a set of examples of size $\eps^{-1}\ln (|\points|/\delta)$ is sampled. After agents select points in $\points$ (if any), we observe the points selected and whether they became truly qualified (in a real-world application, one can think of performing a test to check if each agent is truly qualified). If some agent does not become truly qualified (fails the test), the algorithm deletes the point they have selected. If a set $\points^\text{final}$, survives for $\eps^{-1}\ln ({|\points|}/{\delta})$ subsequent examples, the algorithm terminates and returns $\pfinal$ as the the final set of criteria of the algorithm. Since the number of false positives (agents taking red edges) is bounded by $|\points|$, the algorithm will terminate after at most $\eps^{-1}|\points|\ln (|\points|/\delta)$ samples.

The following theorem proves that with a high probability, \Cref{alg:upper-bound-no-input-graph} outputs $\pfinal$ with a high performance and a low error.

\begin{theorem}
For any $0<\eps, \delta \leq 1$, \Cref{alg:upper-bound-no-input-graph} by using at most $\eps^{-1}|\points|\ln (|\points|/\delta)$ total samples outputs a set of criteria $\pfinal$ that with probability at least $1-\delta$ achieves performance at least $\opt-\eps$ (i.e., at least $\opt-\eps$ probability mass of true positives) subject to at most $\eps$ error ($\eps$ probability mass of false positives).
\end{theorem}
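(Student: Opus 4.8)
The plan is to mirror the full‑information argument behind \Cref{thm:distributional-upper-bound}, while accounting for the fact that the batches of \Cref{alg:upper-bound-no-input-graph} are now drawn adaptively across iterations. There are three pieces: a termination/sample‑count bound, an error bound, and a performance bound. For the sample count, observe that the while‑loop can only iterate by deleting a point from $\pfinal$, and once $\pfinal=\emptyset$ the loop exits; hence there are at most $|\points|$ iterations, each drawing a fresh batch of size $N:=\eps^{-1}\ln(|\points|/\delta)$, for a total of at most $\eps^{-1}|\points|\ln(|\points|/\delta)$ samples.

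For the error bound, fix an iteration $t$ and let $Q$ denote the (random) value of $\pfinal$ at the start of iteration $t$. Since $Q$ is a function only of the batches drawn in iterations $1,\dots,t-1$, it is independent of the fresh batch of iteration $t$. If $Q$ has error exceeding $\eps$ --- that is, the $\mathcal{D}$‑mass of agents whose lowest‑cost neighbor in $Q$ is reached by a red edge is more than $\eps$ --- then, conditioned on $\pfinal=Q$ at the start of iteration $t$, the probability that none of the $N$ fresh samples takes such a red edge (exactly the event that the algorithm halts at iteration $t$ and returns $Q$) is at most $(1-\eps)^N\le e^{-\eps N}=\delta/|\points|$. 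Thus for each $t$ and each such $Q$ the probability of ``$\pfinal=Q$ at start of $t$ and the algorithm returns $Q$'' is at most $\Pr[\pfinal=Q\text{ at start of }t]\cdot\delta/|\points|$; summing over $Q$ and over the at most $|\points|$ iterations bounds by $\delta$ the probability that the returned set has error greater than $\eps$, and returning $\emptyset$ has error $0$ and is harmless. I expect this to be the step needing the most care: union‑bounding naively over all $2^{|\points|}$ candidate subsets would force a batch of size $\Theta(\eps^{-1}|\points|)$, and keeping the batch at $\eps^{-1}\ln(|\points|/\delta)$ relies precisely on restricting the union to the $\le|\points|$ iterations together with the independence of $Q$ from the batch that tests it.

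For the performance bound, let $\points^{\opt}$ be an optimal criteria set, achieving true‑positive mass $\opt$ with zero false‑positive mass. As in \Cref{thm:distributional-upper-bound}, I claim the algorithm never deletes a point of $\points^{\opt}$: by induction on iterations, if $\points^{\opt}$ is still contained in the current set and the algorithm deletes some $p\in\points^{\opt}$ because a sampled agent $a$ reaches $p$ by a red edge as its lowest‑cost neighbor in the current set, then (as $\points^{\opt}$ is contained in the current set) $p$ is also $a$'s lowest‑cost neighbor within $\points^{\opt}$, so $a$ is a false positive for $\points^{\opt}$; this is impossible with probability $1$, since $\points^{\opt}$ has zero false‑positive mass and only finitely many agents are ever drawn. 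Hence $\pfinal\supseteq\points^{\opt}$ with probability $1$, so the $\mathcal{D}$‑mass of agents classified positive by $\pfinal$ is at least the mass classified positive by $\points^{\opt}$, which equals $\opt$ (true positives plus zero false positives). Subtracting the at‑most‑$\eps$ false‑positive mass --- valid with probability at least $1-\delta$ by the error bound --- leaves true‑positive mass at least $\opt-\eps$, giving the theorem.
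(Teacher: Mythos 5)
Your proposal is correct and follows essentially the same route as the paper: at most $|\points|$ iterations each with a fresh batch of size $\eps^{-1}\ln(|\points|/\delta)$, a union bound over the at most $|\points|$ sets tested (using that each batch is independent of the set it tests) for the $\eps$ error bound, and the observation that no point of the optimal zero-false-positive set is ever deleted, so the returned set is a superset of it and retains mass at least $\opt$, hence $\opt-\eps$ true positives. Your treatment of the adaptivity and of the measure-zero false-positive event is just a more explicit rendering of the paper's argument, not a different one.
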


\begin{proof}

First, we prove the error bound. Consider the sequence of $\pfinal$ at the beginning of each iteration of the while loop in \Cref{alg:upper-bound-no-input-graph}. Let these sets be $\pfinal_1, \pfinal_2, \ldots$. The probability that $\pfinal_i$ with error greater than $\eps$ over $D$ does not produce any false positives in the following $\eps^{-1}\ln (|\points|/\delta)$ samples is at most $(1-\eps)^{\eps^{-1}\ln (|\points|/\delta)}$. Note that the while loop in \Cref{alg:upper-bound-no-input-graph} runs at most $|\points|$ times. Therefore, the number of distinct sets of $\pfinal_i$ considered in the algorithm is at most $|\points|$. By a union bound, the probability that there exists $\pfinal_i$ with error greater than $\eps$ over $D$ and error $0$ over the samples in its iteration is at most $|\points|(1-\eps)^{\eps^{-1}\ln (|\points|/\delta)}$, which using $1-x \leq e^{-x}$ is at most $\delta$.

Proving the performance guarantee is identical to that of \Cref{thm:distributional-upper-bound}.
\end{proof}

\subsection{Necessary Number of Samples}\label{sec:learning_lower}

The main result of this section is a lower bound on the necessary number of samples for learning a desired hypothesis. The lower bound provided holds even for the simpler linear model. To restate the setup, suppose  the learner has access to a set of initial positions of agents $\agents$ and a set of potential criteria (also called target points in the linear model) $\points$ where $\agents$ are sampled from distribution ${\cal D}$. We lower bound the required number of samples for any learning algorithm that with probability at least $1/2$ achieves high performance and low error.

\begin{theorem} 
Any algorithm for PAC learning a set $\pfinal$ that with probability at least $1/2$ achieves performance at least $(3/4)\cdot \opt$ (i.e., at least $(3/4)\cdot \opt$ probability mass of true positives) subject to at most $\eps$ error ($\eps$ probability mass of false positives) must use $\Omega(|\points|/\eps)$ examples in the worst case.
\end{theorem}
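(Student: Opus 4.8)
The plan is to invoke Yao's minimax principle: it suffices to construct a distribution over hard instances---each instance being a distribution $\mathcal{D}$ over agents together with a fixed target set $\points$ of size $m$---and show that any \emph{deterministic} learner that draws fewer than $c\,m/\eps$ samples (for a small absolute constant $c$) outputs, with probability more than $1/2$ over the choice of instance and the sample draws, a set $\pfinal$ whose performance is below $(3/4)\opt$ or whose error exceeds $\eps$. I would realize these instances inside the linear model (so the bound holds there too): take $d=m+1$ with the last coordinate a gaming direction, let $\points=\{\vec{p}_1,\dots,\vec{p}_m\}$ with $\vec{p}_j$ placed far out along coordinate $j$, and give each $\vec{p}_j$ a dedicated group $G_j$ of agents of total probability mass $1/m$ whose initial positions $\init{x}$ make $\vec{p}_j$ their only affordable target. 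A uniformly random size-$m/2$ subset $S\subseteq[m]$ is declared \emph{dangerous}: if $j\notin S$ all of $G_j$ is blue (reaches true qualification at $\vec{p}_j$), while if $j\in S$ a $\rho m$-fraction of $G_j$ is red and the rest blue, where $\rho=\Theta(\eps/m)$ with a suitably large hidden constant; differing initial positions in the gaming coordinate are what let some agents of a dangerous group end up truly qualified at $\vec{p}_j$ and others not. Thus including a safe $\vec{p}_j$ yields $1/m$ mass of true positives and no false positives, and including a dangerous one yields $1/m-\rho$ true positives and exactly $\rho$ false positives; here $\opt=\tfrac12$, attained by $\pfinal=\{\vec{p}_j:j\notin S\}$.

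Next I would reduce the learning task to a detection task. A short accounting with the above contributions shows that any $\pfinal$ with performance $\ge(3/4)\opt$ and error $\le\eps$ must include at least $(3/4)(m/2)-\eps/\rho$ of the $m/2$ safe points while including at most $\eps/\rho=\Theta(m)$ of the dangerous ones; choosing the constant in $\rho$ so that $\eps/\rho$ is a sufficiently small constant fraction of $m$, this is feasible only if the learner has actually observed a red agent from at least $\kappa_0 m$ of the dangerous groups, for some absolute constant $\kappa_0>0$. The reason is an exchangeability argument: a dangerous group looks statistically identical to a safe group until one of its rare red agents is sampled, because all $m$ groups have the same total mass $1/m$ and, when $\eps$ is small, every group is sampled $\approx n/m\approx 1/\eps$ times, so the likelihood penalty $(1-\rho m)^{n/m}=e^{-\Theta(1)}$ for ``no red seen yet'' is the same, up to $1+o(1)$ factors, across all not-yet-flagged groups. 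Hence, conditioned on the transcript and the symmetric prior on $S$, the posterior over which unflagged points are dangerous is essentially uniform, so among unflagged points the learner can do no better than random selection; a hypergeometric concentration bound then shows that unless $\kappa_0 m$ dangerous groups have been flagged, every $\pfinal$ meeting the performance requirement blows the error budget with probability exceeding $1/2$. Making this indistinguishability step fully rigorous---in particular controlling the leakage coming from fluctuations in per-group sample counts---is the step I expect to be the main obstacle.

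Finally I would close with an occupancy/Markov estimate. The number of dangerous groups flagged after $n$ samples is at most $\sum_{j\in S}\mathbbm{1}[\text{some red agent of }G_j\text{ is sampled}]$, and since the red agents of a dangerous $G_j$ have total mass $\rho$, this sum has expectation at most $|S|\cdot\rho n=\Theta(\eps n)$. Taking $c$ small enough that $n\le c\,m/\eps$ forces this expectation below $\kappa_0 m/4$, so by Markov's inequality fewer than $\kappa_0 m$ dangerous groups are flagged with probability at least $3/4$; combined with the previous paragraph, the learner fails with probability more than $1/2$. Therefore any learner meeting the stated guarantee needs $n=\Omega(m/\eps)=\Omega(|\points|/\eps)$ samples. (A secondary, purely mechanical point to verify is that the instance family is faithfully realizable in the linear model with non-negative $\vec{a}^*$ and coordinatewise movement costs.)
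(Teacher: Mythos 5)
Your proposal is essentially the paper's argument: plant a random subset of ``dangerous'' target points, attach to each a $\Theta(\eps/|\points|)$ mass of red (gaming) agents, show that with $o(|\points|/\eps)$ samples only a small fraction of dangerous points are ever witnessed (Markov/Chernoff on the total red mass), argue by symmetry that the unwitnessed dangerous points are distributed uniformly among the unflagged ones, and conclude that the performance requirement of $(3/4)\opt$ forces the learner to include a constant fraction of them, blowing the $\eps$ error budget. The accounting with $\rho=\Theta(\eps/m)$, $|S|=m/2$, and the final Markov step all match the paper's constants in spirit. The one step you leave open---near-uniformity of the posterior despite the ``likelihood penalty'' from blue observations---is a genuine but avoidable complication created by your choice to keep every group's total mass equal to $1/m$ (so dangerous groups have blue mass $1/m-\rho$, which leaks). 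The paper sidesteps this entirely: it gives \emph{every} target point identical improving (blue) mass $(1-32\eps)/m$ and places the extra $128\eps/m$ mass only on the gaming example of each dangerous point, so blue samples carry exactly zero information about the planted set and the posterior is exactly uniform over concepts consistent with the observed gaming examples; no leakage estimate is needed. You could adopt the same mass assignment and your argument closes. A second minor difference is the realization in the linear model: you use $d=m+1$ with a dedicated group per target, while the paper gets the same effect in two dimensions with exactly two potential initial positions (one improving, one gaming) per target point, which also makes the ``each agent can afford only its designated point'' claim immediate.
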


For ease of notation, in the proof we use $m := |\points|$. 

\begin{proof}
We construct a concept class, a distribution of agents and a set of potential criteria (target points) that forces any PAC learning algorithm to take many samples. First, consider a concept class $C$ that includes solutions whose final set of target points $\pfinal$ has size exactly $3m/4$. Therefore,  $|C| = \binom{m}{3m/4}$. Target concept $c$ is chosen randomly from $C$. Secondly, we construct a linear setting. We consider a two-dimensional space where $\f^*: \vec{x}[1]+\vec{x}[2] \geq 2m$. Let $\points=\{\vec{p}_1, \ldots, \vec{p}_m\}$. All $\vec{p}_i$ satisfy $\vec{p}_i[1]+\vec{p}_i[2] = 2m$ and $\vec{p}_i[1] = 2i$. The costs of moving in either dimension is $1$ per unit of movement, i.e., $\vec{c}[1]=\vec{c}[2]=1$. Dimension $1$ is an improvement dimension and dimension $2$ is a gaming dimension. Let the set of examples $S=\{\vec{x}_1, \vec{x}_2, \ldots, \vec{x}_{2m}\}$ denote the distinct potential initial positions of any agents. We construct the examples in proximity of the target points such that each example can afford to move to exactly one target point, called its \emph{designated} target point. More formally, for $i \leq m$, let $\vec{x}_i[1] = \vec{p}_i[1]-1, \vec{x}_i[2] = \vec{p}_i[2]$ and $\vec{x}_{m+i}[1] = \vec{p}_i[1], \vec{x}_{m+i}[2] = \vec{p}_i[2]-1$. With this setup, examples $i$ and $m+i$ are in proximity of their designated target point $\vec{p}_i$. Examples $\vec{x}_i$ such that $i \leq m$ are \emph{improving examples} since any agent with initial position $\vec{x}_i$ becomes truly qualified by moving to their designated target points and examples $\vec{x}_i$ such that $i > m$ are \emph{gaming examples} since any agent with initial position $\vec{x}_i$ does not become truly qualified. Finally, we consider a distribution $D_c$ over the examples. Let $P_G$ be the target points not included in the concept $c$. For each $i$ such that $\vec{p}_i \in P_G$, $\Pr[\vec{x}_{m+i}] = 128\eps/m$;  for each $i$ such that $\vec{p}_i \notin P_G$, $\Pr[\vec{x}_{m+i}] = 0$; and for each $i$, $\Pr[\vec{x}_{i}] = (1-32\eps)/m$.  With this probability distribution, there is a $0$ probability mass over gaming examples with designated target point $\notin P_G$, total probability mass of $32\eps$ distributed uniformly over gaming examples with designated target points $\in P_G$, and total probability mass of $1-32\eps$ distributed uniformly over improving examples. Note that with this construction, the target concept $c$ uses $\pfinal = \points \setminus P_G$ which achieves performance $\opt = 3/4$ and error equal to $0$.

Now, let $L$ be any PAC learning algorithm for $C$. Consider running $L$ when the target concept $c \in C$ is chosen randomly and the input distribution is $D_c$. Recall that $P_G$ is the set of target points not included in $c$ which is also the set of target points that positive mass of gaming examples can reach to. The purpose of the algorithm is to learn $P_G$. For this purpose the algorithm only benefits from sampling gaming examples. This is because each gaming example $\vec{x}_{m+i}$ reveals $\vec{p}_i \in P_G$; and in contrast, since improving examples are distributed uniformly across \emph{all} target points $\points$, sampling an improving example does not provide any information about $P_G$. We denote observing a gaming example with designated target point $\vec{p} \in  P_G$ as ``observing a point in $P_G$''. We consider set $O_G$ which includes any observed point in $P_G$. In what follows we assume $L$ never includes any point in $O_G$ in its set of final target points $\pfinal$. This is because any point in $O_G$ causes an error of $128\eps/m$, while replacing it with any $\points \setminus O_G$ causes less error while the probability mass of true positives does not decrease.

The proof consists of two parts. In the first part, we argue that the number of $\vec{p} \in  P_G$ that the algorithm has observed after $n \leq m/(256 \eps)$ samples is limited and a considerable number are yet unobserved. More formally, in the first part we argue after drawing $n$ samples, with high probability $L$ has observed at most $3/4$ fraction of distinct points in $\points_G$. The proof goes as follows. Consider $O_G$ after drawing $n$ samples and let $B$ be the event that $|O_G| \geq 3|P_G|/4 = 3m/16$. Since each example is a gaming example with probability $32\eps$, the expectation of $|O_G|$ is at most $m/(256\eps) \times 32\eps = m/8$. Using Chernoff-Hoeffding bounds, $\Pr[|O_G| \geq (m/8)(1+1/2)]\leq e^{-m/96}\leq 10^{-4}$, where the last inequality holds if $m \geq 1000$.\footnote{We use the assumption of $m \geq 1000$ for the concentration bounds. For $m < 1000$, $L$ still needs to observe $\Omega(1/\eps)$ to sample a gaming example.} Therefore, $B$ happens with probability at least $0.99$.

Then in the second part, we argue if the algorithm does not include many of the unobserved points from $P_G$ in $\pfinal$ it has low performance; and if it does it has high error. Let $E$ be the event that $L$ includes at least $1/16$ fraction of the points from $P_G$ in $\pfinal$. As discussed previously $L$ does not include any points from $O_G$. Therefore, $E$ is also the event that $L$ includes at least $1/16$ fraction of $P_G \setminus O_G$ in $\pfinal$. Note that even after observing the samples, the gaming points corresponding to $P_G \setminus O_G$, are still uniformly distributed among $\points \setminus O_G$. From the setup, it is clear that the problem of predicting whether the designated target point of an example belongs to $P_G \setminus O_G$ is equivalent to predicting the outcome of a random process. To have a performance of $3/4\ \opt$, $L$ needs to include at least $3/4 \times 3/4 > 1/2$ fraction of $\points$; and since it does not include any $O_G$ it needs to include at least $1/2$ fraction of $\points \setminus O_G$, which is what we assume in the remaining part of the proof. We argue that conditioned on $B$, event $E$ happens with a high probability. Let $Z_i$ be an indicator random variable indicating whether $L$ includes $\vec{p}_i \in P_G \setminus O_G$  in $\pfinal$. Including at least $1/2$ fraction of $\points \setminus O_G$ implies $\E[Z_i] \geq 1/2$. Let $Z$ be the sum of $Z_i$ for the unobserved points in $P_G$, i.e., $Z = \sum Z_i \cdot \mathbbm{1}{[\vec{p}_i \in P_G \setminus O_G]}$. Conditioned on $B$, the event that $|O_G| \geq 3|P_G|/4$, we find $\E[Z\mid B] \geq |P_G|/8$. We have
\begin{align*} 
&\Pr[(Z<\frac{|P_G|}{16})\mid B] \leq e^{-\frac{|P_G|}{64}}\leq e^{-\frac{250}{64}} < 0.03,\\
&\Pr[E\mid B] > 0.97,\\
&\Pr[E] \geq \Pr[E\cap B] = \Pr[E\mid B]\cdot \Pr[B] \geq 0.96;
\end{align*}
where the first line follows from Chernoff-Hoeffding bounds and using $m \geq 1000$; the second line is a direct implication of the first line; and the last line uses properties of conditional probabilities and a lower bound on the probability of event $B$ that we found in the first part of the proof.

Finally, we argue that if $E$ happens, in expectation more than an $\eps$ fraction of the $m$ examples would game. Since there is a total probability mass of $32\eps$ distributed uniformly over gaming examples with designated points $\in P_G$, and $L$ includes at least $1/16$ fraction of the points from $P_G$, in expectation an $(32\eps)(1/16)=2\eps$ fraction of the examples would game.

\end{proof}

\section{Algorithmic Results Specific to the Linear Model}\label{sec:linear}

The algorithmic results provided so far work in both the general discrete and the linear discrete models. In this section we focus on the linear model and provide algorithmic results for various problems. These algorithms do not follow the greedy structure of the previous algorithms, and use novel technical ideas. First, we consider the problem of designing \emph{linear classifiers}. \Cref{sec:linear_classifier_properties} provides introductory observations and definitions about linear classifiers. \Cref{sec:linear_classifier_result} presents the main result of this section which determines whether there exists a linear classifier that classifies all agents accurately and causes all improvable agents to become qualified. \Cref{sec:linear_classifier_2dim} provides a linear classifier maximizing the number of true positives minus false positives in the two-dimensional case. Then, we shift focus to general (not necessarily linear) classifiers in a two-dimensional space and in \Cref{sec:two-dimension} provide an algorithm for maximizing true positives subject to no false positives. 

\subsection{Properties of Linear Classifiers}\label{sec:linear_classifier_properties}

Before diving into discussion of the algorithmic results, we provide observations about linear classifiers to set the context. We also provide optimal classifiers in special cases. 

For the following discussion, consider linear classifier $\f^*: \vec{a}^*\vec{x} \geq b^*$ that separates the truly qualified agents from unqualified agents.

\begin{observation}\label{obs:move} With linear classifier $\f:\vec{a}\vec{x} \geq b$, any utility maximizing agent that achieves non-negative utility by changing their features moves in dimension $\argmax_j \vecj{a}/\vecj{c}$. \end{observation}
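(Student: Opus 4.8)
The plan is to analyze the utility-maximizing agent's optimization problem directly. An agent starting at $\init{x}$ considers moving to some $\perc{x}$ with $\percij{x} \geq \initij{x}$ in each improvement-relevant sense (decreasing is free and never helps get classified positive under a non-negative-weight linear rule, so we may assume the agent only increases coordinates). The agent wants to be classified positive, i.e., satisfy $\vec{a}\perc{x} \geq b$, at minimum cost $\sum_j \vecj{c}(\percij{x} - \initij{x})^+$. So the question reduces to: among all ways of adding a non-negative vector $\vec{\Delta}$ to $\init{x}$ so that $\vec{a}(\init{x} + \vec{\Delta}) \geq b$, which minimizes $\sum_j \vecj{c}\,\vecj{\Delta}$?

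First I would observe that if the agent is already classified positive at $\init{x}$ it does not move, so assume $\vec{a}\init{x} < b$ and let $\Delta_b := b - \vec{a}\init{x} > 0$ be the ``gap'' to close. Increasing coordinate $j$ by $\vecj{\Delta}$ increases the left-hand side $\vec{a}\vec{x}$ by $\vecj{a}\vecj{\Delta}$ at cost $\vecj{c}\vecj{\Delta}$; so per unit of cost spent on coordinate $j$, the agent closes $\vecj{a}/\vecj{c}$ of the gap. To close a fixed gap $\Delta_b$ at minimum total cost, the agent should spend all its movement budget on the coordinate(s) with the largest ratio $\vecj{a}/\vecj{c}$. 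Formally, for any feasible $\vec{\Delta}$, we have $\Delta_b \leq \vec{a}\vec{\Delta} = \sum_j \vecj{a}\vecj{\Delta} = \sum_j (\vecj{a}/\vecj{c})\,\vecj{c}\vecj{\Delta} \leq \big(\max_j \vecj{a}/\vecj{c}\big)\sum_j \vecj{c}\vecj{\Delta} = \big(\max_j \vecj{a}/\vecj{c}\big)\cdot cost$, so $cost \geq \Delta_b / \max_j \vecj{a}/\vecj{c}$, and this lower bound is achieved exactly by putting all of $\vec{\Delta}$ on a coordinate attaining the max. Hence any agent that finds it worthwhile to move (nonnegative utility, i.e., the minimal cost is at most its value) can do so optimally by moving only in dimension $\argmax_j \vecj{a}/\vecj{c}$.

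The main obstacle — really more a matter of care than difficulty — is handling degenerate cases cleanly: ties in the $\argmax$ (the statement just asserts the agent moves in \emph{a} maximizing dimension, and the inequality chain above shows any such choice is cost-optimal, so ties are fine), coordinates with $\vecj{c} = 0$ or $\vecj{a} = 0$ (if some $\vecj{c}=0$ the ratio is infinite and the agent moves there for free; if all relevant ratios are $0$, i.e., $\vec{a} = \vec{0}$, then either everyone is already classified according to the sign of $-b$ and no movement helps, so the hypothesis ``achieves non-negative utility by changing features'' is vacuous), and confirming that decreasing any coordinate is never useful (it only lowers $\vec{a}\vec{x}$ since weights are non-negative, and costs nothing but also cannot help feasibility, so WLOG $\vec{\Delta} \geq \vec{0}$). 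I would state these as one or two sentences rather than belabor them, then present the two-line inequality argument as the core of the proof.
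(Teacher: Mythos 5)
Your argument is correct and is essentially the justification the paper intends: the observation is stated there without an explicit proof, relying on exactly the ``bang per buck'' reasoning you spell out (each unit of cost spent on coordinate $j$ raises $\vec{a}\vec{x}$ by $\vecj{a}/\vecj{c}$, so the cheapest way to close the gap $b-\vec{a}\init{x}$ is to spend everything on a coordinate maximizing that ratio, and your two-line inequality chain makes this precise). Your treatment of the side cases — ties (which the paper resolves separately in its definition of the movement dimension), free or zero-weight coordinates, and the uselessness of decreasing coordinates under non-negative weights — is consistent with the paper's conventions, so nothing is missing.
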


\begin{definition}[movement dimension]\label{def:move_dim} The movement dimension of linear classifier $\f:\vec{a}\vec{x} \geq b$ is the utility maximizing dimension $\argmax_j \vecj{a}/\vecj{c}$ discussed in \Cref{obs:move}. If there are multiple such dimensions the ties are broken in favor of improvement dimensions and then lexicographically.
\end{definition}

\begin{definition}[encourage improvement/gaming]\label{def:encourage}
A classifier encourages improvement if its movement dimension is an improvement dimension. It encourages gaming otherwise.
\end{definition}

\begin{definition}[dim-$j$ improving] A linear classifier is dim-$j$ improving if it encourages improvement  and its movement dimension is along dimension $j$.
\end{definition}

The following definition captures the set of agents that potentially can improve to become truly qualified.

\begin{definition}[improvement margin, improvable agents] The improvement margin includes all the agents that can afford (do not have to incur a cost of more than $1$) to move in an improvement dimension and become truly qualified. Formally, any initially unqualified agent $i$, i.e., $\vec{a}^*\init{x}_i < b^*$, that has distance $\leq 1/\vecj{c}$ along an improvement dimension $j$ to $\f^*$ is in the improvement margin.
\end{definition}

\begin{lemma}\label{lm:all-improve}
If $\f^*:\vec{a}^* x \geq b^*$ encourages improvement, the optimal classifier is $\f^*$---among all linear or nonlinear classifiers. 
\end{lemma}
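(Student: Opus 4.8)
The plan is to show that when $\f^*$ encourages improvement, $\f^*$ itself attains the best possible objective value (whatever the precise objective — true positives minus false positives, or true positives subject to no false positives), by establishing two facts: (i) $\f^*$ makes zero false positives, and (ii) no classifier, linear or not, can do strictly better on true positives than $\f^*$ does. For (i), observe that under $\f^*$ every agent that moves does so along the movement dimension of $\f^*$, which by hypothesis is an improvement dimension; hence any agent that crosses the threshold $\vec{a}^*\vec{x}\ge b^*$ by moving actually increases its true feature vector in exactly the direction(s) that matter, so $\true{x}$ also satisfies $\vec{a}^*\true{x}\ge b^*$. Combined with the fact that agents already satisfying $\f^*$ are truly qualified by definition of $\f^*$, this gives zero false positives and moreover shows that every positively-classified agent is a true positive.

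For (ii), I would argue that the set of agents classified positive by $\f^*$ already includes \emph{every} agent that is truly qualified or can afford to become truly qualified — i.e. the truly qualified agents plus all improvable agents. Indeed, an improvable agent $i$ has distance $\le 1/\vecj{c}$ to $\f^*$ along some improvement dimension $j$; but the movement dimension of $\f^*$ has the maximum bang-per-buck $\vecstarj{a}/\vecj{c}$, so the cost for $i$ to reach $\f^*$ along its own movement dimension is at most the cost along dimension $j$, which is at most $1$ — hence agent $i$ moves and is classified positive (and by part (i) becomes a true positive). Any classifier whatsoever can only turn into a true positive an agent that is either already truly qualified or can be \emph{induced} to become truly qualified at cost $\le 1$; the latter requires moving along some improvement dimension to cross $\f^*$, which is exactly the definition of improvable. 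So the true positives of $\f^*$ are a superset of the true positives of any other classifier, and since $\f^*$ has zero false positives it is optimal under either objective.

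The main obstacle, and the step deserving the most care, is the claim in (ii) that \emph{no} classifier can create a true positive outside the set (truly qualified) $\cup$ (improvable). One must rule out subtler possibilities: an arbitrary classifier could be some complicated region that an initially unqualified agent enters by moving in a \emph{gaming} direction — but then $\true{x}=\init{x}$ in that coordinate, so the agent is not truly qualified and would be a false positive, not a true positive; or an agent could move in an improvement direction but to a point still not satisfying $\f^*$ — then it is again not a true positive. The real content is that for an agent to become a true positive it must end at a $\perc{x}$ whose induced $\true{x}$ satisfies $\f^*$, which forces enough improvement-direction movement to cross $\f^*$, costing at least the min-cost improvement-direction distance to $\f^*$; if that exceeds $1$ the agent won't pay it regardless of the classifier. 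I would state this cleanly as: an agent can be made a true positive by \emph{some} classifier iff it is in (truly qualified) $\cup$ (improvement margin), and then invoke the bang-per-buck inequality to conclude $\f^*$ captures all of them. A minor point to handle carefully is the tie-breaking in Definition \ref{def:move_dim} (ties broken toward improvement dimensions), which is what guarantees the movement dimension is genuinely an improvement dimension under the hypothesis "encourages improvement," and the edge case of agents exactly on the boundary.
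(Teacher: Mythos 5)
Your proposal is correct and follows essentially the same route as the paper's (much terser) proof: $\f^*$ correctly classifies the initially qualified and the unqualified unimprovable agents, while every agent in the improvement margin moves along the (improvement) movement dimension, becomes truly qualified, and is a true positive, so $\f^*$ attains the maximum possible set of true positives with no false positives. Your added bang-per-buck argument and the explicit upper bound that no classifier can create a true positive outside (qualified) $\cup$ (improvable) are exactly the steps the paper leaves implicit, so there is no substantive difference in approach.
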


\begin{proof} 
$\f^*$ classifies initially qualified agents and unqualified unimprovable agents accurately. Also, all the agents in the improvement margin improve, become qualified, and are accurately classified as positive. 
\end{proof}

\begin{lemma}\label{lm:shifted_opt}
Let $j$ be the movement dimension of classifier $\f^*$. The classifier $\g: \vec{a}^*\vec{x} \geq b^* + \vecstarj{a}/\vecj{c}$ classifies all the initially qualified agents as positive and the rest as negative.
\end{lemma}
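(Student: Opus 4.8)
The plan is to show the two required facts separately: (i) every initially qualified agent remains classified as positive under $\g$, and (ii) every initially unqualified agent is classified as negative under $\g$. Throughout, let $j$ be the movement dimension of $\f^*$, so that by \Cref{def:move_dim} the quantity $\vecstarj{a}/\vecj{c} = \max_k \vecstark{a}/\vecj{c}$ (with ties broken as specified), and recall that by \Cref{obs:move} any agent who moves at all moves only in dimension $j$.

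For (i), take an agent $i$ with $\vec{a}^*\init{x}_i \geq b^*$. Its cheapest way to reach the halfspace $\{\vec{x} : \vec{a}^*\vec{x} \geq b^* + \vecstarj{a}/\vecj{c}\}$ is to increase coordinate $j$, and increasing coordinate $j$ by $1/\vecj{c}$ raises $\vec{a}^*\vec{x}$ by exactly $\vecstarj{a}/\vecj{c}$ at a cost of exactly $1$; hence agent $i$ can reach $\g$'s positive region at cost at most $1$, so it is weakly better off doing so (utility $\geq 1 - 1 = 0$, and we break the indifference in favor of being classified positive), and it is classified as positive. For (ii), take an agent $i$ with $\vec{a}^*\init{x}_i < b^*$. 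Moving to any point $\perc{x}_i$ that $\g$ classifies as positive requires $\vec{a}^*\perc{x}_i \geq b^* + \vecstarj{a}/\vecj{c}$, i.e.\ it requires raising the value of $\vec{a}^*\vec{x}$ by strictly more than $\vecstarj{a}/\vecj{c}$. Since increasing coordinate $k$ by $\delta$ raises $\vec{a}^*\vec{x}$ by $\vecstark{a}\,\delta$ at cost $\vecj{c}\,\delta$ — and decreasing coordinates is free but only lowers $\vec{a}^*\vec{x}$ — the cheapest way to achieve a given increase $\Delta$ in $\vec{a}^*\vec{x}$ costs $\Delta \cdot \min_k (\vecj{c}/\vecstark{a}) = \Delta \cdot (\vecj{c}/\vecstarj{a})$, which is the "bang-per-buck" optimality of dimension $j$. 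Thus reaching $\g$'s positive region costs strictly more than $(\vecstarj{a}/\vecj{c})\cdot(\vecj{c}/\vecstarj{a}) = 1$, exceeding the agent's value of $1$, so a utility-maximizing agent will not do it and is classified as negative.

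The only subtlety — and the main thing to be careful about — is the boundary/tie-breaking behavior: an initially qualified agent reaches $\g$ at cost exactly $1$, yielding utility exactly $0$, so we must rely on the convention (implicit in the model, where agents take the cheapest action that gets them classified positive whenever that cost is at most their value) that agents facing indifference choose to be classified positive. A secondary point is the case $\vecstarj{a} = 0$: then $b^* + \vecstarj{a}/\vecj{c} = b^*$, every weight of $\f^*$ is zero, no agent can ever change its classification, and the statement holds trivially (all agents classified according to $\f^*$, i.e.\ the same set is positive). Modulo these conventions, both directions reduce to the single linear computation that a unit of budget spent optimally moves $\vec{a}^*\vec{x}$ by exactly $\vecstarj{a}/\vecj{c}$, which is precisely the shift applied to $b^*$.
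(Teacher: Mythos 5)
Your proof is correct and follows essentially the same route as the paper: initially qualified agents can reach $\g$ by moving $1/\vecj{c}$ in dimension $j$ at cost at most $1$ (hence nonnegative utility, classified positive), while initially unqualified agents would need to raise $\vec{a}^*\vec{x}$ by strictly more than $\vecstarj{a}/\vecj{c}$, which even along the best bang-per-buck dimension $j$ costs strictly more than their value of $1$. Your extra care about the utility-zero tie-breaking and the degenerate $\vecstarj{a}=0$ case is consistent with the conventions the paper's own (terser) proof implicitly uses.
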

\begin{proof}
Initially unqualified agents, $\vec{a}^*\init{x}_i < b^*$, can move at most $1/\vecj{c}$ in dimension $j$ which is not enough to reach to $\g$. Therefore, these agents are classified as negative by $\g$. On the other hand, initially qualified agents, $\vec{a}^*\init{x}_i \geq b^*$, afford to reach to $\g$ and receive nonnegative utility. Therefore, they will be classified as positive.
\end{proof}

\begin{corollary}\label{cor:all-game}
If all the dimensions are gaming dimensions, $\g: \vec{a}^*\vec{x} \geq b^* + \vecstarj{a}/\vecj{c}$ is the optimal classifier, where $j$ is the movement dimension of $\f^*$.
\end{corollary}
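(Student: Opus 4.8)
The plan is to derive this immediately from Lemma~\ref{lm:shifted_opt} together with the observation that when every dimension is a gaming dimension, no agent can alter its true qualification. First I would note that if all directions are gaming, then for any agent starting at $\init{x}_i$ and moving to any point $\perc{x}_i$, the true location $\true{x}_i$ agrees with $\init{x}_i$ in every coordinate, so $\true{x}_i = \init{x}_i$. Consequently agent $i$ is truly qualified exactly when $\vec{a}^*\init{x}_i \geq b^*$, and this holds no matter which classifier is deployed. In particular, the set of truly qualified agents is fixed in advance and equals the set of initially qualified agents; so for every classifier the number of true positives is at most the number of initially qualified agents and the number of false positives is at least $0$.

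Next I would invoke Lemma~\ref{lm:shifted_opt}: since $j$ is the movement dimension of $\f^*$, the shifted classifier $\g:\vec{a}^*\vec{x} \geq b^* + \vecstarj{a}/\vecj{c}$ labels every initially qualified agent positive and every other agent negative. Combining with the previous paragraph, $\g$ labels exactly the truly qualified agents positive: it attains the maximum possible number of true positives and simultaneously incurs zero false positives. Hence $\g$ is optimal under any objective that (weakly) rewards true positives and (weakly) penalizes false positives---e.g.\ maximizing true positives minus false positives, or maximizing true positives subject to no false positives---and this holds among all classifiers, linear or not, matching the sense of optimality used in Lemma~\ref{lm:all-improve}.

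There is essentially no obstacle here; the only point requiring care is the meaning of ``optimal,'' and I would make explicit that because $\g$ hits both extremes at once (maximum true positives, zero false positives) the conclusion is insensitive to the precise trade-off chosen in the objective. I would also add a one-line remark explaining why the shift is needed rather than using $\f^*$ directly: since the movement dimension $j$ is a gaming dimension, under $\f^*$ every initially unqualified agent within distance $1/\vecj{c}$ of $\f^*$ along dimension $j$ would game its way across the threshold and become a false positive, whereas adding $\vecstarj{a}/\vecj{c}$ to $b^*$ is exactly enough to block this gaming while still accepting all genuinely qualified agents.
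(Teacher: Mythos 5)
Your proposal is correct and follows essentially the same route as the paper: both observe that with only gaming dimensions no agent can change its true qualification (so the truly qualified are exactly the initially qualified) and then invoke Lemma~\ref{lm:shifted_opt} to conclude that the shifted classifier $\g$ classifies everyone accurately, hence is optimal. Your extra remarks on the meaning of optimality and on why the shift is necessary are consistent elaborations of the same argument.
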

\begin{proof}
If all dimensions are gaming dimensions, there are no improvable agents. Therefore, all agents are either initially qualified or unimprovable and unqualified. By \Cref{lm:shifted_opt}, $\g$ classifies all such agents accurately.
\end{proof}

By \Cref{lm:shifted_opt}, $\g: \vec{a}^*\vec{x} \geq b^* + \vecstarj{a}/\vecj{c}$ may be a ``reasonable" solution because it classifies all the initially qualified as positive and does not result in any false positive classifications. However, it misses out on any new true positives resulting from encouraging agents to become qualified. From this point on, we aim to study other classifiers (not necessarily parallel to $\f^*$) with the hope of encouraging other agents to become qualified.

\subsection{Linear Classifier for Improvable Agents}\label{sec:linear_classifier_result}

In this subsection, we study a problem that takes as input three disjoint subsets of the agents, $\mathcal{S}^\text{yes}$, $\mathcal{S}^\text{no}$, and $\mathcal{S}^\text{imp}$, and outputs a linear classifier (if one exists) that satisfies the following properties.
\begin{enumerate}[label=\roman*]
\item \label{prop:yes} Classifies agent $i$ such that $\init{x}_i \in \mathcal{S}^\text{yes}$ as positive.
\item \label{prop:no} Classifies agent $i$ such that $\init{x}_i \in \mathcal{S}^\text{no}$ as negative.
\item \label{prop:imp} Encourages agent $i$ such that $\init{x}_i \in \mathcal{S}^\text{imp}$ to improve and become truly qualified, i.e., $\true{x}_i \in \q$, and classifies $i$ as positive.
\end{enumerate}

The main result of the section is solving this problem in polynomial time. When $\mathcal{S}^\text{yes}$ is the set of initially qualified agents, $\mathcal{S}^\text{no}$ is the set of unqualified and unimprovable, and $\mathcal{S}^\text{imp}$ is the set of improvable agents, this problem determines whether there exists a linear classifier that classifies $\mathcal{S}^\text{yes}$ and $\mathcal{S}^\text{no}$ accurately and makes all the improvable agents qualified.

To solve this problem, we divide it into subproblems as following: Does there exist a linear classifier with movement direction in \emph{dimension $j$} that satisfies properties \ref{prop:yes}, \ref{prop:no}, and \ref{prop:imp}? If the answer is ``yes" for some dimension $j$, then the answer to the main problem is ``yes". If the answer is ``no" for all $1\leq j \leq d$, no linear classifier satisfying the three properties exists. 

Note that if $\mathcal{S}^\text{imp}$ is nonempty, in order to satisfy property \ref{prop:imp}, dimension $j$ must be an improvement dimension. Therefore, we study the following problem. 

\begin{problem}\label{prob:dimj}
Does there exist a dim-$j$ improving classifier (a linear classifier encouraging improvement \emph{in dimension $j$}) that satisfies properties \ref{prop:yes}, \ref{prop:no}, and \ref{prop:imp}?
\end{problem}
We propose a linear program that solves \Cref{prob:dimj}.
The following definition and observations illustrate the conditions under which a dim-$j$ improving classifier satisfies each property for agent $i$.

\begin{definition}\label{def:xif}
For a fixed improvement dimension $j$ and classifiers $\f^*:\vec{a}^*\vec{x} \geq b^* $ and $\f:\vec{a}\vec{x} \geq b $, the points $\vec{x}_{i,\f^*}$, $\vec{x}_{i,\f}$, $\vec{x}_{i,\text{max}}$ are defined as follows (depicted in \Cref{fig:improve_vs_game2}.):
\begin{itemize}
\item $\vec{x}_{i,\f^*}$ is the projection of $\init{x}_i$ on the separating hyperplane of classifier $\f^*$ along dimension $j$.
\item $\vec{x}_{i,\f}$ is the projection of $\init{x}_i$ on the separating hyperplane of classifier $\f$ along dimension $j$.
\item $\vec{x}_{i,\text{max}}$ is the shifted $\init{x}_i$ along dimension $j$ by $1/\vecj{c}$.
\end{itemize}
More formally, for all coordinates $ k\neq j$, we have $ \vec{x}_{i,\f^*}[k] = \vec{x}_{i,\f}[k] = \vec{x}_{i,\text{max}}[k] = \init{x}_i[k]$ . Also, since $\vec{a}^* \vec{x}_{i,\f^*} = b^*$, we have
$\vec{x}_{i,\f^*}[j] = {\left(b^*-\sum_{k\neq j}\vecstark{a} \init{x}_i[k]\right)}/{\vecstarj{a}}$. Similarly, since $\vec{a} \vec{x}_{i,\f} = b $, we have
$\vec{x}_{i,\f}[j] = {\left(b-\sum_{k\neq j}\vecstark{a} \init{x}_i[k]\right)}/\vecj{a}$. Finally, $\vec{x}_{i,\text{max}}[j] = \init{x}_i[j] + 1/\vecj{c}$. 

\end{definition}

\begin{observation}\label{obs:all_positive}
A dim-$j$ improving classifier $\f:\vec{a}\vec{x} \geq b$ classifies agent $i$ as positive (property \ref{prop:yes}) if $\vec{a}\vec{x}_{i,\text{max}} \geq b$. It classifies agent $i$ as negative (property \ref{prop:no}) if $\vec{a}\vec{x}_{i,\text{max}} < b$.
\end{observation}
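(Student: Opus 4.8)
The plan is to reduce the claim to the single-coordinate movement behaviour already established in Observation~\ref{obs:move}. Since $\f$ is a dim-$j$ improving classifier, its movement dimension is $j$, so $\vecj{a}/\vecj{c} = \max_k \veck{a}/\veck{c}$; in particular $\vecj{a}\ge 0$ (as $\f$, like $\f^*$, has non-negative coefficients). By Observation~\ref{obs:move}, if agent $i$ can obtain non-negative utility by changing its features under $\f$, it does so by moving only in dimension $j$, and the most it can afford to move there is $1/\vecj{c}$ units, since that costs exactly $\vecj{c}\cdot(1/\vecj{c}) = 1$, the value of a positive classification. Free decreases in other coordinates only lower $\vec{a}\vec{x}$ and hence never help, so the set of feature vectors reachable by agent $i$ within budget is, up to such irrelevant moves, the segment $\{\init{x}_i + t\,\vec{e}_j : 0\le t\le 1/\vecj{c}\}$, whose endpoint is exactly $\vec{x}_{i,\text{max}}$.

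Next I would use monotonicity: along that segment $\vec{a}(\init{x}_i + t\,\vec{e}_j) = \vec{a}\init{x}_i + \vecj{a}\,t$ is non-decreasing in $t$ since $\vecj{a}\ge 0$, so the point of the segment with the largest value of $\vec{a}\vec{x}$ is its endpoint $\vec{x}_{i,\text{max}}$. Hence agent $i$ can be classified positive by $\f$ if and only if $\vec{a}\vec{x}_{i,\text{max}}\ge b$. If $\vec{a}\vec{x}_{i,\text{max}}\ge b$, then $i$ either already satisfies $\f$ (when $\vec{a}\init{x}_i\ge b$) or moves the required $t\le 1/\vecj{c}$ units along dimension $j$ at cost $\le 1$, obtaining non-negative utility; either way it is classified positive, which is exactly property~\ref{prop:yes}. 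If instead $\vec{a}\vec{x}_{i,\text{max}}<b$, no affordable change of features (in dimension $j$ or otherwise, by the reachability argument above) lets $i$ enter the halfspace $\vec{a}\vec{x}\ge b$, so $i$ is classified negative, i.e.\ property~\ref{prop:no}.

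I do not expect a real obstacle here; the argument is bookkeeping on top of Observation~\ref{obs:move}. The only subtlety is the boundary case $\vec{a}\vec{x}_{i,\text{max}} = b$, where the required move costs exactly $1$ and the agent is indifferent between moving and not; the statement records this as ``positive'', consistent with breaking ties in the agent's favour (as is done for the movement dimension in \Cref{def:move_dim}). The one place to be careful is in invoking Observation~\ref{obs:move} to rule out cheaper routes to a positive classification through other (improvement or gaming) coordinates --- this is precisely what makes the single test point $\vec{x}_{i,\text{max}}$ sufficient.
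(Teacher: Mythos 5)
Your argument is correct and matches the reasoning the paper leaves implicit (the observation is stated without proof): by \Cref{obs:move} the agent's only useful, affordable move is up to $1/\vecj{c}$ units along dimension $j$, so testing $\f$ at the endpoint $\vec{x}_{i,\text{max}}$ is exactly the right criterion, with the utility-zero boundary case resolved in favour of moving as the paper does elsewhere (e.g.\ \Cref{lm:shifted_opt}). Your flagged caveat that the weights of $\f$ are taken non-negative (so free decreases never help) is consistent with the paper's implicit convention that larger feature values are better, so no gap remains.
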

\begin{observation}\label{obs:xif}
Using a dim-$j$ improving classifier $\f$, agent $i$ becomes qualified and is classified as positive (property \ref{prop:imp}) if and only if $\vec{x}_{i,\f^*}[j] \leq \vec{x}_{i,\f}[j] \leq \vec{x}_{i,\text{max}}[j]$. See \Cref{fig:improve_vs_game2}.
\end{observation}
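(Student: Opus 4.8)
The plan is to spell out the best response of agent $i$ under a dim-$j$ improving classifier $\f$, read off the resulting $\perc{x}_i$ and $\true{x}_i$, and then check exactly when agent $i$ ends up simultaneously truly qualified and classified as positive.

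First I would reduce the claim to a one-dimensional question about coordinate $j$. By \Cref{obs:move}, since $j$ is the movement dimension of $\f$, any agent that changes its features at all moves only along dimension $j$; hence $\perc{x}_i$ agrees with $\init{x}_i$ in every coordinate $k \neq j$, and since dimension $j$ is an improvement dimension (and no other improvement coordinate is touched) we get $\true{x}_i = \perc{x}_i$. Using $\vec{a}^*\vec{x}_{i,\f^*}=b^*$ together with $\vec{a}^*[j]>0$ (which holds since $j$ is an improvement dimension usable to make agent $i$ qualified), for any point $\vec z$ that agrees with $\init{x}_i$ off coordinate $j$ the condition $\vec{a}^*\vec z \geq b^*$ is equivalent to $\vec z[j] \geq \vec{x}_{i,\f^*}[j]$; similarly $\vec{a}\vec z \geq b$ is equivalent to $\vec z[j] \geq \vec{x}_{i,\f}[j]$. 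Because agent $i$ is initially unqualified, $\init{x}_i[j] < \vec{x}_{i,\f^*}[j]$.

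Next I would compute $\perc{x}_i[j]$. The cheapest way for $i$ to be classified positive by $\f$ is to raise coordinate $j$ to $\vec{x}_{i,\f}[j]$ (if it is not already at least that large), at cost $\vec{c}[j]\,(\vec{x}_{i,\f}[j]-\init{x}_i[j])^+$, which is affordable — i.e.\ yields non-negative utility in the sense of \Cref{obs:move} — exactly when $\vec{x}_{i,\f}[j] \leq \init{x}_i[j] + 1/\vec{c}[j] = \vec{x}_{i,\text{max}}[j]$. So: if $\vec{x}_{i,\f}[j] \leq \vec{x}_{i,\text{max}}[j]$, then regardless of whether $i$ is already on the positive side of $\f$, it ends at $\perc{x}_i[j] = \max\{\init{x}_i[j],\vec{x}_{i,\f}[j]\}$ and is classified positive; if instead $\vec{x}_{i,\f}[j] > \vec{x}_{i,\text{max}}[j] > \init{x}_i[j]$, it cannot afford to reach $\f$, stays at $\init{x}_i$, and is classified negative.

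Finally I would combine the cases. When $\vec{x}_{i,\f}[j] > \vec{x}_{i,\text{max}}[j]$, property \ref{prop:imp} fails and so does the displayed inequality, so the equivalence holds trivially. When $\vec{x}_{i,\f}[j] \leq \vec{x}_{i,\text{max}}[j]$, agent $i$ is classified positive and, since $\true{x}_i=\perc{x}_i$, is truly qualified iff $\max\{\init{x}_i[j],\vec{x}_{i,\f}[j]\} \geq \vec{x}_{i,\f^*}[j]$; using $\init{x}_i[j] < \vec{x}_{i,\f^*}[j]$ this reduces to $\vec{x}_{i,\f}[j] \geq \vec{x}_{i,\f^*}[j]$. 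Hence property \ref{prop:imp} holds iff $\vec{x}_{i,\f^*}[j] \leq \vec{x}_{i,\f}[j] \leq \vec{x}_{i,\text{max}}[j]$, as claimed. There is no genuine obstacle here; the only thing to get right is the bookkeeping for agents already classified positive by $\f$ before moving and for the boundary case where moving costs exactly $1$, both of which are handled consistently by the convention (from \Cref{obs:move}) that non-negative utility already triggers movement, together with $\vec{a}^*[j]>0$ and $\init{x}_i[j]<\vec{x}_{i,\f^*}[j]$.
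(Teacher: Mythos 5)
Your proof is correct, and it is essentially the paper's own reasoning: the paper states this as an observation justified only by \Cref{fig:improve_vs_game2}, and your argument is exactly the intended formalization—use \Cref{obs:move} to reduce to coordinate $j$ (so $\true{x}_i=\perc{x}_i$), compute the agent's best response, and case-split on whether $\vec{x}_{i,\f}[j]\leq \vec{x}_{i,\text{max}}[j]$. The side conditions you make explicit (agent $i$ initially unqualified, $\vec{a}^*[j]>0$ and $\vec{a}[j]>0$, and zero-utility movement counting as moving) are all implicit in the paper's definitions of the improvement margin, of $\vec{x}_{i,\f^*}$ and $\vec{x}_{i,\f}$, and in \Cref{obs:move}, so no gap remains.
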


\begin{figure}[ht]
\includegraphics[width=7cm]{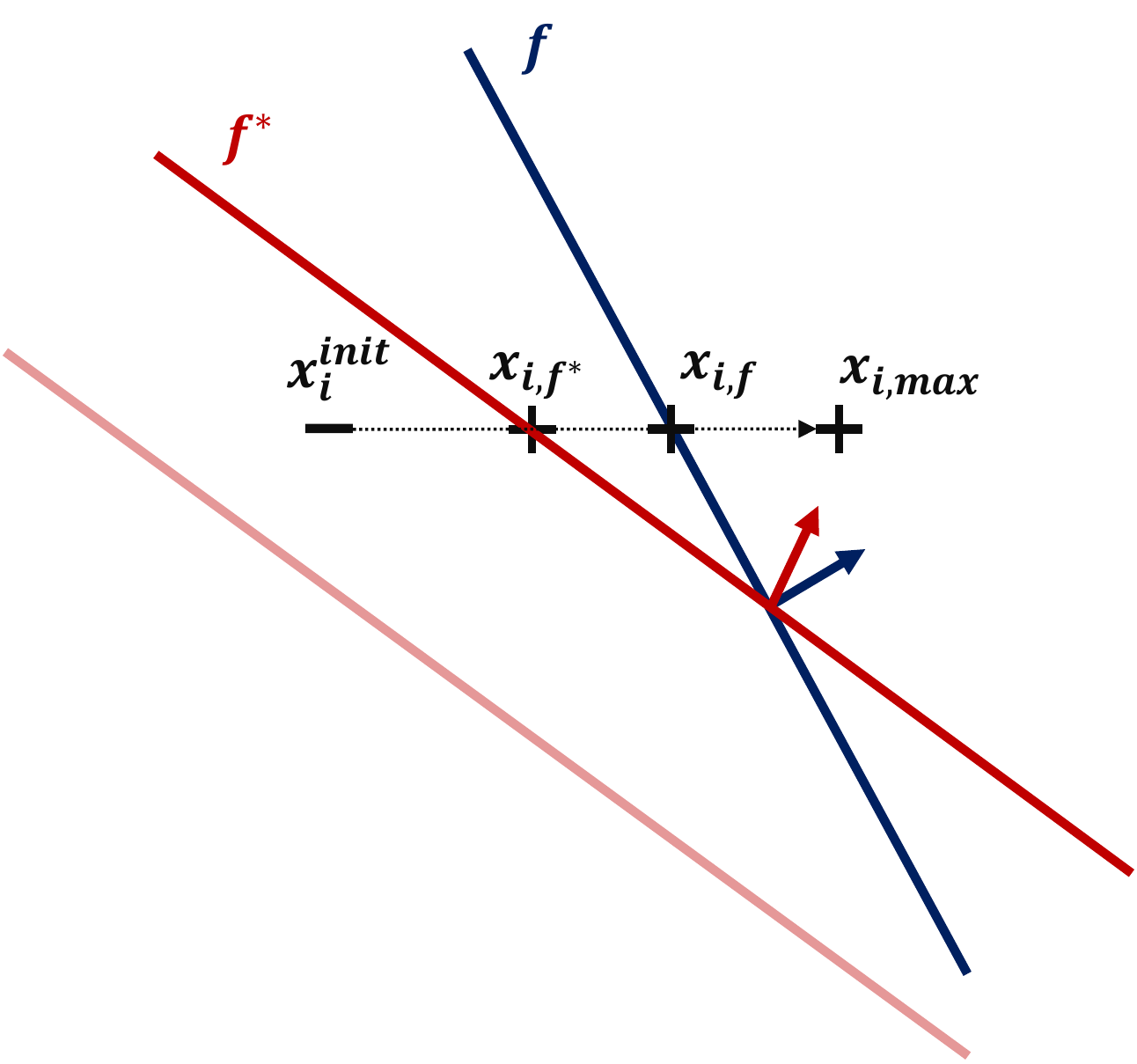}
\centering
\caption{Depicting $\init{x}_i, \vec{x}_{i,\f^*}, \vec{x}_{i,\f}, \vec{x}_{i,\text{max}}$ in \Cref{def:xif} and \Cref{obs:xif}. The horizontal axis shows dimension $j$ in the definition.} 
\label{fig:improve_vs_game2}
\end{figure}

\begin{proposition}\label{pr:three_set_LP}
The following LP captures \Cref{prob:dimj}, where the variables are $\vec{a}$ and $b$.
\begin{align}
\frac{\veck{a}}{\veck{c}} &\leq
\frac{\vecj{a}}{\vecj{c}} && \forall k\neq j \label{LP:game-vs-improve-cond1}\\
b &\leq \vec{a} \vec{x}_{i,\text{max}}   && \forall \init{x}_i \in \mathcal{S}^\text{yes}\label{LP:game-vs-improve-cond2}\\
\vec{a} \vec{x}_{i,\text{max}}  &< b && \forall \init{x}_i \in \mathcal{S}^\text{no}\label{LP:game-vs-improve-cond3}\\
\vec{x}_{i,\f^*}[j] &\leq \vec{x}_{i,\f}[j] && \forall \init{x}_i \in \mathcal{S}^\text{imp}\label{LP:game-vs-improve-cond4}\\
\vec{x}_{i,\f}[j] &\leq \vec{x}_{i,\text{max}}[j] && \forall \init{x}_i \in \mathcal{S}^\text{imp}\label{LP:game-vs-improve-cond5}
\end{align}
\end{proposition}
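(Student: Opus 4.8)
The plan is to show that the displayed system is an honest linear program in the variables $(\vec a, b)$ whose feasible region is, up to positive scaling, exactly the set of dim-$j$ improving classifiers $\f:\vec a\vec x\ge b$ satisfying properties \ref{prop:yes}--\ref{prop:imp}; then \Cref{prob:dimj} has answer ``yes'' precisely when this LP is feasible, which can be tested in polynomial time. Two things need checking: that the constraints are linear, and that feasible points correspond to the classifiers we want. I would do linearity first, since it also fixes how to read \eqref{LP:game-vs-improve-cond4}--\eqref{LP:game-vs-improve-cond5}.

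\textbf{Linearity.} Constraint \eqref{LP:game-vs-improve-cond1} is homogeneous linear in $\vec a$, and because $\vec{x}_{i,\text{max}}$ depends only on $\init{x}_i$ and $\vec c$ (not on $(\vec a,b)$), constraints \eqref{LP:game-vs-improve-cond2}--\eqref{LP:game-vs-improve-cond3} are linear as well. Constraints \eqref{LP:game-vs-improve-cond4}--\eqref{LP:game-vs-improve-cond5} involve $\vec{x}_{i,\f}[j]=\bigl(b-\sum_{k\ne j}\veck a\,\init{x}_i[k]\bigr)/\vecj a$, which is not affine only because of the division by the variable $\vecj a$; but any dim-$j$ improving classifier has $\vecj a>0$, and the whole system is invariant under $(\vec a,b)\mapsto(t\vec a,tb)$ for $t>0$, so I would normalize $\vecj a=1$ (or append $\vecj a\ge 1$). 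After clearing $\vecj a>0$, \eqref{LP:game-vs-improve-cond4} becomes $\vecj a\,\vec{x}_{i,\f^*}[j]+\sum_{k\ne j}\veck a\,\init{x}_i[k]\le b$ and \eqref{LP:game-vs-improve-cond5} becomes $b\le\vec a\,\vec{x}_{i,\text{max}}$ (the same form as \eqref{LP:game-vs-improve-cond2}, as it should be since an improving agent must be classified positive); both are linear, since $\vec{x}_{i,\f^*}[j]$ depends only on $\f^*$ and $\init{x}_i$. The strict inequality in \eqref{LP:game-vs-improve-cond3} and the normalization are absorbed by standard LP-feasibility reductions and do not affect polynomial-time solvability.

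\textbf{Correspondence.} For one direction, suppose a dim-$j$ improving classifier $\f:\vec a\vec x\ge b$ satisfies \ref{prop:yes}--\ref{prop:imp}. By \Cref{def:move_dim}, $j$ attains $\max_k\veck a/\veck c$, giving \eqref{LP:game-vs-improve-cond1}. By \Cref{obs:all_positive}, agent $i$ is classified positive iff $\vec a\,\vec{x}_{i,\text{max}}\ge b$, and applying this to $\mathcal S^{\text{yes}}$ and $\mathcal S^{\text{no}}$ yields \eqref{LP:game-vs-improve-cond2} and \eqref{LP:game-vs-improve-cond3}; by \Cref{obs:xif}, an agent $i\in\mathcal S^{\text{imp}}$ improves, becomes truly qualified, and is classified positive iff $\vec{x}_{i,\f^*}[j]\le\vec{x}_{i,\f}[j]\le\vec{x}_{i,\text{max}}[j]$, which is exactly \eqref{LP:game-vs-improve-cond4}--\eqref{LP:game-vs-improve-cond5}. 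So $(\vec a,b)$ is feasible. For the converse, given feasible $(\vec a,b)$ with $\vecj a>0$, constraint \eqref{LP:game-vs-improve-cond1} makes $j$ a bang-per-buck maximizer; since $j$ is an improvement dimension, the tie-break of \Cref{def:move_dim} makes $j$ the movement dimension, so $\f:\vec a\vec x\ge b$ is dim-$j$ improving, and reading \Cref{obs:all_positive} and \Cref{obs:xif} in the reverse direction turns \eqref{LP:game-vs-improve-cond2}--\eqref{LP:game-vs-improve-cond3} into \ref{prop:yes}--\ref{prop:no} and \eqref{LP:game-vs-improve-cond4}--\eqref{LP:game-vs-improve-cond5} into \ref{prop:imp}.

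\textbf{Expected obstacle.} The only delicate point is the interface between \eqref{LP:game-vs-improve-cond1} and the movement-dimension notion: \eqref{LP:game-vs-improve-cond1} only forces $j$ to be \emph{among} the maximizers of $\veck a/\veck c$, whereas ``dim-$j$ improving'' also requires $j$ to win the tie-break of \Cref{def:move_dim}. When $j$ is not the lexicographically-first improvement dimension achieving the maximum, this must be enforced explicitly --- e.g.\ by strengthening \eqref{LP:game-vs-improve-cond1} to a strict inequality for each improvement dimension preceding $j$, which keeps the program linear and of polynomial size. With that reading, the correspondence above goes through verbatim, and everything else is routine bookkeeping on top of \Cref{obs:all_positive} and \Cref{obs:xif}.
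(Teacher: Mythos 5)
Your proposal is correct and takes essentially the same route as the paper, which justifies the proposition exactly by matching constraint \eqref{LP:game-vs-improve-cond1} to the dim-$j$ movement requirement and constraints \eqref{LP:game-vs-improve-cond2}--\eqref{LP:game-vs-improve-cond5} to properties \ref{prop:yes}--\ref{prop:imp} via \Cref{obs:all_positive,obs:xif}. Your extra bookkeeping---clearing the denominator $\vecj{a}>0$ (or normalizing $\vecj{a}=1$) so that \eqref{LP:game-vs-improve-cond4}--\eqref{LP:game-vs-improve-cond5} are genuinely linear, and flagging the degenerate tie-breaking case of \Cref{def:move_dim}---only makes explicit details the paper leaves implicit, and does not constitute a different argument.
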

Constraint \ref{LP:game-vs-improve-cond1} asserts that the movement direction of the classifier is along dimension $j$. Constraint \ref{LP:game-vs-improve-cond2} asserts property \ref{prop:yes}. Constraint \ref{LP:game-vs-improve-cond3} asserts property \ref{prop:no}. Finally, constraints \ref{LP:game-vs-improve-cond4} and \ref{LP:game-vs-improve-cond5} assert property \ref{prop:imp}.

\begin{theorem}\label{thm:lp}
Given the sets $\mathcal{S}^\text{yes}$, $\mathcal{S}^\text{no}$, and $\mathcal{S}^\text{imp}$, there is a polynomial-time algorithm that outputs a linear classifier (if one exists) that satisfies Properties \ref{prop:yes}, \ref{prop:no},\ref{prop:imp}, or declares non-existence of such a classifier.
\end{theorem}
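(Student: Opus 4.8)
The plan is to reduce \Cref{thm:lp} to \Cref{prob:dimj} exactly as the surrounding text suggests, and then argue that \Cref{prob:dimj} is solvable in polynomial time because the system in \Cref{pr:three_set_LP} is a genuine linear program in the variables $\vec{a}, b$. First I would observe that, by \Cref{obs:move} and the definition of movement dimension, any linear classifier $\f:\vec{a}\vec{x}\geq b$ has a well-defined movement dimension, so the set of all linear classifiers is partitioned according to which coordinate $j$ is their movement dimension; hence a classifier satisfying Properties \ref{prop:yes}--\ref{prop:imp} exists iff for some $j\in\{1,\dots,d\}$ there is such a classifier with movement dimension $j$. Moreover, as noted right before \Cref{prob:dimj}, if $\mathcal{S}^\text{imp}\neq\emptyset$ then only improvement dimensions $j$ need be tried (for a gaming $j$, Property \ref{prop:imp} is unsatisfiable), and if $\mathcal{S}^\text{imp}=\emptyset$ constraints \ref{LP:game-vs-improve-cond4}--\ref{LP:game-vs-improve-cond5} are vacuous so every $j$ is a valid candidate; either way there are at most $d$ candidate dimensions, and the overall algorithm loops over them, solving one LP per dimension and returning the first feasible $(\vec{a},b)$ it finds (or declaring infeasibility if all $d$ LPs are infeasible).

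The core step is establishing \Cref{pr:three_set_LP}, i.e.\ that the listed constraints are (a) correct — feasibility of the system is equivalent to existence of a dim-$j$ improving classifier meeting the three properties — and (b) linear in $(\vec{a},b)$. For correctness I would invoke \Cref{def:move_dim} (constraint \ref{LP:game-vs-improve-cond1} says $\vecj{a}/\vecj{c}\geq\veck{a}/\veck{c}$ for all $k$, i.e.\ $j$ is a maximizer, so $j$ is the movement dimension up to tie-breaking), \Cref{obs:all_positive} (constraints \ref{LP:game-vs-improve-cond2}, \ref{LP:game-vs-improve-cond3} encode Properties \ref{prop:yes}, \ref{prop:no} since $\vec{x}_{i,\text{max}}$ is where a utility-maximizing agent ends up under a dim-$j$ classifier), and \Cref{obs:xif} (constraints \ref{LP:game-vs-improve-cond4}, \ref{LP:game-vs-improve-cond5} sandwich $\vec{x}_{i,\f}[j]$ between $\vec{x}_{i,\f^*}[j]$ and $\vec{x}_{i,\text{max}}[j]$, which by that observation is exactly the condition that agent $i$ improves to qualification and is classified positive). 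For linearity, the key computation is that each of $\vec{x}_{i,\text{max}}$ and $\vec{x}_{i,\f^*}$ depends only on the fixed data $\init{x}_i$, $\f^*$, and $\vecj{c}$ — not on the decision variables — so \ref{LP:game-vs-improve-cond2}, \ref{LP:game-vs-improve-cond3} are linear in $(\vec{a},b)$; and from \Cref{def:xif}, $\vec{x}_{i,\f}[j] = (b-\sum_{k\neq j}\vecstark{a}\,\init{x}_i[k])/\vecj{a}$ — wait, here I need to be careful, because this expression divides by the variable $\vecj{a}$. I would handle this by rewriting constraints \ref{LP:game-vs-improve-cond4}--\ref{LP:game-vs-improve-cond5} after multiplying through by $\vecj{a}$ (which is nonnegative, and can be taken strictly positive since $j$ being the movement dimension with a non-trivial classifier forces $\vecj{a}>0$; if $\vecj{a}=0$ no agent moves and $\mathcal{S}^\text{imp}$ must be empty, a case handled separately or by noting the constraints stay linear under the substitution), turning them into honest linear inequalities in $\vec{a},b$. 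Strict inequalities in \ref{LP:game-vs-improve-cond1} (ties) and \ref{LP:game-vs-improve-cond3} can be replaced by non-strict ones with a perturbation / at the cost of an arbitrarily small slack, or handled via standard LP-with-strict-inequalities techniques, so feasibility remains polynomial-time decidable.

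The main obstacle I anticipate is the division by the decision variable $\vecj{a}$ in the expression for $\vec{x}_{i,\f}[j]$, and more generally the interaction between the tie-breaking rule in \Cref{def:move_dim} and the need for a clean LP: one must make sure that "being a dim-$j$ improving classifier" is faithfully captured by the non-strict constraint \ref{LP:game-vs-improve-cond1} plus a guarantee that $\vecj{a}>0$, and that when several coordinates tie for the argmax the LP for dimension $j$ together with the LPs for the other tied dimensions collectively cover all classifiers that actually move in an improvement direction. I would resolve this by arguing that for the decision problem it suffices to find any $j$ and any feasible classifier whose movement is along an improvement dimension achieving the three properties — the tie-breaking rule only affects which single $j$ we name, not whether a good classifier exists — so feasibility of the disjunction of the $d$ LPs is exactly equivalent to the answer "yes" to the original problem. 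Homogeneity ($\vec{a},b$ scale together) lets us normalize, e.g.\ fix $\vecj{a}=1$ in the dimension-$j$ LP, which simultaneously removes the division and the degenerate $\vecj{a}=0$ case; the remaining system is then plainly a polynomial-size LP, and solving it (or each of the $d$ of them) by any polynomial-time LP algorithm proves the theorem.
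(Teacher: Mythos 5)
Your proposal is correct and follows essentially the same route as the paper: loop over candidate movement dimensions $j$ (only improvement dimensions when $\mathcal{S}^\text{imp}\neq\emptyset$, all dimensions otherwise), solve the LP with constraints \ref{LP:game-vs-improve-cond1}--\ref{LP:game-vs-improve-cond5} for each, and output any feasible $(\vec{a},b)$, invoking \Cref{pr:three_set_LP} for correctness. You are in fact somewhat more careful than the paper's own (very terse) proof, which asserts \Cref{pr:three_set_LP} without spelling out that $\vec{x}_{i,\f}[j]$ divides by the variable $\vecj{a}$; your normalization $\vecj{a}=1$ (or clearing the denominator), together with the observation that tie-breaking in \Cref{def:move_dim} does not affect the existence question, is exactly the detail needed to make the system an honest polynomial-size LP.
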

\begin{proof}
If $\mathcal{S}^\text{imp} \neq \emptyset$, run LP \ref{LP:game-vs-improve-cond1}-\ref{LP:game-vs-improve-cond5} for all improvement dimensions $j$. If $\mathcal{S}^\text{imp} = \emptyset$, run the LP for $1 \leq j \leq n$. By \Cref{pr:three_set_LP}, if there exist feasible solution $\vec{a}$ and $b$ for one of these LPs, $\f: \vec{a}\vec{x} \geq b$ is a classifier satisfying properties \ref{prop:yes}, \ref{prop:no}, and \ref{prop:imp}. 
\end{proof}
\begin{corollary}
There is a polynomial-time algorithm that determines whether there exists a linear classifier that classifies the initially qualified as positive,  unqualified unimprovable agents as negative, encourages the agents in the improvement margin to improve to become qualified, and classifies them as positive. If such a classifier exists, it maximizes true positives subject to no false positives.
\end{corollary}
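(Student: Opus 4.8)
The plan is to obtain this corollary as the instance of \Cref{thm:lp} in which $\mathcal{S}^\text{yes}$ is the set of agents with $\vec{a}^*\init{x}_i \geq b^*$ (the initially qualified agents), $\mathcal{S}^\text{imp}$ is the improvement margin, and $\mathcal{S}^\text{no}$ consists of all remaining agents (the initially unqualified agents that do not lie in the improvement margin). First I would dispatch the housekeeping: these three sets are pairwise disjoint and cover every agent --- disjointness of $\mathcal{S}^\text{yes}$ and $\mathcal{S}^\text{imp}$ is immediate since the improvement margin contains only initially unqualified agents --- and each set is computable in polynomial time from the $\init{x}_i$, $\vec{a}^*$, $b^*$, and $\vec{c}$, because membership in $\mathcal{S}^\text{yes}$ is a single inequality check, and membership in $\mathcal{S}^\text{imp}$ just requires checking, for each improvement dimension $j$, whether the distance from $\init{x}_i$ to $\f^*$ along dimension $j$ is at most $1/\vecj{c}$. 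The algorithm is then to run the procedure of \Cref{thm:lp} on this instance; it returns a linear classifier satisfying properties \ref{prop:yes}--\ref{prop:imp}, or correctly reports that none exists.

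For the optimality statement I would argue two things. First, any linear classifier $\f$ satisfying properties \ref{prop:yes}--\ref{prop:imp} for this instance has zero false positives and true-positive set exactly $\mathcal{S}^\text{yes}\cup\mathcal{S}^\text{imp}$: property \ref{prop:no} forces every agent outside $\mathcal{S}^\text{yes}\cup\mathcal{S}^\text{imp}$ to be classified negative, an agent of $\mathcal{S}^\text{imp}$ is truly qualified after moving by property \ref{prop:imp}, and an agent of $\mathcal{S}^\text{yes}$ remains truly qualified after any feature increase because $\vec{a}^*$ is non-negative and the true position is unchanged in the gaming coordinates and non-decreasing in the improvement coordinates; hence every agent that $\f$ classifies positive is a true positive. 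Second --- and this is the crux --- I would show that no classifier $\g$ (linear or not) with zero false positives can have a true-positive set larger than $\mathcal{S}^\text{yes}\cup\mathcal{S}^\text{imp}$, so that $\f$ is optimal since it already attains all of this set.

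To prove the crux, take an agent that is a true positive under $\g$ and is not initially qualified; then it moved at total cost at most $1$ from $\init{x}$ to $\perc{x}$ (feature-wise non-decreasing) and its true position satisfies $\vec{a}^*\true{x}\geq b^*$. Let $\Delta = \sum_{k\text{ improvement}}\vecstark{a}\,(\perc{x}[k]-\init{x}[k])$ be the contribution of the improvement coordinates; since $\true{x}$ agrees with $\perc{x}$ there and with $\init{x}$ on gaming coordinates, $\vec{a}^*\true{x} = \vec{a}^*\init{x}+\Delta$, so $\Delta \geq b^*-\vec{a}^*\init{x} > 0$. Let $j^\star$ be an improvement dimension maximizing $\vecstarj{a}/\vecj{c}$; bounding each summand of $\Delta$ by $\vec{a}^*[j^\star]/\vec{c}[j^\star]$ times the cost $\veck{c}\,(\perc{x}[k]-\init{x}[k])$ spent in coordinate $k$ gives $\Delta \leq \vec{a}^*[j^\star]/\vec{c}[j^\star]$, because the total improvement cost is at most $1$. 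Combining the two inequalities, $b^*-\vec{a}^*\init{x} \leq \vec{a}^*[j^\star]/\vec{c}[j^\star]$, so the agent can move along dimension $j^\star$ alone, by at most $1/\vec{c}[j^\star]$ at cost at most $1$, to reach $\f^*$ and become truly qualified --- i.e., it lies in the improvement margin. Hence the true positives of $\g$ are contained in $\mathcal{S}^\text{yes}\cup\mathcal{S}^\text{imp}$.

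I expect the main obstacle to be exactly this last step: the improvement margin is defined through a \emph{single} improvement dimension, so one must rule out the possibility that an agent becomes qualified only by spreading its unit budget over several improvement directions. The ``bang per buck'' inequality resolves this, but it relies essentially on $\vec{a}^*$ being non-negative and on the true position being monotone in the improvement coordinates. A secondary point to handle carefully is the convention for whether an agent whose movement cost equals $1$ chooses to move, which must be kept consistent with the definition of the improvement margin and with the LP constraints of \Cref{pr:three_set_LP}.
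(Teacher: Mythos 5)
Your proposal is correct and follows the same route as the paper: the corollary is obtained by instantiating \Cref{thm:lp} with $\mathcal{S}^\text{yes}$ the initially qualified agents, $\mathcal{S}^\text{no}$ the unqualified unimprovable agents, and $\mathcal{S}^\text{imp}$ the improvement margin. Your bang-per-buck argument merely makes explicit the optimality claim the paper leaves implicit (namely that, since $\vec{a}^*$ is non-negative, any agent that can become truly qualified at cost at most $1$ --- even by splitting its budget across several improvement dimensions --- already lies in the improvement margin, so no zero-false-positive classifier can exceed $\mathcal{S}^\text{yes}\cup\mathcal{S}^\text{imp}$), which is a sound and welcome addition rather than a different approach.
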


\begin{remark}
\Cref{thm:relaxed-no-destination-pts} asserts that
given the initial feature vectors of agents, $\init{x}_1, \init{x}_2, \ldots, \init{x}_n\in \mathbb{R}^d$, deciding whether there exists a classifier for which all the agents become true positives is NP-hard. However, when limiting to linear classifiers this problem is no longer NP-Hard. Using \Cref{thm:lp}, by setting $\mathcal{S}^\text{yes}$ to the set of initially qualified agents, and $\mathcal{S}^\text{imp}$ to the rest of the agents, this problem is solvable in polynomial time. 
\end{remark}

\subsection{Optimal Linear Classifier in Two-Dimensional Space}\label{sec:linear_classifier_2dim}

In this subsection we continue considering linear classifiers but focus on the two-dimensional case. The main result is an algorithm (\Cref{alg:g-i-classifiers}) for finding a linear classifier that maximizes the number of true positives minus the number of false  positives.

First, note that if $\f^*: \vec{a}^*\vec{x}\geq b^*$ encourages improvement, then the optimal linear classifier is $\f^*$ (\Cref{lm:all-improve}). Also, if 
both dimensions are gaming dimensions, then a shifted $\f^*$ is optimal (\Cref{cor:all-game}). Therefore, for the rest of this subsection we focus on the case where (1) dimension one (horizontal axis) is an improvement dimension, (2) dimension two (vertical axis) is a gaming dimension, and (3) $\f^*$ encourages movement along the gaming dimension.

The following observation determines the agents that contribute to the true positives or false positives of a linear classifier.

\begin{observation}[true positives and false positives]\label{obs:true_false_positive}
Consider classifier $\g: \vec{a}\vec{x} \geq b$ that encourages improvement along dimension 1 (horizontal axis).
In \Cref{fig:true_false_positives}, we divide $\mathbb{R}^2$ into sub-areas and identify the areas of initial positions of agents that contribute to true positives and false positives. By \Cref{obs:all_positive}, each agent $i$ classified positive by $\g$ satisfies $\vec{a}\init{x}_i + \vec{a}[1]/\vec{c}[1] \geq b$. This is the set of agents with initial position at most at distance $1/\vec{c}[1]$ horizontally from $\g$ (the area on the right side of the dotted blue line in \Cref{fig:true_false_positives}).
Each agent classified as positive by $\g$, is either a true positive or a false positive. Using \Cref{obs:xif}---which identifies the initial positions of agents that improve, become qualified, and get classified as positive---any part above (with higher value in dimension $2$ than) the intersection of $\g$ and $\f^*$, on the right of the dotted blue line contributes to true positives (depicted by ``$+$" in the figure). On the other hand, any part below (with less value in dimension $2$ than) the intersection of $\g$ and $\f^*$, on right of the dotted blue line, and the left side of $\f^*$ contributes to false positives (depicted by ``$-$" in the figure).
\end{observation}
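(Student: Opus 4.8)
The plan is to take the set of agents that $\g$ classifies positive, split it into the true positives and the false positives, and read the regions off from three facts already in hand: the positive-classification test of \Cref{obs:all_positive}, the movement-dimension fact \Cref{obs:move}, and the ``becomes truly qualified and is classified positive'' test of \Cref{obs:xif}, applied with the classifier $\f$ instantiated as $\g$ and $j=1$.

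First I would pin down the positively-classified region. By \Cref{obs:all_positive}, agent $i$ is classified positive by $\g$ exactly when $\vec{a}\vec{x}_{i,\text{max}}\ge b$, and since $\vec{x}_{i,\text{max}}$ agrees with $\init{x}_i$ except that its first coordinate is $\init{x}_i[1]+1/\vec{c}[1]$, this is the condition $\vec{a}\init{x}_i\ge b-\vec{a}[1]/\vec{c}[1]$. Geometrically this is the closed half-plane bounded by the line obtained from $\g$ by shifting it $1/\vec{c}[1]$ units in the negative dimension-$1$ direction --- the dotted blue line of \Cref{fig:true_false_positives} --- with the positively-classified agents sitting on its right.

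Next I would decide, for a positively-classified agent, whether $\true{x}_i$ lands on the qualified side of $\f^*$. Since dimension $2$ is a gaming dimension and $\g$ encourages improvement, the movement dimension of $\g$ is dimension $1$, so by \Cref{obs:move} an agent that changes its features moves purely horizontally, while an agent that does not move is unchanged; in either case $\true{x}_i=\perc{x}_i$ and it has the same height $h:=\init{x}_i[2]$ as $\init{x}_i$. Instantiating \Cref{obs:xif} with $\f=\g$, a positively-classified agent is then a true positive iff $\vec{x}_{i,\f^*}[1]\le\vec{x}_{i,\g}[1]$, i.e.\ iff at height $h$ the line $\g$ lies weakly to the right of the line $\f^*$. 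To turn this into the claimed regions, observe that $\vec{x}_{i,\g}[1]-\vec{x}_{i,\f^*}[1]$ is affine in $h$, so which side of $\f^*$ the true point lands on is decided by whether $\init{x}_i$ is above or below the (unique) crossing point $\vec{v}$ of the two lines, with the orientation of this dichotomy fixed by the sign of the slope difference $\vec{a}^*[2]/\vec{a}^*[1]-\vec{a}[2]/\vec{a}[1]$. The standing assumptions fix that sign: the movement dimension of $\g$ being dimension $1$ gives $\vec{a}[2]/\vec{a}[1]\le\vec{c}[2]/\vec{c}[1]$, and the movement dimension of $\f^*$ being the gaming dimension $2$ (with ties broken toward improvement) gives the strict inequality $\vec{a}^*[2]/\vec{a}^*[1]>\vec{c}[2]/\vec{c}[1]$; hence $\vec{a}[2]/\vec{a}[1]<\vec{a}^*[2]/\vec{a}^*[1]$, i.e.\ $\g$ is strictly flatter than $\f^*$, which both guarantees $\vec{v}$ exists and shows $\vec{x}_{i,\g}[1]\ge\vec{x}_{i,\f^*}[1]$ precisely when $h\ge\vec{v}[2]$. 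Combining: a positively-classified agent whose initial point is weakly above $\vec{v}$ is a true positive (the ``$+$'' region), and a positively-classified agent strictly below $\vec{v}$ has $\true{x}_i$ strictly left of $\f^*$, hence is unqualified and a false positive; moreover $\init{x}_i$, being at the same height as and weakly left of $\true{x}_i$, also lies left of $\f^*$, which is exactly the ``$-$'' region (right of the dotted blue line, below $\vec{v}$, left of $\f^*$).

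The step that takes the most care is the middle one, where \Cref{obs:xif} is invoked: an agent classified positive by $\g$ either already lies on the positive side of $\g$ --- in which case it does not move and $\true{x}_i=\init{x}_i$ --- or it moves right until it lands exactly on $\g$, and the test $\vec{x}_{i,\f^*}[1]\le\vec{x}_{i,\g}[1]$ must be checked to decide both sub-cases correctly; this is precisely what the slope comparison delivers, since above $\vec{v}$ the line $\g$ lies to the right of $\f^*$, so an agent already right of $\g$ is automatically right of $\f^*$. One should also note the mild degeneracy that an initially qualified agent sitting below $\vec{v}$ but to the right of $\f^*$ is trivially a true positive and belongs to the ``$+$'' region as well; equivalently, as in \Cref{obs:xif}, one may read the ``$+$''/``$-$'' partition as applying to the positively-classified agents that are initially unqualified.
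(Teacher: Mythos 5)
Your proof is correct and follows essentially the same route as the paper, which justifies the observation directly from \Cref{obs:all_positive} (for the shifted dotted line) and \Cref{obs:xif} together with the figure; you simply make explicit the slope comparison $\vec{a}[2]/\vec{a}[1]\le \vec{c}[2]/\vec{c}[1]<\vec{a}^*[2]/\vec{a}^*[1]$ that the paper leaves implicit in the picture. Your handling of the initially qualified agents below the crossing point but right of $\f^*$ is consistent with the intended ``$+$'' labeling, so no gap remains.
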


\begin{figure}[ht]
\includegraphics[width=5cm]{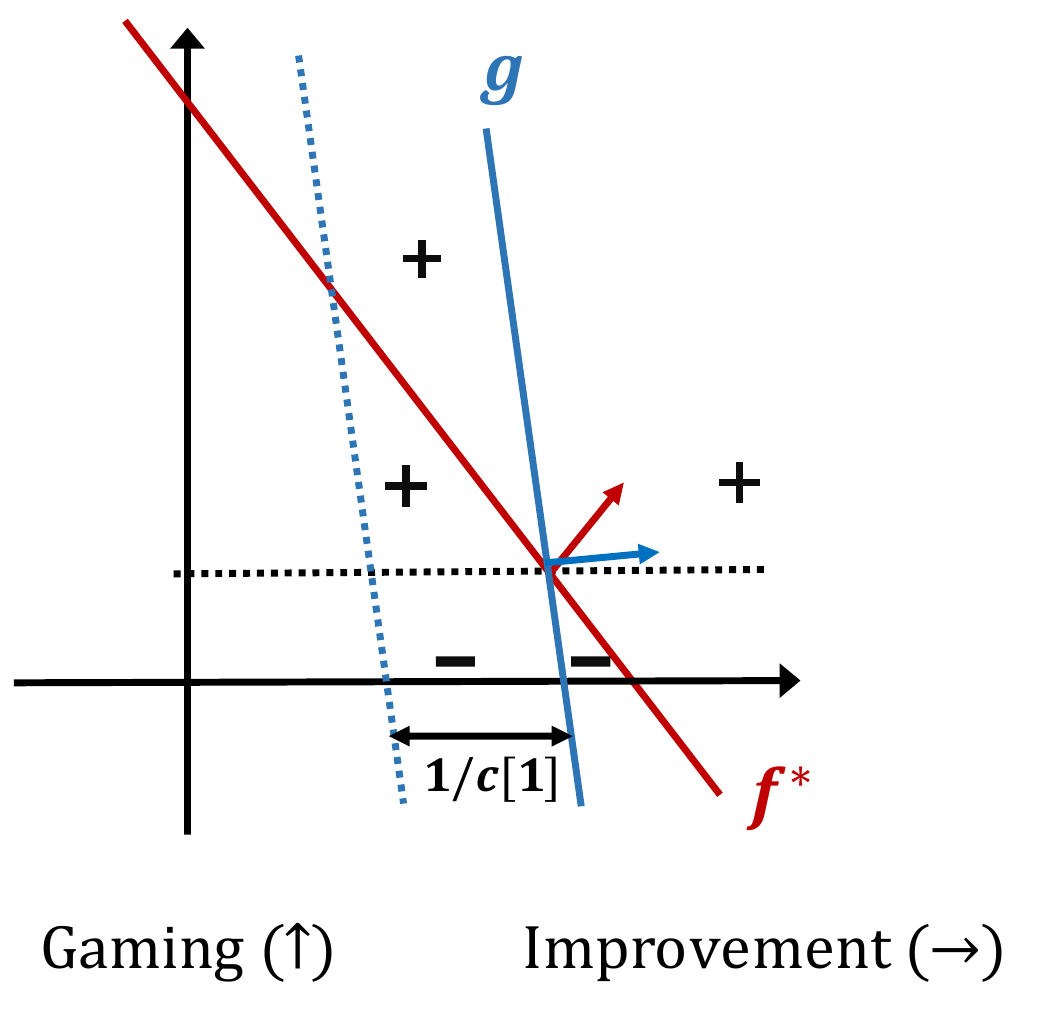}
\centering
\caption{Related to \Cref{obs:true_false_positive}. This figure identifies the areas of initial position of agents that contribute to true positives, depicted by ``$+$", and false positives, depicted by ``$-$", when using classifier $\g$ that encourages improvement along dimension $1$ (horizontal axis).}
\label{fig:true_false_positives}
\end{figure}

In the following lemma, we find a dominating set of linear classifiers.

\begin{lemma}\label{lm:best_slope}
There exists a linear classifier  $\g:\vec{a}\vec{x} \geq b$ satisfying $\vecone{a}/\vecone{c} = \vectwo{a}/\vectwo{c}$ that  maximizes the number of true positives minus false positives in $\mathbb{R}^2$ among improvement-encouraging linear classifiers.
\end{lemma}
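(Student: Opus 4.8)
The plan is to show that \emph{any} improvement-encouraging classifier can be ``flattened'' --- rotated continuously until $\vecone{a}/\vecone{c}=\vectwo{a}/\vectwo{c}$ --- without ever decreasing true positives minus false positives; since with $n$ agents this objective takes only finitely many values, applying the flattening to a maximizer over improvement-encouraging classifiers produces a maximizer of the required form. To set up, rescale the $j$-th coordinate by $\vecj{c}$ so that movement costs become unit; then ``$\vecone{a}/\vecone{c}=\vectwo{a}/\vectwo{c}$'' means the classifier's normal makes a $45^{\circ}$ angle with the horizontal, improvement-encouraging classifiers are exactly those with a nonnegative normal at angle $\theta_0\in[0^{\circ},45^{\circ}]$, and the standing assumption that $\f^*$ encourages gaming means its normal is at an angle in $(45^{\circ},90^{\circ}]$ (its gaming coordinate strictly exceeds its improvement coordinate). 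Fix an improvement-encouraging $\g:\vec{a}\vec{x}\ge b$ with $\theta_0<45^{\circ}$ (nothing to prove if $\theta_0=45^{\circ}$). Since $\g$ is strictly steeper than $\f^*$, they meet in a unique point $q$; write $y^\ast$ for the height (second coordinate) of $q$. By \Cref{obs:move}, \Cref{obs:all_positive} and \Cref{obs:true_false_positive}, an agent at initial position $\vec{x}$ is classified positive by $\g$ iff $\vec{x}$ lies in the half-plane $H_\g$ obtained by shifting the accepting half-plane of $\g$ by $1/\vecone{c}$ in the negative first-coordinate direction; among agents in $H_\g$, the true positives are those weakly above height $y^\ast$ together with those already qualified, and the false positives are those below height $y^\ast$ and on the unqualified side of $\f^*$. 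Writing $R$ for the (fixed, once $q$ and $\f^*$ are fixed) set of points below height $y^\ast$ and on the unqualified side of $\f^*$, and noting that every positively-classified agent is either a true or a false positive,
\[
  \mathrm{TP}(\g)-\mathrm{FP}(\g)\;=\;\#\{\text{agents in }H_\g\}\;-\;2\cdot\#\{\text{agents in }H_\g\cap R\}.
\]

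Now rotate $\g$ about the point $q$, increasing its normal angle continuously from $\theta_0$ to $45^{\circ}$; call the classifiers $\g_\theta$ and their positive regions $H_\theta$. Throughout, $q$ stays on $\f^*$, so the height $y^\ast$ and hence the region $R$ are unchanged, and each $\g_\theta$ remains improvement-encouraging. Moreover the boundary of $H_\theta$ is the translate of $\g_\theta$ by $1/\vecone{c}$ in the $-\vec{e}_1$ direction, hence is a line through the \emph{fixed} point $q':=q-(1/\vecone{c})\vec{e}_1$, which lies at height $y^\ast$ and on the unqualified side of $\f^*$. An agent $\vec{x}$ belongs to $H_\theta$ exactly when $(\vec{x}-q')\cdot\vec{n}_\theta\ge0$ for the inward normal $\vec{n}_\theta$ at angle $\theta$, i.e. for $\theta$ in a $180^{\circ}$-wide arc centered at the direction angle of $\vec{x}-q'$; since $45^{\circ}-\theta_0<180^{\circ}$, the only agents whose membership in $H_\theta$ changes during the rotation are those for which $\vec{x}-q'$ points into the ``upper'' cone (direction angle in $(\theta_0+90^{\circ},135^{\circ})$ --- these \emph{enter} $H_\theta$) or the ``lower'' cone (direction angle in $(\theta_0-90^{\circ},-45^{\circ})$ --- these \emph{leave} $H_\theta$).

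It remains to sign these events. An agent that enters $H_\theta$ has $\vec{x}-q'$ with strictly positive vertical component, so $\vec{x}$ lies strictly above height $y^\ast$; being classified positive and above $y^\ast$, it is a true positive and not in $R$, so the event raises $\mathrm{TP}-\mathrm{FP}$ by one. An agent that leaves $H_\theta$ has $\vec{x}-q'$ with strictly negative vertical component of magnitude at least its horizontal component, so $\vec{x}$ lies strictly below $y^\ast$; and --- this is where the gaming assumption on $\f^*$ is used --- since $\f^*$'s normal has strictly larger gaming than improvement coordinate, its inner product with any such ``down-or-down-and-slightly-right'' direction is strictly negative, so $\vec{a}^*\vec{x}<\vec{a}^*q'\le b^*$ and $\vec{x}$ is strictly unqualified. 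Hence $\vec{x}\in R$, so before leaving it was a false positive, and its departure again raises $\mathrm{TP}-\mathrm{FP}$ by one. Therefore $\mathrm{TP}(\g_\theta)-\mathrm{FP}(\g_\theta)$ is non-decreasing in $\theta$, and $\g_{45^{\circ}}$ --- which satisfies $\vecone{a}/\vecone{c}=\vectwo{a}/\vectwo{c}$ --- is at least as good as $\g$. Since $\mathrm{TP}-\mathrm{FP}$ takes finitely many values over $n$ agents, the maximum over improvement-encouraging classifiers is attained; applying the flattening to a maximizer gives a maximizer with the stated slope, proving the lemma.

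I expect the sign computation of the previous paragraph to be the main obstacle: one must choose the pivot to be exactly $q=\g\cap\f^*$ (so that the crossing height separating true- from false-positive regions, and the region $R$, stay fixed along the rotation), and then verify that the \emph{entire} lower cone emanating from $q'$ lies on the unqualified side of $\f^*$ --- the step that consumes the ``$\f^*$ encourages gaming'' hypothesis and where it is easy to reverse an orientation. A secondary but necessary chore is dispatching the measure-zero boundary configurations (agents lying exactly on some $\g_\theta$, on its shift, or at height $y^\ast$), which can be handled by a general-position assumption or a fixed tie-breaking convention consistent with \Cref{obs:all_positive}.
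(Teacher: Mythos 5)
Your proposal is correct and follows essentially the same route as the paper: both fix the intersection point of the given improvement-encouraging classifier with $\f^*$ and compare it to the equal-bang-per-buck classifier through that point, showing the true-positive region can only grow and the false-positive region can only shrink (in the paper via direct region containment, in your write-up via a continuous rotation about that pivot with the shifted boundaries meeting at the fixed point $q'$). Your cone/sign computation just makes explicit the geometric fact the paper delegates to its figure, namely that the wedge lost below the intersection lies entirely on the unqualified side of $\f^*$, which is where the ``$\f^*$ encourages gaming'' assumption enters.
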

\begin{proof}
Consider an arbitrary linear classifier $\g':\vec{a}'\vec{x} \geq b'$ that encourages improvement. By \Cref{obs:move}, $\vec{a}'[1]/\vec{c}[1]\geq \vec{a}'[2]/\vec{c}[2]$. Consider point $\vec{z}$ where $\g'$ intersects with $\f^*$. Let $\g:\vec{a}\vec{x} \geq b$ be the classifier satisfying $\vecone{a}/\vecone{c} = \vectwo{a}/\vectwo{c}$ that passes through $\vec{z}$. We show $\g$ has objective value at least as that of $\g'$. 
Note that, by construction, $\g$ lies between $\g'$ and $\f^*$ in $\mathbb{R}^2$; in \Cref{fig:optimal_slope}, the blue classifier, the green classifier, and the red classifier illustrate $\g'$, $g$, and $\f^*$, respectively. \Cref{obs:true_false_positive} determines the areas of true positives and false positives based on the initial positions of the agents and the classifier in use. By \Cref{obs:true_false_positive}, $\g$ includes all the true positives of $\g'$ as well as the area between the dotted blue and green lines above their intersection; this part is illustrated by ``$+$" in \Cref{fig:optimal_slope}. Also, $\g'$ includes all the false positives of $\g$ as well as the area between the blue and green dotted lines below their intersection; this part is illustrated by ``$-$" in \Cref{fig:optimal_slope}. Since the true positive area of $\g$ is a superset and its false positive area is a subset compared to $\g'$, it has weakly higher objective value.
\end{proof}

\begin{figure}[ht]
\includegraphics[width=5cm]{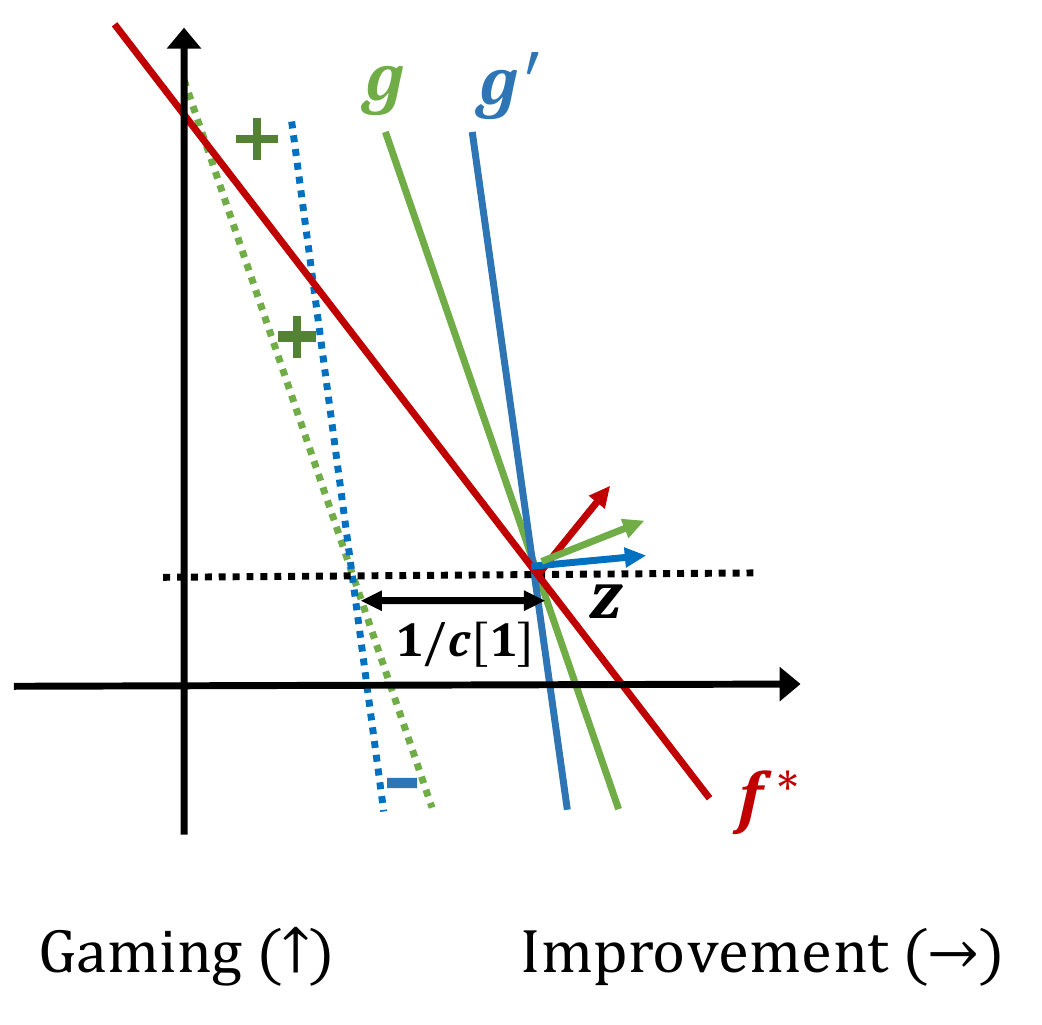}
\centering
\caption{Related to \Cref{lm:best_slope}. $\g$ and $\g'$ are classifiers that encourage improvement along dimension $1$ (horizontal axis). The area between the dotted blue and green lines above the intersection shows the area of initial positions of agents that contribute to true positives by $\g$ and not by $\g'$---it is denoted by ``$+$". The area between the two lines below the intersection shows the area of initial positions of agents that contribute to false positives by $\g'$ and not by $\g$---it is denoted by ``$-$".}
\label{fig:optimal_slope}
\end{figure}

\textbf{Overview of \Cref{alg:g-i-classifiers}.}  The algorithm finds the best of gaming-encouraging and improvement-encouraging linear classifiers, and outputs the better of the two in terms of the objective function. The best gaming-encouraging classifier is given in \Cref{cor:all-game}. The best improvement-encouraging classifier is found as follows: Using \Cref{lm:best_slope}, the optimal slope of the classifier is known. Therefore, we only need to determine a crossing point to determine the classifier. For all agents $i$, consider linear classifier $g:\vec{c}^*\vec{x} \geq \vec{c}\init{x}_i + 1$. This classifier satisfies the optimal slope from \Cref{lm:best_slope} and is at distance $1/\vecone{c}$ horizontally from $\init{x}_i$. Therefore, $i$ is the farthest agent to reach the classifier and be classified positive. Among these classifier, find one that maximizes the objective value---this is the optimum improvement classifier.

\begin{algorithm}[!ht]
    \SetNoFillComment
    \SetAlgoLined
    \SetKwInOut{Input}{Input}
    \SetKwInOut{Output}{Output}
    \SetKw{Return}{return}
     \DontPrintSemicolon
    \Input{Initial positions of agents:  $\mathcal{X} \subseteq \mathbb{R}^2$ \tcp{dim1 is improvement. dim2 is gaming.}}
    \Input{Linear model: $\f^*:\vec{a}^* \vec{x} \geq b^*$ \tcp{$\f^*$ encourages movement in dim2.}}
            \Output{$\f$ \tcp{best linear classifier}}
    $\text{objective} = 0$ \tcp{maximum objective value observed}
    $f:\vec{a}^*\vec{x} \geq b^* +  \vec{a}^*[2]/\vectwo{c}$\ \tcp{best gaming-encouraging linear classifier}
    $\text{objective} = \text{count\_true\_positives}(f) - \text{count\_false\_positives}(f)$\;

    \For{$i=1, 2, \cdots, |\mathcal{X}|$}{ 
        $g:\vec{c}\vec{x} \geq \vec{c}\init{x}_i + 1$ \tcp{ classifier with optimal slope corresponding to agent $i$}
        $\text{diff} = \text{count\_true\_positives}(g) - \text{count\_false\_positives}(g)$\;
        \If{$\text{diff} > \text{objective}$}{
            $\text{objective} = \text{diff}$\;
            $f = g$\;
        }
    }
    \Return $f$\;
    \caption{Find a linear classifier maximizing \#true-positives minus \#false-positives}
    \label{alg:g-i-classifiers}
\end{algorithm}

\begin{theorem}
\Cref{alg:g-i-classifiers} finds the linear classifier maximizing the number of true positives minus false positives in $\mathbb{R}^2$, when dimension 1 is improvement and dimension 2 is gaming. When both dimensions are improvement or gaming, the optimal classifiers are given in \Cref{lm:all-improve,cor:all-game}.
\end{theorem}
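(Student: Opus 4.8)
The plan is to reduce to a single genuinely interesting case and then verify that \Cref{alg:g-i-classifiers} evaluates an optimal classifier of each ``type''. The two matching-dimension cases are immediate: if both coordinates are improvement directions then $\f^*$ encourages improvement and is optimal among all classifiers by \Cref{lm:all-improve}; if both are gaming directions then there are no improvable agents and $\g:\vec{a}^*\vec{x}\geq b^*+\vec{a}^*[j]/\vecj{c}$ (with $j$ the movement dimension of $\f^*$) is optimal by \Cref{cor:all-game}; and if dimension $1$ is improvement, dimension $2$ is gaming, but $\f^*$ itself already encourages improvement, then \Cref{lm:all-improve} again applies. So the work is in the case that dimension $1$ is improvement, dimension $2$ is gaming, and $\f^*$ encourages movement along dimension $2$. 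By \Cref{def:encourage}, every linear classifier either encourages improvement or encourages gaming, so it suffices to show that the algorithm evaluates at least one optimal classifier of each type and returns the best objective value seen --- which matches its structure, since line~2 proposes a gaming-encouraging candidate and the loop sweeps a family of improvement-encouraging candidates.

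\emph{Gaming-encouraging classifiers.} Here I would show that $\g:\vec{a}^*\vec{x}\geq b^*+\vec{a}^*[2]/\vectwo{c}$, the classifier of line~2, is optimal among all gaming-encouraging classifiers. Any such classifier has movement dimension $2$, so by \Cref{obs:move} no agent ever moves along the improvement direction; hence for every agent classified positive, $\true{x}$ coincides with $\init{x}$, so its true positives form a subset of the initially qualified agents and it has $\geq 0$ false positives. On the other hand, by \Cref{lm:shifted_opt} (whose $j$ equals $2$ here, since $\f^*$ encourages movement in dimension $2$), this particular $\g$ classifies every initially qualified agent positive and every other agent negative, so it attains true positives equal to the number of initially qualified agents and zero false positives; it is therefore the maximizer among gaming-encouraging classifiers.

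\emph{Improvement-encouraging classifiers.} By \Cref{lm:best_slope} it is enough to optimize over classifiers whose normal is parallel to $\vec{c}$; normalizing, these are the classifiers $\g_b:\vec{c}\vec{x}\geq b$, and (ties in the movement dimension broken toward improvement) each $\g_b$ is improvement-encouraging. Write $\Phi(b)$ for the number of true positives minus false positives of $\g_b$. The plan is to show $\Phi$ attains its maximum at some $b$ of the form $\vec{c}\init{x}_i+1$ --- precisely the family $\g:\vec{c}\vec{x}\geq\vec{c}\init{x}_i+1$ swept by the loop (note $\vec{c}\vec{x}_{i,\text{max}}=\vec{c}\init{x}_i+1$, so by \Cref{obs:all_positive} this is the largest threshold of this slope that still classifies agent $i$ positive, the threshold at which $i$ is marginal). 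The key claim is a monotonicity statement: given any $b$ for which some agent is classified positive, if we raise $b$ to $b^\dagger:=\min\{\vec{c}\init{x}_i+1:\g_b\text{ classifies }i\text{ positive}\}\geq b$, then $\Phi(b^\dagger)\geq \Phi(b)$. Indeed, on $[b,b^\dagger]$ no agent leaves the positive set (the first departure is at $b^\dagger$, and the inequality in \Cref{obs:all_positive} being non-strict, the minimizing agent is still counted positive at $b^\dagger$), while raising $b$ moves each agent's endpoint $\vec{x}_{i,\f}$ farther along the improvement dimension; so by \Cref{obs:xif} each true positive remains a true positive and false positives can only turn into true positives. Hence the maximum of $\Phi$ over all thresholds equals its maximum over thresholds of the form $\vec{c}\init{x}_i+1$ (and if the best threshold classifies no one positive, $\Phi\equiv 0$ there, a value dominated by the gaming-encouraging candidate).

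\emph{Conclusion and main obstacle.} Combining the pieces: \Cref{alg:g-i-classifiers} evaluates $\Phi$ at the best gaming-encouraging classifier and at every $\g:\vec{c}\vec{x}\geq\vec{c}\init{x}_i+1$, the latter set containing an optimal improvement-encouraging classifier by the claim above and \Cref{lm:best_slope}; it outputs the one attaining the best value, which by \Cref{def:encourage} is globally optimal among linear classifiers. I expect the monotonicity claim of the improvement-encouraging step to be the main obstacle: it requires tracking through \Cref{obs:xif} how each agent's status (negative / false positive / true positive) evolves as the threshold slides, separately for initially qualified, improvable, and unimprovable agents, together with careful handling of the boundary/tie behaviour at $b^\dagger$ and the degenerate empty-positive-set case. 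The remaining matching-dimension cases are immediate from \Cref{lm:all-improve,cor:all-game} as noted at the start.
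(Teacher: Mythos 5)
Your proposal is correct and follows essentially the same route as the paper's proof: split into gaming-encouraging versus improvement-encouraging classifiers, use \Cref{cor:all-game}/\Cref{lm:shifted_opt} for the former, and use \Cref{lm:best_slope} plus a discretization to thresholds of the form $\vec{c}\init{x}_i+1$ for the latter. Your monotonicity claim (raising the threshold within a fixed positive set can only turn false positives into true positives) is exactly the domination argument the paper uses for classifiers lying between, below, or above the candidates $\f_1,\ldots,\f_m$, just stated a bit more explicitly.
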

\begin{proof}
The algorithms considers the best of two groups of linear classifiers---gaming-encouraging and improvement-encouraging. The best gaming-encouraging classifier is given in \Cref{cor:all-game}. Among the improvement-encouraging classifiers, \Cref{lm:best_slope} gives the optimal slope of the classifier. We claim we only need to consider classifiers with the optimal slope such that some agent $i$ is exactly at distance $1/\vecone{c}$ horizontally from the classifier -- a classifier where $i$ is the farthest agent that will be classified positive. Consider the set of such parallel classifiers $\f_1, \f_2, \ldots, \f_m$, where they are sorted based on their $y$-intercept such that $\f_1$ is the classifier corresponding to agent $i = \argmax \vec{c}\init{x}_i$, and $\f_m$ is the classifier corresponding to agent $i = \argmin \vec{c}\init{x}_i$. To prove the claim, we show any other classifier with the optimal slope is dominated by the classifiers we consider. First, note that since any potential parallel classifier strictly between $\f_j$ and $\f_{j+1}$ classifies the agents exactly as in $\f_j$, these classifiers are dominated and we do not need to consider them. Secondly, we do not need to consider any other parallel classifier with $y$-intercept less than $\f_m$ or higher than $\f_1$. Any parallel classifier with smaller intercept than $\f_m$ has the same classification as $\f_m$ (classifying all agents as positive). Also, any parallel classifier with larger intercept than $\f_1$ does not classify any agents as positive; therefore, has objective value $0$, and is dominated in terms of objective value by the best gaming-encouraging classifier.
\end{proof}

\begin{remark}
\Cref{thm:atmost_k_game} implies maximizing the objective function considered in this subsection --maximizing the number of true positives minus false positives-- is NP-hard in $\mathbb{R}^d$ when $d$ is not a constant and we are not limited to linear classifiers.
\end{remark}
\subsection{Optimal General Classifier in Two-Dimensional Space}
\label{sec:two-dimension}

In this subsection, we consider the problem of maximizing true positives subject to no false positives in a $2$-dimensional space, where the horizontal dimension is improvement, and the vertical dimension is gaming.
We provide an algorithm in the linear model that given a set of agents, returns a set of target points $\pfinal\subset \mathbb{R}^2$ that maximizes true positives subject to no false positives. 
Note that unlike \Cref{alg:maximal}, our algorithm in this subsection does not take a finite set of target points $\points$ as input.
For simplicity, by scaling we may assume wlog that $c = \vec{c}[1]= \vec{c}[2]$.

\textbf{Overview of \Cref{alg:2-dimension}.} 
First, all the points $\init{x}_{i}$ for $1\leq i \leq m$ are sorted along the gaming dimension in a descending order, such that $\init{x}_{n}$ has the smallest value in the gaming dimension. Our goal is to find designated points, $\vec{x'}_i$, for each $\init{x}_i$. Starting with $\init{x}_n$, for each point $\init{x}_i$, move $\init{x}_i$ along the improvement dimension until it crosses the line $\vec{a^*}\vec{x}=b^*$ at $\vec{x}_{i,min}$ (See \cref{fig:2-dim-alg}). Let $\vec{x'}_i$, the designated point of $\init{x}_i$, be initially $\vec{x'}_i = \vec{x}_{i,min}$. If given the current set of designated points for agents $n, n-1, \ldots, i$, another point $\init{x}_j$ for $j>i$ maximizes utility by moving to $\vec{x'}_i$ and becomes false positive, push $\vec{x'}_i$ upward along the gaming dimension, until $\init{x}_j$ no longer picks $\vec{x'}_i$. When pushing $\vec{x'}_i$ along the gaming dimension, let $\vec{x}_{i,max}$ denote the furthest point that $\init{x}_i$ can afford to reach to it. If the final point $\vec{x'}_i$ is such that $\init{x}_i$ cannot afford to move to it, i.e. $\vec{x'}_i[2]>\vec{x}_{i,max}[2]$, discard $\vec{x'}_i$. Otherwise, $\vec{x'}_i$ is added to $\pfinal$.

Note that we assume that if a point $\init{x}_j$ can improve to $\vec{x'}_j$ and game to $\vec{x'}_i$ with the same cost, it would pick the improvement option.

\begin{figure}[ht!]
    \centering
    \includegraphics[width=9cm]{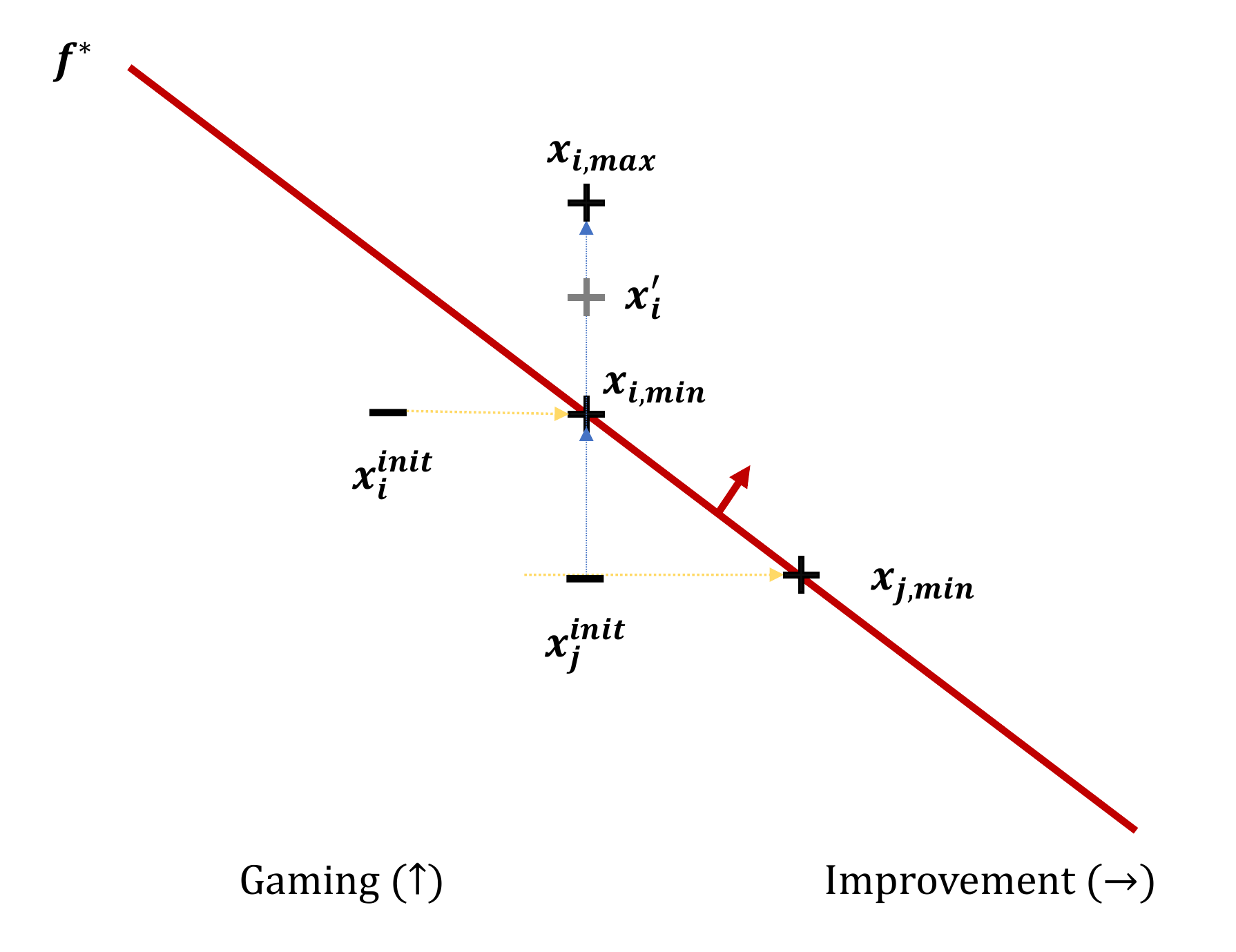}
    \caption{In \Cref{alg:2-dimension}, $\vec{x'}_i$ is pushed along the gaming dimension so $\init{x}_j$ no longer moves to it.}
    \label{fig:2-dim-alg}
\end{figure}

\begin{algorithm}
    \SetNoFillComment
    \SetAlgoLined
    \SetKwInOut{Input}{Input}
    \SetKwInOut{Output}{Output}
    \SetKw{Continue}{continue}
    \SetKw{Return}{return}
    \Input{$\agents$, $f^*: \vec{a}^*\vec{x}\geq b^*$}
    \Output{$\mathcal{P}^{\text{final}}$}
    Sort $\vec{x}_i\in \agents$ in a descending order of $\vec{x}_i[2]$\;
    \For{$i=n,\cdots,1$}{
        \tcc{Let $\vec{x}_{i,min}$ be the projection of $\vec{x}_i$ on $\vec{a}^*\vec{x}=b^*$ along the improvement dimension}
        $\vec{x}_{i,min} = \Big(\frac{b^*-\vec{a}^*[2]\vec{x}_i[2]}{\vec{a^*}[1]},\vec{x}_i[2]\Big)$\;
        \If{$\vec{x}_{i,min}[1]-\vec{x}_i[1]>1/c$}{
        \tcc{$\vec{x}_i$ cannot become true positive.}
            \Continue\;
        }
        $\vec{x'}_i \leftarrow \vec{x}_{i,min}$\;
        \For{$j=n,\cdots,i+1$}{
            \If{$cost(\vec{x}_j,\vec{x}'_j)>cost(\vec{x}_j,\vec{x}'_i)$}{
                $\vec{x}'_i \leftarrow (\vec{x}'_i[1], \vec{x}'_i[2]+cost(\vec{x}_j,\vec{x}'_j)-cost(\vec{x}_j,\vec{x}'_i))$\;
            }
        }
        \If{$\vec{x'}_i[2] > \vec{x}_{i,max}[2]$}{
            \tcc{$\vec{x}_i$ cannot become true positive without another point becoming false positive.}
           $\vec{x'}_i = (\vec{x'}_i[1], \infty)$\;
        }
        $\mathcal{P}^{\text{final}}\leftarrow \mathcal{P}^{\text{final}}\cup \vec{x'}_{i}$\;
        \Return $\mathcal{P}^{\text{final}}$\;
    }

\caption{Maximizing the number of true positives in $2$-dimensions.}   
\label{alg:2-dimension}
\end{algorithm}
In order to show that \Cref{alg:2-dimension} maximizes true positives subject to no false positives, we need the following observation and lemma.

\begin{observation}
Line $\vec{a}^*\vec{x}=b^*$ has a negative slope, i.e., each feature is defined so that larger is better. Therefore, after the points in $\agents$ are sorted, if an agent $\init{x}_j$ where $j<i$ reaches to any point $\vec{x'}_i\in [\vec{x}_{i,min}, \vec{x}_{i,max}]$, then $\init{x}_j$ becomes true positive. On the other hand, for $j>i$, if $\init{x}_j$ moves to any point $\vec{x'}_i\in[\vec{x}_{i,min}, \vec{x}_{i,max}]$, then $\vec{x}_j$ becomes false positive.
\label{observation:2-dimension-negative-slope}
\end{observation}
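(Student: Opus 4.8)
The plan is to reduce the observation to one short computation of an agent's \emph{true} feature vector after it relocates onto the vertical segment $[\vec{x}_{i,min},\vec{x}_{i,max}]$, and then read off both conclusions from the fact that the algorithm has sorted the agents by their gaming coordinate.

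First I would fix the geometry. The phrase ``each feature is defined so that larger is better'' is exactly the modeling assumption that $\vec{a}^*$ has positive coordinates, so the line $\vec{a}^*\vec{x}=b^*$ has negative slope and the qualified halfspace $\vec{a}^*\vec{x}\ge b^*$ sits above-and-to-the-right of it; this part needs no real argument. What does the work is the placement of the segment: by the definition of $\vec{x}_{i,min}$ in \Cref{alg:2-dimension} we have $\vec{a}^*[1]\,\vec{x}_{i,min}[1]+\vec{a}^*[2]\,\vec{x}_i[2]=b^*$, i.e.\ $\vec{x}_{i,min}$ lies \emph{on} $f^*$, and every point of $[\vec{x}_{i,min},\vec{x}_{i,max}]$ has first coordinate equal to $\vec{x}_{i,min}[1]$, since that segment is obtained from $\vec{x}_{i,min}$ by moving only along the gaming axis (dimension $2$).

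Next I would compute the true position. Dimension $1$ is improvement and dimension $2$ is gaming, so if an agent starting at $\init{x}_j$ has perceived position some point $\vec{x'}_i\in[\vec{x}_{i,min},\vec{x}_{i,max}]$, then its true position adopts the new improvement coordinate and keeps the old gaming coordinate: $\true{x}_j=(\vec{x}_{i,min}[1],\,\vec{x}_j[2])$ (the first coordinate being the same for every point of the segment). Substituting and using $\vec{a}^*[1]\vec{x}_{i,min}[1]=b^*-\vec{a}^*[2]\vec{x}_i[2]$,
\[
\vec{a}^*\true{x}_j \;=\; \vec{a}^*[1]\,\vec{x}_{i,min}[1]+\vec{a}^*[2]\,\vec{x}_j[2] \;=\; b^*+\vec{a}^*[2]\bigl(\vec{x}_j[2]-\vec{x}_i[2]\bigr).
\]
The sort finishes the job: it orders the agents in descending $\vec{x}_i[2]$, so $j<i\Rightarrow \vec{x}_j[2]\ge\vec{x}_i[2]$, giving $\vec{a}^*\true{x}_j\ge b^*$, so agent $j$ is truly qualified; and since it has reached the target point $\vec{x'}_i$ it is classified positive by the mechanism, hence a true positive. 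Dually, $j>i\Rightarrow \vec{x}_j[2]\le\vec{x}_i[2]$, so $\vec{a}^*\true{x}_j\le b^*$, agent $j$ is not truly qualified, yet is still classified positive, hence a false positive.

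I expect the only wrinkle is the boundary case $\vec{x}_j[2]=\vec{x}_i[2]$, where $\true{x}_j$ lands exactly on $f^*$; I would handle it by breaking ties in the sort consistently (say, by index) or, equivalently, by assuming the agents' gaming coordinates are distinct, so that $j>i$ yields the strict inequality $\vec{x}_j[2]<\vec{x}_i[2]$ and $\true{x}_j$ is strictly below $f^*$. With that convention the observation is immediate from the displayed identity; there is no substantive obstacle beyond this bookkeeping.
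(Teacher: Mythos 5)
Your proof is correct and is exactly the argument the paper leaves implicit (the observation is stated without proof): computing the true position $(\vec{x}_{i,min}[1],\vec{x}_j[2])$, using $\vec{a}^*[1]\vec{x}_{i,min}[1]=b^*-\vec{a}^*[2]\vec{x}_i[2]$, and reading off the sign of $\vec{a}^*[2](\vec{x}_j[2]-\vec{x}_i[2])$ from the descending sort is the intended one-line verification. The tie case $\vec{x}_j[2]=\vec{x}_i[2]$ you flag is a genuine boundary the paper glosses over (such an agent $j>i$ lands on $f^*$ and is a true positive, which is harmless for the algorithm's guarantee), but note that breaking ties by index does not by itself make the inequality strict --- only the assumption of distinct gaming coordinates does, so the two fixes you call ``equivalent'' are not.
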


\begin{lemma}
Consider a point $\vec{p}$ such that $\vec{p}[1]\geq \vec{x}_{i,min}[1]$, and another point $\vec{q}\in [\vec{x}_{i,min},\vec{x}_{i,max}]$. Suppose $cost(\init{x}_i, \vec{p}) = cost(\init{x}_i, \vec{q})$. Then, for any $j>i$, it is the case that $cost(\init{x}_j, \vec{p})\leq cost(\init{x}_j, \vec{q})$.
\label{lemma:2-dim-triangle-inequalities}
\end{lemma}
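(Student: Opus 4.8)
The plan is to prove this by a ``triangle inequality'' argument tailored to the asymmetric cost $cost(\vec{y},\vec{z}) = c\,[(\vec{z}[1]-\vec{y}[1])^+ + (\vec{z}[2]-\vec{y}[2])^+]$; by scaling take $c=1$. Two elementary facts do all the work: (i) this cost obeys the triangle inequality $cost(\vec{y},\vec{z}) \le cost(\vec{y},\vec{w}) + cost(\vec{w},\vec{z})$, which is immediate from coordinatewise subadditivity of $(\cdot)^+$; and (ii) for fixed $s$, the map $t\mapsto(t-s)^+$ is nondecreasing, $1$-Lipschitz, and dominates the identity. The proof then comes down to choosing the right intermediate point $\vec{w}$.

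First I would translate the hypotheses into geometry. Since $\vec{q}$ lies on the segment $[\vec{x}_{i,min},\vec{x}_{i,max}]$, which by construction in \Cref{alg:2-dimension} is the vertical segment at first coordinate $\vec{x}_{i,min}[1]$ with second coordinate ranging upward from $\init{x}_i[2]$, we get $\vec{q}[1]=\vec{x}_{i,min}[1]$ and $\vec{q}[2]\ge \init{x}_i[2]$. The hypothesis $\vec{p}[1]\ge \vec{x}_{i,min}[1]$ then says $\vec{p}[1]\ge \vec{q}[1]$; set $a:=\vec{p}[1]-\vec{q}[1]\ge 0$. Because $\agents$ is sorted in decreasing order of the gaming coordinate and $j>i$, we have $\vec{x}_j[2]\le \vec{x}_i[2]=\init{x}_i[2]\le \vec{q}[2]$; in particular $(\vec{q}[2]-\vec{x}_j[2])^+ = \vec{q}[2]-\vec{x}_j[2]$. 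The crucial step is to convert the equal‑cost hypothesis into $\vec{p}[2]\le \vec{q}[2]-a$. In the regime relevant here, where $\init{x}_i$ is initially unqualified so that $\vec{q}[1]=\vec{x}_{i,min}[1]\ge \init{x}_i[1]$, the equality $cost(\init{x}_i,\vec{p})=cost(\init{x}_i,\vec{q})$ reads $(\vec{p}[1]-\init{x}_i[1]) + (\vec{p}[2]-\init{x}_i[2])^+ = (\vec{q}[1]-\init{x}_i[1]) + (\vec{q}[2]-\init{x}_i[2])$, hence $(\vec{p}[2]-\init{x}_i[2])^+ = (\vec{q}[2]-\init{x}_i[2]) - a$. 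Using $(\cdot)^+\ge\mathrm{id}$ this yields $\vec{p}[2]\le \vec{q}[2]-a$, and (since the left‑hand side is nonnegative) also $\vec{q}[2]-\init{x}_i[2]\ge a$, so $\vec{q}[2]-\vec{x}_j[2]\ge a$.

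To finish, I would apply the triangle inequality through $\vec{z}:=(\vec{q}[1],\vec{p}[2])$: $cost(\init{x}_j,\vec{p})\le cost(\init{x}_j,\vec{z}) + cost(\vec{z},\vec{p})$. Here $cost(\vec{z},\vec{p}) = (\vec{p}[1]-\vec{q}[1])^+ = a$ and $cost(\init{x}_j,\vec{z}) = (\vec{q}[1]-\vec{x}_j[1])^+ + (\vec{p}[2]-\vec{x}_j[2])^+$. By monotonicity of $t\mapsto(t-\vec{x}_j[2])^+$ and $\vec{p}[2]\le \vec{q}[2]-a$, together with $\vec{q}[2]-a-\vec{x}_j[2]\ge 0$, we get $(\vec{p}[2]-\vec{x}_j[2])^+\le (\vec{q}[2]-\vec{x}_j[2]) - a$. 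Substituting, $cost(\init{x}_j,\vec{p})\le (\vec{q}[1]-\vec{x}_j[1])^+ + (\vec{q}[2]-\vec{x}_j[2]) - a + a = (\vec{q}[1]-\vec{x}_j[1])^+ + (\vec{q}[2]-\vec{x}_j[2])^+ = cost(\init{x}_j,\vec{q})$, which is the claim.

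The main obstacle I anticipate is the bookkeeping around $(\cdot)^+$ when coordinates straddle the reference values — in particular, carrying the conversion $(\vec{p}[2]-\init{x}_i[2])^+ = (\vec{q}[2]-\init{x}_i[2])-a$ through correctly even when $\vec{p}[2]<\init{x}_i[2]$, and recognizing that the only case needing separate treatment is the degenerate one where $\init{x}_i$ is already qualified (equivalently $\vec{x}_{i,min}[1]<\init{x}_i[1]$), which falls outside the setting in which the lemma is invoked in the correctness proof of \Cref{alg:2-dimension}. Apart from that, everything is routine manipulation of the asymmetric cost.
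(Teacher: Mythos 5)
Your proof is correct and is essentially the paper's argument: both come down to triangle-inequality and monotonicity bookkeeping for the one-sided cost, with you routing through the intermediate point $(\vec{q}[1],\vec{p}[2])$ and converting the equal-cost hypothesis into $\vec{p}[2]\le\vec{q}[2]-(\vec{p}[1]-\vec{q}[1])$, while the paper routes through $\vec{x}_{i,min}$ after first replacing $\vec{p}$ by $(\vec{p}[1],\init{x}_i[2])$ when $\vec{p}[2]<\init{x}_i[2]$ and showing $cost(\vec{x}_{i,min},\vec{p})=cost(\vec{x}_{i,min},\vec{q})$. The regime restriction $\vec{x}_{i,min}[1]\ge\init{x}_i[1]$ that you flag explicitly is also implicitly used in the paper's decomposition $cost(\init{x}_i,\vec{p})=cost(\init{x}_i,\vec{x}_{i,min})+cost(\vec{x}_{i,min},\vec{p})$, so your caveat is consistent with (indeed slightly more careful than) the paper's own proof.
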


\begin{proof}
Proof is deferred to \Cref{appendix:proof-lemma-2-dim-triangle-inequalities}.
\end{proof}

\begin{theorem}
Given initial feature vectors of agents, $\init{x}_1, \init{x}_2, \ldots, \init{x}_n \in \mathbb{R}^2$, \Cref{alg:2-dimension} maximizes the number of true positives subject to no false positives.
\end{theorem}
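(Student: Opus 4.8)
The plan is to prove two things: \textbf{(feasibility)} the set $\pfinal$ output by \Cref{alg:2-dimension} creates no false positives, and \textbf{(optimality)} no false-positive-free set of target points produces more true positives than $\pfinal$. For feasibility, the engine is \Cref{observation:2-dimension-negative-slope}: once agents are sorted by decreasing gaming coordinate, an agent $\vec{x}_j$ that moves to a point at horizontal coordinate $\vec{x}_{i,min}[1]$ becomes a true positive if $i\ge j$ and a false positive if $i<j$. Every designated point $\vec{x'}_i$ the algorithm builds sits at horizontal coordinate $\vec{x}_{i,min}[1]$, so it suffices to show no agent $\vec{x}_j$ ever moves to $\vec{x'}_i$ with $i<j$. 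I would prove by induction on the processing order $i=n,n-1,\dots,1$ the invariant that, once $\vec{x'}_i$ is final, every agent $\vec{x}_j$ with $j>i$ weakly prefers its own designated point to $\vec{x'}_i$: the inner ``for $j=n,\dots,i+1$'' loop enforces exactly this at the moment $i$ is processed, and since $\vec{x'}_i$ is afterward only ever pushed \emph{upward} along the gaming axis while each such $j$ has gaming coordinate at most $\vec{x}_i[2]$, later pushes (of other designated points) can only increase $cost(\vec{x}_j,\vec{x'}_i)$, so the invariant survives. Combined with the stated tie-breaking rule (ties between an improving move and a gaming move go to the improving move), the invariant forces every agent to move only to a designated point belonging to an agent of index at least its own, which is a true positive by \Cref{observation:2-dimension-negative-slope}; hence no false positives, and moreover every agent whose final $\vec{x'}_i$ is finite and within budget does become a true positive.

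For optimality, fix a false-positive-free set of target points $\mathcal{Q}^\star$ achieving the maximum number of true positives, call that set $T^\star$, and let $T$ be the true positives realized by $\pfinal$; I want $|T|\ge|T^\star|$. First, the agents skipped at the ``$\vec{x}_{i,min}[1]-\vec{x}_i[1]>1/c$'' test can be ignored for free, since for such an agent even the cheapest move reaching $\{\vec{a}^*\vec{x}\ge b^*\}$ at its own (fixed) gaming coordinate costs more than $1$, so it cannot be a true positive under \emph{any} classifier. Next I would normalize $\mathcal{Q}^\star$: by \Cref{lemma:2-dim-triangle-inequalities}, each point of $\mathcal{Q}^\star$ used by some true positive may be slid to the canonical location --- horizontal coordinate $\vec{x}_{i',min}[1]$, where $i'$ is the smallest index among the agents using it, and the smallest gaming coordinate still reachable by all of them --- without losing any true positive or creating a false positive, since the lemma makes the slide costless for the users while only raising the cost for every strictly lower agent. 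After this, $\mathcal{Q}^\star$ has the same shape as the algorithm's output, and I would finish with a bottom-up exchange argument: scanning $i=n,\dots,1$, maintain an injection from $T\cap\{i,\dots,n\}$ into $T^\star\cap\{i,\dots,n\}$; when the algorithm keeps $i$ the injection extends directly, and when the algorithm discards $i$ one argues that any normalized feasible solution making $i$ a true positive must place a point low enough that some lower agent that the algorithm keeps would be driven to game, forcing $\mathcal{Q}^\star$ to forgo a lower true positive, so the injection still goes through and $|T^\star\cap\{i,\dots,n\}|\le|T\cap\{i,\dots,n\}|$ throughout.

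The hard part will be exactly this discard case of the optimality argument: showing that once $\vec{x'}_i$ has been pushed above $\vec{x}_{i,max}$ --- so $i$ cannot afford it --- there is no feasible repair that saves $i$ without costing a true positive below. This is where \Cref{lemma:2-dim-triangle-inequalities} must be applied most delicately: one traces which lower agents' demands drove $\vec{x'}_i$ out of reach and shows that any target point affordable to $i$ and far enough right to qualify it is reachable as a gaming move by one of those agents (or one no higher), so including it turns that agent into a false positive unless it is dropped. One also has to handle carefully the cascading of discards through the inner loop (a skipped agent still needs to be prevented from gaming into points above it) and the degenerate regime $\vec{a}^*[2]=0$ or agents sharing a gaming coordinate, where the strict inequalities behind \Cref{observation:2-dimension-negative-slope} become equalities; these are absorbed by the tie-breaking convention, but the accounting in both halves of the proof has to be phrased so that it tolerates the non-strict cases.
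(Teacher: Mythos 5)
Your feasibility half is fine and matches what the algorithm's construction guarantees, but the optimality half has a genuine gap exactly at the step you yourself flag as ``the hard part,'' and the sketch you give for it does not close it. In the discard case you assert that any point affordable to agent $i$ and far enough right to qualify it would be ``reachable as a gaming move by one of those [lower] agents \dots so including it turns that agent into a false positive unless it is dropped.'' But whether that lower agent $j$ games to the new point depends on what \emph{other} options the optimal solution $\mathcal{Q}^\star$ offers $j$: if $\mathcal{Q}^\star$ gives $j$ a qualifying point strictly cheaper than the algorithm's $\vec{x'}_j$, then $j$ does not game and no false positive arises, and controlling that possibility requires chasing the same question recursively down to agent $n$. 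The paper's proof resolves precisely this by a different device: it fixes an optimal solution that agrees with the algorithm's designated points on the \emph{largest possible suffix} of the sorted order, takes the largest index $i$ of disagreement, and in the problematic case uses \Cref{lemma:2-dim-triangle-inequalities} together with the suffix agreement ($\vec{x}_j^{\opt}=\vec{x'}_j$ for all $j>i$) to conclude that the blocking agent $j$ would strictly prefer $\vec{x}_i^{\opt}$ to its own assigned point and hence be a false positive under $\opt$ --- a contradiction. Your scan-and-inject scheme has no analogue of that suffix control, so the cost comparisons you need for the lower agents are not pinned down.

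Two further problems with the counting argument as written. First, the injection is stated in the wrong direction: an injection from $T\cap\{i,\dots,n\}$ into $T^\star\cap\{i,\dots,n\}$ proves $|T\cap\{i,\dots,n\}|\le|T^\star\cap\{i,\dots,n\}|$, the opposite of the inequality you then claim. Second, even with the direction fixed, your charging scheme needs each algorithm-discarded agent that $\mathcal{Q}^\star$ saves to be charged to a \emph{distinct} lower agent that $\mathcal{Q}^\star$ sacrifices; ruling out that one sacrificed low agent could ``unlock'' several higher canonical segments is essentially the content of the theorem, and you give no argument for it. (The normalization step via \Cref{lemma:2-dim-triangle-inequalities} is also delicate --- after sliding a shared point into the canonical segment of its lowest-index user, higher-index users may re-optimize to other points --- but that could likely be repaired; the discard case and the charging bookkeeping are the substantive gaps.) The paper's exchange argument avoids all of this bookkeeping by modifying the optimal solution one index at a time from the bottom and never needing a global injection.
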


\begin{proof}
Suppose not.  Let $\vec{x}_1^{\opt}, \ldots,\vec{x}_n^{\opt}$ be an optimal solution that agrees with $\vec{x'}_1, \ldots, \vec{x'}_n$ on as large a suffix as possible, and let $i$ be the largest index such that $\vec{x}_i^{\opt}\neq \vec{x}_i'$ (so $\vec{x}_j^{\opt}= \vec{x}_j'$ for all $j>i$).  

First, note that $i\neq n$.  This is because $\vec{x'}_n = \vec{x}_{n,min}$, which is the cheapest point that agent $n$ can reach to become a true positive; moreover, any other point moving to $\vec{x}_n'$ is a true improvement.  So, replacing $\vec{x}_n^{\opt}$ with $\vec{x'}_n$ only helps.

Next, we claim that even if $i<n$, replacing $\vec{x}_i^{\opt}$ with $\vec{x'}_i$ can only improve the optimal solution.  First, if $cost(\init{x}_i, \vec{x}_i^{\opt}) \geq cost(\init{x}_i, \vec{x'}_i)$ then replacing $\vec{x}_i^{\opt}$ with $\vec{x'}_i$ only helps by the same argument as above and the fact that $\vec{x'}_i$ was chosen so that no agent $j>i$ manipulates to it; here we are using the fact that the suffixes of the two solutions agree.  On the other hand, suppose that $cost(\init{x}_i, \vec{x}_i^{\opt}) <  cost(\init{x}_i, \vec{x'}_i)$ and $cost(\init{x}_i, \vec{x}_i^{\opt}) \leq 1/c$.  Since $\init{x}_i$ cannot become a false positive by moving to $\vec{x}_i^{\opt}$, this means that  $\vec{x}_i^{\opt}[1]\geq \vec{x}_{i,min}[1]$. There exists a point $\vec{q}\in [\vec{x}_{i,min}, \vec{x}_{i,max}]$ such that $cost(\init{x}_{i}, \vec{x}_i^{\opt}) = cost(\init{x}_{i}, \vec{q})$, which implies that $cost(\init{x}_{i}, \vec{q}) < cost(\init{x}_{i}, \vec{x'}_{i})$. The reason that $\vec{q}$ was not selected as $\vec{x'}_i$ is that there exists an agent $\init{x}_j$ where $\init{x}_j$ moves to $\vec{q}$ and becomes false positive. By \Cref{observation:2-dimension-negative-slope},  $j>i$. Hence, $cost(\init{x}_j, \vec{q})<cost(\init{x}_j, \vec{x'}_j)$ and $cost(\init{x}_j, \vec{q}) \leq 1/c$.
By \Cref{lemma:2-dim-triangle-inequalities}, $cost(\init{x}_j, \vec{x}_i^{\opt})\leq cost(\init{x}_j, \vec{q})$, so
$cost(\init{x}_j, \vec{x}_i^{\opt}) < cost(\init{x}_j, \vec{x'}_j)$ and $cost(\init{x}_j, \vec{x}_i^{\opt}) \leq 1/c$.
Hence, $\init{x}_j$  is closer to $\vec{x}_i^{\opt}$ compared to $\vec{x'}_j=\vec{x}_j^{\opt}$ and so agent $j$ would become a false positive under $\opt$, which contradicts the definition of $\opt$. 
So, this second case cannot occur.  

Therefore, \Cref{alg:2-dimension} maximizes the number of true positives subject to having no false positives.
\end{proof}

\begin{remark}
By \Cref{cor:d-dim-max-TP-no-FP}, this problem is NP-hard when $\agents \subset \mathbb{R}^d$ for general (not constant) $d$.
\end{remark}

 \bibliographystyle{plainnat}
\newpage
\footnotesize{
\bibliography{ref}
}

\newpage
\normalsize
\appendix
\section{Missing Proofs of \Cref{sec:general}}\label{app:hardness}
\begin{proof}[Proof of~\Cref{prop:running-time-greedy-alg}]
The size of $\agents$ is $n$, and within the for loop each computation takes $O(1)$ time since the edges for each  $x_i$ are already sorted. When the flag is set to $1$, at least one point in $\points$ is removed, and when the flag is $0$ at the end of the inner loop, the algorithm returns. Therefore, the outer loop is run at most $|\points|$ times while the inner loop is run $n$ times; resulting in a running time of $O(|\points|n)$.
\end{proof}

\begin{proof}[Proof of \Cref{thm:atmost_k_game}]
We show the following problem is NP-hard.

\begin{problem}
\label{pr:atmost_k_game}
Suppose we are given a set of $n$ agents where $\init{x}_1, \init{x}_2, \ldots, \init{x}_n$ denote their initial feature vectors, and a set $\points$ of potential criteria also called target points in the linear model. Find a subset $\pfinal \subseteq \points$ that maximizes the number of true positives subject to at most $k$ false positives. 
\end{problem}

We prove the NP-hardness by reducing the Max-$k$-Cover problem with equal-sized sets of size $3$ to this problem. In the Max-$k$-Cover problem, we are given a set $\mathcal{E}$ of elements $e_i$, and sets $S_j \subseteq \mathcal{E}$, and the goal is to select at most $k$ sets out of $S_j$ that maximize the number of elements they cover. 

First, we show how to construct an instance of \Cref{pr:atmost_k_game} from an instance of the Max-$k$-Cover problem. To do so, we determine the number of dimensions, initial positions of the agents, the target points, and the movement costs. Let $n$ be the number of elements of the Max-$k$-Cover instance, we construct an $n+1$-dimensional space where the first $n$ dimensions are improvement and the last dimension is gaming. Consider elements $e_1, e_2, \ldots, e_n$ in the Max-$k$-Cover instance. For every element, we consider an agent; and for every set, we consider an agent and a target point. For $e_i$, the corresponding agent is at initial point $\init{x}_i$, an $n+1$-dimensional vector whose $i^{th}$ and $n+1^{st}$ coordinates are $1$ and the other coordinates are $0$. For every set $S_j$, we consider a target point $\vec{p}_j$ and an agent with initial point $\init{x}_{n+j}$. In $\vec{p}_j$, the coordinates corresponding to the elements in $S_j$ and the $n+1^{st}$ coordinate are set to $1$ and the rest of the coordinates are $0$. In $\init{x}_{n+j}$, the coordinates corresponding to the elements in $S_j$ are set to $1$, the $n+1^{st}$ coordinate is set to $-1$, and the rest of the coordinates are $0$. Finally, let the movement cost in any dimension be  $1/2$. Note that this construction fits into the framework of a linear model and $\f^*:\sum_{j=1}^{n+1} \vec{x}[j] \geq 4$ is the linear threshold function for the truly qualified agents. All the target points $\vec{p}_j$ satisfy the threshold and all the agents are initially unqualified and do not meet the threshold. 

Next, we discuss what target point each agent selects and whether they become truly qualified (true positive) or not (false positive). Because the cost per unit of movement equals $1/2$, each agent can only afford to reach to target points with distance at most $2$. Agents $\init{x}_i$ for $i \in \{1,\ldots, n\}$ can only afford to reach a target point whose $i^{th}$ coordinate is $1$ since they are at distance $2$. They are at distance $3$ to any other target points. Since all dimensions $1,\ldots,n$ are improving dimensions these agents become truly qualified when they reach such target points. Agents $\init{x}_i$ for $i>n$ can only afford to reach $\vec{p}_i$ since they have distance $2$. They have distance more than $2$ to any other target points. Agents $\init{x}_i$ for $i>n$ can only reach to $\vec{p}_i$. To do so, these agents move in a gaming dimension and do not become truly qualified. 

Finally, we show how the solutions of these two problems coincide. Consider the problem of maximizing the true positives subject to including at most $k$ false positives. Including each $\vec{p}_j$ in the final set of target points, $\pfinal$, causes exactly one agent, $\init{x}_j$, to be a false positive. Therefore, having at most $k$ false positive is equivalent to including at most $k$ target points. Maximizing the true positives subject to at most $k$ target points is exactly equivalent to selecting at most $k$ sets that maximize the elements they cover. This completes the reduction.
\end{proof}

\begin{proof}[Proof of \Cref{thm:relaxed-no-destination-pts}]
We show the following problem is NP-hard.

\begin{problem}
\label{pr:relaxed-no-destination-pts}
Suppose we are given a set of $n$ agents where $\init{x}_1, \init{x}_2, \ldots, \init{x}_n$ denote their initial feature vectors.
Does there exist a set of target points $\pfinal \subseteq \mathbb{R}^d$ for which all the agents become truly qualified?
\end{problem}

We prove the NP-hardness by a reduction from the approximate version of the hitting set with equal-sized sets problem. As an instance of the hitting set problem we are given, $(\mathcal{F}, \mathcal{E})$ where $\mathcal{F} = \{S_1, \cdots, S_m\}$ is a collection of the subsets of $\mathcal{E}=\{e_1,e_2,\cdots,e_n\}$, and each set $S_i$ has a size of $0 < s < n$, and our goal is to find a minimum size set $S^*\subseteq \mathcal{E}$ that intersects every set in $\mathcal{F}$.  
In order to show NP-hardness, we construct an instance of \Cref{pr:relaxed-no-destination-pts} and prove: (1) If all the agents can become true positives by reaching to a set of target points $\pfinal\subseteq \mathbb{R}^d$ that the mechanism designer selects, then there exists a hitting set of size at most $2k$. (2) If there exists a hitting set of size $k$ then the mechanism designer can select a set of target points that encourages all the agents to become true positives. 
Since hitting set and set cover problems are equivalent and approximating set cover within a constant factor is NP-hard~\cite{feige1998threshold}, this implies that \Cref{pr:relaxed-no-destination-pts} is NP-hard.

First, we show how to construct an instance of \Cref{pr:relaxed-no-destination-pts} from an instance of the Hitting Set problem. To do so, we determine the number of dimensions, initial positions of the agents, the movement costs, and a linear threshold function for the truly qualified. Let $n$ be the number of elements of the Hitting Set instance, we construct an $n+1$-dimensional space where the first $n$ dimensions are improvement and the last dimension is gaming. Consider sets $S_1, S_2, \ldots, S_m$ in the Hitting Set instance. For every set $S_i$, we consider agent $i$ at initial point $\init{x}_i$. In $\init{x}_i$, the $j^{th}$ coordinates such that $e_j \in S_i$ is set to $1$. The rest of the first $n$ coordinates are set to $2k$ and the last coordinate is $0$. Also consider an extra agent $m+1$ at initial point $\init{x}_{m+1}$ where all the first $n$ coordinates are $0$ and the last coordinate is $2k(n-s)+s$. Note that for all the agents $\sum_{j=1}^{n+1} \init{x}_i[j] = 2k(n-s)+s$. Let the movement cost in all the dimensions $1 \leq j \leq n$ be  $\frac{1}{2k}$ and in dimension $n+1$ be $c$ such that $\frac{1}{2k(n-s)+s+1}<c<\frac{1}{2k(n-s)+s}$. Let $\f^*:\sum_{j=1}^{n+1} \vec{x}[j] \geq 2k(n-s)+s+2k$. Therefore, all the agents are initially unqualified and at $\ell_1$ distance of $2k$ from $\f^*$.

Now we prove the first direction, i.e., if all the agents can become true positives by reaching to a set of target points $\pfinal\subseteq \mathbb{R}^d$ that the mechanism designer selects, then there exists a hitting set of size at most $2k$. For all $1\leq i\leq m+1$, let $\vec{p}_i \in \pfinal$ denote the target point that $\init{x}_i$ moves to and becomes true positive.

It consists of the following arguments: (i) For all $1\leq i\leq m+1$, agent $i$ receives utility $0$ by reaching to $\vec{p}_i$. (ii) For all $1\leq i\leq m$, agent $m+1$ does not afford to reach to $\vec{p}_i$. 
(iii) If $\vec{p}_{m+1}[j] \leq 1$ for all $e_j \in S_i$, agent $i$ moves to $\vec{p}_{m+1}$ and becomes a false positive. Therefore if all agents improve, for each $1 \leq i \leq m$, there exists $e_j \in S_i$ such that $\vec{p}_{m+1}[j] > 1$. (iv) In order for agent $m+1$ to afford to reach to target point $\vec{p}_{m+1}$, the number of coordinates $1 \leq j \leq m$ with value at least $1$ must be at most $2k$. (v) These elements constitute a hitting set of size at most $2k$.

First, we prove argument (i). Each agent $1 \leq i \leq m+1$, is at $\ell_1$ distance of $2k$ to $\f^*$. To become qualified it needs to move $2k$ in the improvement dimensions.
Since moving for a distance of $2k$ along the improvement dimensions costs a value of $(2k)\times(\frac{1}{2k})=1$, agent $i$ makes a utility of $0$. 

Now, we move to argument (ii). Following up on the previous claim, to reach $\vec{p}_i$, agent $1 \leq i \leq m$ spends all of their movement budget in the improvement dimensions and cannot move a positive amount in the gaming dimension $n+1$. Therefore, $\vec{p}_i[n+1] = 0$ and $\sum_{j=1}^{n} \vec{p}_i[j] = 2k(n-s)+s+2k$. In order for agent $m+1$ to reach such a target point, it needs to move a total of $2k(n-s)+s+2k > 2k$ in the improvement dimensions, which costs more than $1$ and it cannot afford.

Next, we prove argument (iii). Since $\init{x}_{m+1}$ has an $\ell_1$ distance of $2k$ from $f^*$ and costs exactly a value of $1$ to reach there, it can only afford to move along the improvement dimensions. Therefore, $\vec{p}_{m+1}[n+1] \leq 2k(n-s)+s$.
Additionally, for $1 \leq j \leq n$, $\vec{p}_{m+1}[j] \leq 2k$; otherwise, agent $m+1$ cannot afford to reach to $\vec{p}_{m+1}$. Suppose $\vec{p}_{m+1}[j] \leq 1$ for all $e_j \in S_i$. Using this assumption, for agent $i$ to reach $\vec{p}_{m+1}$ it only needs to pay cost of movement in dimension $n+1$, moving $2k(n-s)+s$ units and paying $c$ per unit of movement. Since $(2k(n-s)+s)\times c < 1$, agent $i$ makes a strictly positive utility. Therefore agent $i$ prefers $\vec{p}_{m+1}$ over any other target point that makes it true positive which by argument (i) achieves utility $0$.

Argument (iv) is straight-forward. To achieve non-negative utility each agent can afford to move at most $2k$ units along the improvement dimensions. Therefore, for the target point  $\vec{p}_{m+1}$, the number of coordinates $1 \leq j \leq n$ with value at least $1$ must be at most $2k$.

Argument (v) is a direct implication of the two previous arguments. By argument (iii), for each $1 \leq i \leq m$ there is an element $e_j \in S_i$ such that $\vec{p}_{m+1}[j] > 1$. By argument (iv), the number of coordinates $j \leq n$ such that $\vec{p}_{m+1}[j] > 1$ is at most $2k$ since otherwise agent $m+1$ cannot afford to reach to $\vec{p}_{m+1}$. Therefore, elements $e_j$ such that $\vec{p}_{m+1}[j] > 1$ constitute a hitting set of size at most $2k$.

Now, we prove the reverse direction: if there exists a hitting set $S^*$ of size $k$, the mechanism designer can select a set of target points that encourages all the agents to become true positives. To do so, we construct a set of target points $\pfinal = \{\vec{p}_1, \ldots, \vec{p}_{m+1}\}$ that makes every agent to become true positive. For each agent $i$, $1 \leq i \leq m$, put a target point $\vec{p}_i$ whose first coordinate is $2k$ more than $\init{x}_i$. For agent $m+1$, put a target point $\vec{p}_{m+1}$ whose coordinates $j$ where $e_j \in S^*$ are set to $2$ and the remaining agree with $\init{x}_{m+1}$. Each target point $\init{x}_i$ is set such that $\sum_{j=1}^{n+1} \init{x}_i[j] = 2k(n-s)+s+2k$. In order to show that every agent is able to improve, we argue that: (i) For all $1\leq i \leq m$, agent $i$ can afford to move to $\vec{p}_i$. Additionally, if agent $i$ moves to any of the target points $\vec{p}_j$ where $1\leq j\leq m$, it becomes true positive.  (ii) For all $1\leq i\leq m$, agent $i$ cannot reach to $\vec{p}_{m+1}$. (iii) Agent $m+1$ moves to $\vec{p}_{m+1}$ and becomes true positive. 

First, we prove argument (i): Agent $i$ is at a distance of $2k$ from $\vec{p}_i$. It can afford to reach to $\vec{p}_i$ by paying a cost of $(2k)\times(\frac{1}{2k})=1$ and become true positive. In addition, if it moves to any of the other target points $\vec{p}_j$ where $1\leq j\leq m$, since it has only moved along the improvement dimensions, it would become true positive.

Next, we prove argument (ii): We know that for each $S_i$, there exists an element $e_j\in S_i$ such that $\vec{p}_{m+1}[j]=2$. As a result, the $\ell_1$ distance of $\init{x}_i$ and $\vec{p}_{m+1}$ is at least $(2k(n-s)+s+1)\times c > 1$. Therefore, for each $1\leq i\leq m$, $\init{x}_i$ cannot afford to reach to $\vec{p}_{m+1}$.

Finally, we prove argument (iii): First, we argue that agent $m+1$ cannot afford to reach to any of the target points $\vec{p}_i$ where $1\leq i\leq m$. For each target point $\vec{p}_i$ where $1\leq i\leq m$, $\vec{p}_i[n+1] = 0$ and $\sum_{j=1}^{n} \vec{p}_i[j] = 2k(n-s)+s+2k$. In order for agent $m+1$ to reach such a target point, it needs to move a total of $2k(n-s)+s+2k > 2k$ units in the improvement dimensions, which costs more than $1$ and it cannot afford. In addition, agent $m+1$ can afford to move to $\vec{p}_{m+1}$, and by reaching there it becomes true positive.

As a result of the above arguments, given a hitting set of size $k$, the mechanism designer can select a set of target points that encourages all the agents to become true positives.

Combining the above two directions, shows that the problem of selecting a set of target points for which all the agents become truly qualified is NP-hard.

\end{proof}

\section{Missing Proofs of \Cref{sec:two-dimension}}
\subsection{Proof of \Cref{lemma:2-dim-triangle-inequalities}}
\label{appendix:proof-lemma-2-dim-triangle-inequalities}
\begin{proof}
Initially, if $\vec{p}[2]< \init{x}_{i}[2]$, $\vec{p}$ is replaced with $(\vec{p}[1], \init{x}_i[2])$. By doing so, $cost(\init{x}_j, \vec{p})$ would not decrease. Hence, without loss of generality, we can assume $\vec{p}[2]\geq \init{x}_{i}[2]$.

First, we show that $cost(\init{x}_j,\vec{p})\leq cost(\init{x}_j,\vec{x}_{i,min})+cost(\vec{x}_{i,min},\vec{p})$, where the inequality holds when $\vec{x}_{i,min}[1]<\init{x}_j[1]\leq \vec{p}[1]$.
\begin{align*}
&cost(\init{x}_j,\vec{x}_{i,min})+cost(\vec{x}_{i,min},\vec{p})\\
&=max\Big\{\vec{x}_{i,min}[1]-\init{x}_{j}[1],0\Big\}+\Big(\vec{x}_{i,min}[2]-\init{x}_j[2]\Big)+\Big(\vec{p}[1]-\vec{x}_{i,min}[1]\Big)+\Big(\vec{p}[2]-\vec{x}_{i,min}[2]\Big)\\
&=max\Big\{\vec{x}_{i,min}[1]-\init{x}_{j}[1],0\Big\}+\Big(\vec{p}[1]-\vec{x}_{i,min}[1]\Big)+\Big(\vec{p}[2]-\init{x}_j[2]\Big)
\end{align*}
If $\init{x}_j[1]\leq \vec{x}_{i,min}[1]$, the last equation above gets equal to $\Big(\vec{p}[1]-\init{x}_j[1]\Big)+\Big(\vec{p}[2]-\init{x}_j[2]\Big) = cost(\init{x}_j,\vec{p})$. Otherwise, $\init{x}_j[1] > \vec{x}_{i,min}[1]$ and the last equation above gets equal to $\Big(\vec{p}[1]-\vec{x}_{i,min}[1]\Big)+\Big(\vec{p}[2]-\init{x}_j[2]\Big)> \Big(\vec{p}[1]-\init{x}_{j}[1]\Big)+\Big(\vec{p}[2]-\init{x}_j[2]\Big) = cost(\init{x}_j,\vec{p})$. In any case, $cost(\init{x}_j,\vec{p})\leq cost(\init{x}_j,\vec{x}_{i,min})+cost(\vec{x}_{i,min},\vec{p})$.

Next we argue that $cost(\vec{x}_{i,min},\vec{p})= cost(\vec{x}_{i,min},\vec{q})$. First, since $\vec{p}[1]\geq \vec{x}_{i,min}[1]$ and $\vec{p}[2]\geq \vec{x}_{i,min}[2]$, then $cost(\vec{x}_i,\vec{p}) = cost(\vec{x}_i,\vec{x}_{i,min})+ cost(\vec{x}_{i,min},\vec{p})$. Similarly, $cost(\vec{x}_i,\vec{q}) = cost(\vec{x}_i,\vec{x}_{i,min})+ cost(\vec{x}_{i,min},\vec{q})$. Since $cost(\vec{x}_i, \vec{p}) = cost(\vec{x}_i, \vec{q})$, it is the case that $cost(\vec{x}_{i,min},\vec{p})= cost(\vec{x}_{i,min},\vec{q})$.

Therefore,
\begin{align*}
& cost(\init{x}_j,\vec{p})\leq cost(\init{x}_j,\vec{x}_{i,min})+cost(\vec{x}_{i,min},\vec{p})\\
&\leq cost(\init{x}_j,\vec{x}_{i,min})+cost(\vec{x}_{i,min},\vec{q})\\
&=max\Big\{\vec{x}_{i,min}[1]-\init{x}_j[1], 0\Big\}+ max\Big\{\vec{x}_{i,min}[2]-\init{x}_j[2], 0\Big\}+\\
&max\Big\{\vec{q}[1]-\vec{x}_{i,min}[1],0\Big\}+max\Big\{\vec{q}[2]-\vec{x}_{i,min}[2],0\Big\}\\
&=max\Big\{\vec{x}_{i,min}[1]-\init{x}_j[1], 0\Big\}+ \Big(\vec{x}_{i,min}[2]-\init{x}_j[2]\Big)+\Big(\vec{q}[2]-\vec{x}_{i,min}[2]\Big)\\
&=max\Big\{\vec{x}_{i,min}[1]-\init{x}_j[1], 0\Big\}+ \Big(\vec{q}[2]-\init{x}_j[2]\Big)\\
&=max\Big\{\vec{q}[1]-\init{x}_j[1], 0\Big\}+ \Big(\vec{q}[2]-\init{x}_j[2]\Big)\\
&= cost(\init{x}_j, \vec{q})
\end{align*}
\end{proof}

\end{document}